\g@addto@macro{\UrlBreaks}{\UrlOrds}
\begin{document}
\bstctlcite{BSTcontrol}

\title{FlyTrap: Physical Distance-Pulling Attack\\ Towards Camera-based Autonomous Target Tracking Systems}

\author{\IEEEauthorblockN{Shaoyuan Xie \quad Mohamad Habib Fakih \quad Junchi Lu \quad Fayzah Alshammari \quad Ningfei Wang \\ Takami Sato \quad Halima Bouzidi \quad Mohammad Abdullah Al Faruque \quad Qi Alfred Chen}
	\IEEEauthorblockA{University of California, Irvine}}

\IEEEoverridecommandlockouts
\makeatletter\def\@IEEEpubidpullup{6.5\baselineskip}\makeatother
\IEEEpubid{\parbox{\columnwidth}{
        \vspace{-2.2cm}
        Contact email: \texttt{shaoyux@uci.edu} \\[0.5em]
        $^*$This is an \textbf{extended version} of the paper accepted at:\\
		Network and Distributed System Security (NDSS) Symposium 2026\\
		23-27 February 2026, San Diego, CA, USA\\
		ISBN 979-8-9919276-8-0\\  
		https://dx.doi.org/10.14722/ndss.2026.230904\\
		www.ndss-symposium.org
}
\hspace{\columnsep}\makebox[\columnwidth]{}}

\maketitle

\begin{abstract}
Autonomous Target Tracking (ATT) systems, especially ATT drones, are widely used in applications such as surveillance, border control, and law enforcement, while also being misused in stalking and destructive actions. Thus, the security of ATT is highly critical for real-world applications. Under the scope, we present a new type of attack: \textit{distance-pulling attacks} (DPA) and a systematic study of it, which exploits vulnerabilities in ATT systems to dangerously reduce tracking distances, leading to drone capturing, increased susceptibility to sensor attacks, or even physical collisions. To achieve these goals, we present \textit{FlyTrap}, a novel physical-world attack framework that employs an adversarial umbrella as a deployable and domain-specific attack vector. FlyTrap is specifically designed to meet key desired objectives in attacking ATT drones: physical deployability, closed-loop effectiveness, and spatial-temporal consistency. Through novel progressive distance-pulling strategy and controllable spatial-temporal consistency designs, FlyTrap manipulates ATT drones in real-world setups to achieve significant system-level impacts. Our evaluations include new datasets, metrics, and closed-loop experiments on real-world white-box and even commercial ATT drones, including DJI and HoverAir. Results demonstrate FlyTrap's ability to reduce tracking distances within the range to be captured, sensor attacked, or even directly crashed, highlighting urgent security risks and practical implications for the safe deployment of ATT systems. Video demonstrations and code can be found at \url{https://sites.google.com/view/av-ioat-sec/flytrap}.
\end{abstract}

\IEEEpeerreviewmaketitle

\section{Introduction}
\label{sec:intro}

Autonomous Target Tracking (ATT), also known as \textit{Active Track}~\cite{activetrack} or \textit{Dynamic Track}~\cite{dynamictrack}, allows autonomous systems to follow a designated target (e.g., a person) while maintaining a consistent distance~\cite{cheng2017autonomous, chuang2019autonomous, pestana2013vision, pestana2014computer}. Drones~\cite{dji2024, djimavic2023, skydio2023, autel2020} have become the leading platform for ATT due to their versatility, supporting applications such as security surveillance~\cite{forbes_skydio, illinois_drone}, border control~\cite{us-cbp_drone}, and law enforcement~\cite{new_auto} beyond entertainment. Some of these applications are already deployed in practice, e.g., U.S. police departments are using ATT drones to track individuals for law enforcement~\cite{new_auto}. Conversely, this technology poses significant security, privacy, and safety threats if misused in criminal scenarios, e.g., to facilitate stalking~\cite{canonsburg_drone_stalking}, or lethal/destructive actions by carrying explosives or weapons~\cite{forbes_ukraine, reuters_ukraine, reuters_drones_2024}.

All these real-world applications, whether benign or criminal, make the security of ATT systems critically important. In this work, we exploit new vulnerabilities in ATT systems by causing drones to dangerously decrease their tracking distance from targets, which we define as \textit{distance-pulling attack} (DPA). The DPA can lead to various severe physical consequences by causing the drone to be: (1) physically captured after being pulled into a reachable range (e.g., by a net gun~\cite{garcia2020autonomous, chen2022anti, net_capture}); (2) made much more attackable by a wider band of sensor attacks (e.g., camera spoofing~\cite{zhou2022doublestar}, acoustic attacks~\cite{son2015rocking, ji2021poltergeist}), which by nature have range limitations~\cite{zhou2022doublestar, son2015rocking}; and (3) physically crashed, after the distance between the drone and the tracking target is shortened close enough to be within physically hitting distance. In contrast to other attacks on object tracking that can be possibly applied to ATT, such as those that can cause the model to lose track of the target~\cite{muller2022physical, wiyatno2019physical}, our proposed DPA can enable a more fundamental elimination of the drone since the attacker can physically capture it, reverse-engineer it~\cite{schiller2023drone}, and/or identify the underlying pilot as law enforcement evidence collection~\cite{new_jersey_drone_incursions}. Thus, understanding the security challenges and practical implications of DPA against ATT systems, especially those that are already commercially available, is imperative.

\begin{figure}[t!]
    \centering
    \includegraphics[width=\linewidth]{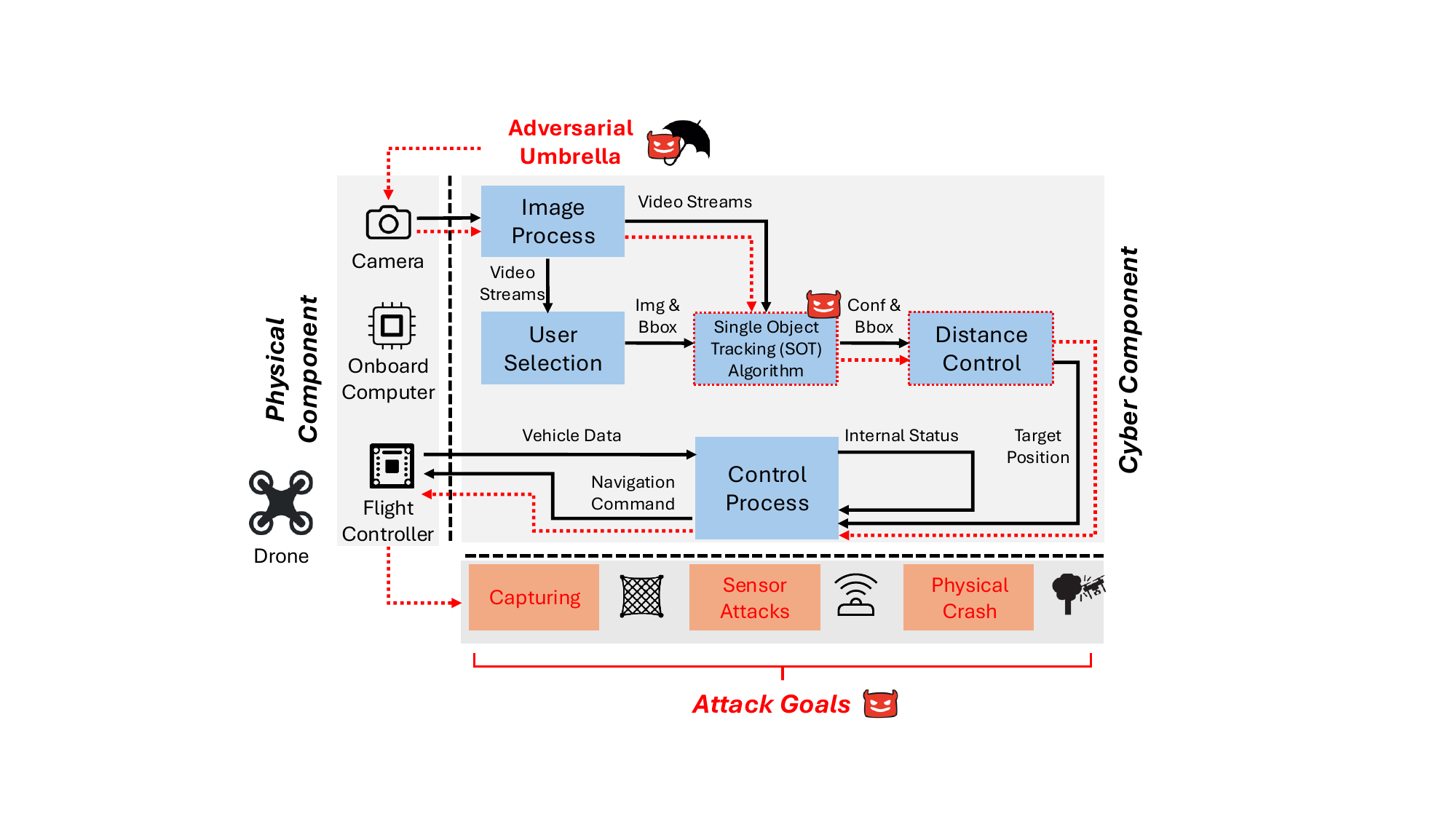}
    \caption{Overview of the Autonomous Target Tracking (ATT) system data flow and our proposed distance-pulling attack (DPA) propagation path. We treat the camera as a physical entry point and use the adversarial umbrella to attack the Single Object Tracking (SOT) model and then the distance control algorithm to cause system-level distance-pulling effects, achieving attack goals including drone capturing, range-limited sensor attacks, or direct crashing.}
    \vspace{-0.3cm}
    \label{fig:overview}
\end{figure}

Most modern ATT systems rely on cameras, given the cost-efficiency and ease of deployment on drone platforms~\cite{dji2024, djimavic2023, skydio2023, autel2020}. Specifically, Deep Neural Network (DNN) based Single Object Tracking (SOT)~\cite{bertinetto2016fully, li2018high, cui2022mixformer, li2019siamrpn++} is a core step in the latest camera-based ATT systems to achieve stable target tracking as shown in Fig.~\ref{fig:overview}. Prior works have demonstrated vulnerabilities in SOT models through pixel-level perturbations~\cite{fu2022ad, yan2020cooling, chen2020one} or physical attacks~\cite{ding2021towards, wiyatno2019physical, muller2022physical}. However, these studies primarily focus on manipulating tracking (e.g., move-in and move-out attacks~\cite{muller2022physical}) while the ATT system is composed of both tracking and distance control. Therefore, these prior works do not address the core challenges we identify for DPA against ATT systems below.

Specifically, first, practical entry points for real-world attacks on ATT systems remain under-explored. Previous tracking attacks that focused on digital perturbations~\cite{fu2022ad, yan2020cooling, chen2020one, yan2020hijacking} often lack physical feasibility. Additionally, physical attack vectors such as TV screens~\cite{wiyatno2019physical}, printed letter-size papers~\cite{ding2021towards}, or projectors~\cite{muller2022physical} face significant challenges due to their limited deployability in uncontrolled outdoor environments. Second, prior works fail to experimentally demonstrate the generalizability of their attacks across scenarios, where a single attack pattern is effective against unseen targets and/or backgrounds. Third, the newly proposed DPA against the ATT systems inherently demands closed-loop effectiveness, where current attack results influence future frames. The systems targeted by prior works~\cite{wiyatno2019physical, yan2020cooling, ding2021towards, muller2022physical} do not involve distance control and thus do not consider addressing this newly raised challenge in ATT systems. Lastly, these works~\cite{yan2020cooling, muller2022physical} ignore the spatial-temporal consistency and can be defended by existing consistency checking-based defense methods~\cite{muller2024vogues}. 

To address these critical research challenges, we present the first systematic study on the security of camera-based ATT under a newly defined physical-world DPA. Our approach centers on three key design objectives to ensure the success and stealthiness of the attack: (i) \textit{physical and real-world deployability}, where the attack vector physically misguides the ATT system’s distance control mechanism while remaining easy to deploy, robust to lighting conditions~\cite{muller2022physical}, and remains inconspicuous; (ii) \textit{closed-loop effectiveness}, where the attacks progressively shorten the tracking distance in closed-loop fashion, achieving system-level physical impacts; and (iii) \textit{spatial-temporal consistency}, which allows the DPA to be consistent spatially and temporally, evading latest anomaly detection-based defenses~\cite{muller2024vogues, man2023person, han2024visionguard, yu2024physense}.

To achieve the above objectives, we introduce \textit{FlyTrap}, a novel physical DPA against ATT systems. FlyTrap is the first to systematically tackle these challenges by utilizing \textit{adversarial umbrella} as a novel domain-specific attack vector, i.e., a physical attack vector that an ATT-tracked target can naturally and dynamically deploy for self-coverage. The umbrella, designed for ease of carriage and inconspicuous deployment, can be naturally oriented upward toward the drone. In the ATT system context, using such an attack vector can simultaneously offer advantages in physical realizability and real-world deployability as desired above. Additionally, we design a novel progressive distance-pulling strategy, enabling continuous distance-pulling under closed-loop control. We further design our attack to maintain spatial-temporal consistency, which can bypass current state-of-the-art consistency cross-checking-based defense mechanisms~\cite{muller2024vogues, man2023person, han2024visionguard, yu2024physense}. Our approach combines novel attack vectors, progressive distance-pulling, and a controllable design for spatial-temporal consistency to achieve physical, real-world deployable, closed-loop, effective, and spatial-temporal consistent attacks.

In evaluation, we construct new datasets and introduce metrics to evaluate the system-level impact of the proposed DPA, which shows high effectiveness, scenario universality, and spatial-temporal consistency. In physical experiments, we craft real-world adversarial umbrella prototypes optimized on different white-box models. Then, we implement a full-stack ATT drone from scratch. The experimental setups provide a closed-loop evaluation to understand the physical impact of our FlyTrap design under the white-box assumption. Our white-box, closed-loop physical experiments show that FlyTrap can achieve 100\% success rate in pulling the drone close enough to induce capturing, sensor attacks, and/or direct crashes. To further assess the real-world impacts, we conduct black-box DPA against three commercial drones: the DJI Mini 4 Pro, the DJI NEO, and HoverAir X1. The results show that our newly proposed DPA can indeed cause system-level DPA attack effects on them. We further show end-to-end FlyTrap-enabled DPA demonstrations against these commercial drones, leading to their capture or crash, demonstrating DPA's strong applicability in real-world attack scenarios. We also investigate the stealthiness of FlyTrap-optimized patterns by conducting a user study with 200 participants, and further discuss potential countermeasures. Video demonstrations and code can be found on our project website at \textbf{\url{https://sites.google.com/view/av-ioat-sec/flytrap}}.  To summarize, our contributions include:
\begin{itemize}
\setlength{\itemsep}{0pt}
\setlength{\parskip}{0pt}
    \item \textbf{Problem formulation}: We are the first to define distance-pulling attacks (DPA) of camera-based ATT drones. We formally define the problem with domain-specific objectives and introduce the adversarial umbrella as a novel, physically deployable attack vector.
    
    \item \textbf{Novel design}: We propose FlyTrap, including a progressive distance-pulling strategy and a controllable spatial-temporal consistency design, encompassed by an end-to-end optimization pipeline for attacking ATT drones.
    
    \item \textbf{Evaluation}: We construct a new dataset and define system-level metrics for comprehensive evaluation. The results highlight our design to achieve closed-loop and spatial-temporal consistent attacks.
    
    \item \textbf{Physical-world impact}: We implement full-stack ATT drones, craft physical adversarial umbrellas, and conduct end-to-end evaluations in real-world setups, showing direct system-level impact. We further performed extensive black-box testing on three commercial drones, showing high real-world applicability of the proposed attacks.
\end{itemize}
  
\section{Background and Problem Formulation}

\subsection{Camera-based Autonomous Target Tracking (ATT) Drone}
\label{sec:atts}

Fig.~\ref{fig:overview} illustrates a typical camera-based ATT system~\cite{zhang2019eye, 7502612, kendall2014on, fradi2018autonomous}, which follows a hierarchical control architecture consisting of an inner and an outer loop~\cite{7502612}. The inner loop, integrated into the flight controller, handles low-level flight stability and receives navigation commands. The outer loop manages high-level perception and decision-making tasks, including image processing, object tracking, distance estimation, and flight path planning.

\noindent{\textbf{Single Object Tracking (SOT) algorithm.}}
\label{sec:track-alg}
The SOT algorithm is a crucial component in the ATT pipeline, primarily responsible for generating navigation commands. Contemporary SOT algorithms are predominantly based on Deep Neural Networks (DNN)~\cite{bertinetto2016fully, li2018high, cui2022mixformer, li2019siamrpn++}. The SOT model uses a \textit{template frame} as a reference and predicts the target’s location in \textit{search frames}, as illustrated in Fig.~\ref{fig:sot}. The target tracking task can be formulated as a conditional prediction, as shown in the equation below:
\begin{equation}
\label{eq:track}
    \left\{ (cx_j, cy_j, w_j, h_j, score_j) \right\}_{j=1}^{M} = F(\mathbf{I}_\text{search} | \mathbf{I}_\text{tplt}),
\end{equation}
where $F$ denotes the SOT model, $\mathbf{I}_\text{search}$ and $\mathbf{I}_\text{tplt}$ denote the search frame and template frame, respectively. $\mathcal{P}_j=(cx_j, cy_j, w_j, h_j)$ denotes the localization results, including the x- and y-axis center coordinates, width, and height. $score_j$ denotes the prediction confidence. $M$ represents the number of prediction proposals. The proposal with the highest confidence is regarded as the final tracking output.

\noindent{\textbf{Distance Control.}}
\label{sec:pos-es}
This component estimates the relative distance to the tracked object using the SOT output and translates it into flight control actions (e.g., next waypoint). The most widely adopted strategy in real-world systems today is 2D-based distance control~\cite{7502612, kendall2014on, fradi2018autonomous}, which infers navigation commands directly from the 2D object bounding box. More specifically, in Fig.~\ref{fig:sot}, the drone adjusts its yaw, roll, and/or altitude to center the bounding box within the current frame and moves forward or backward to maintain the bounding box size, thereby preserving a stable tracking distance. This system design motivates our formulation of DPA as a fundamental system-level attack objective for ATT. Specifically, we strategically shrink the tracking bounding box to deceive the drone into perceiving that the object is moving away, thus moving closer for compensation, leading to reduced tracking distance.

\begin{figure}[t!]
    \centering
    \includegraphics[width=\linewidth]{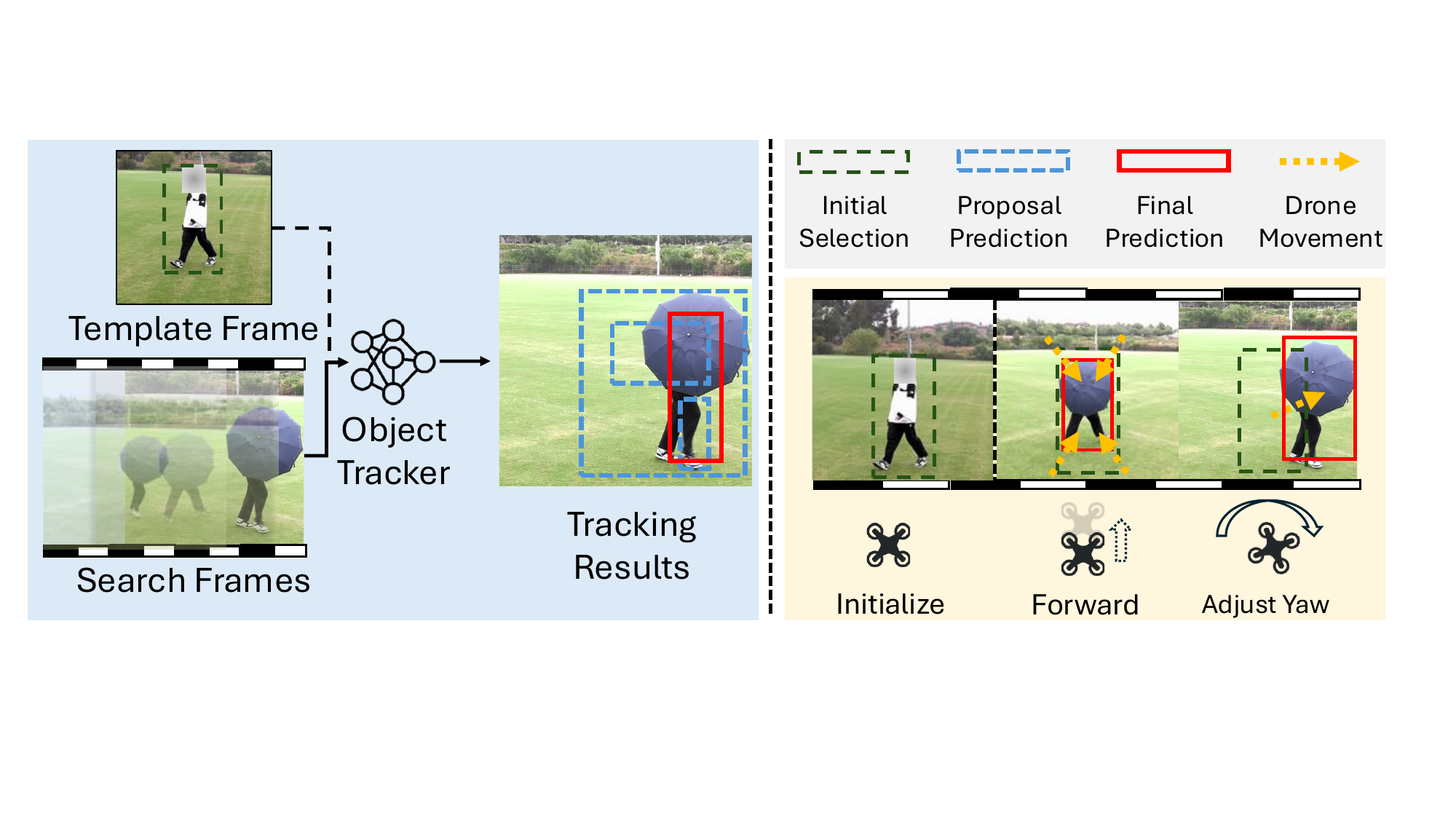}
    \caption{\textit{Left}: Single Object Tracking (SOT) depends on the initialization as the template frame and tracks the target in search frames. \textit{Right}: The drone adjusts its position to keep the box at the center and the same size as the template frame.}
    \label{fig:sot}
\end{figure}

\subsection{Problem Formulation}
\label{sec:problem}

While disrupting the SOT component to lose track of the target can temporarily disable the ATT functionality~\cite{yan2020cooling, muller2022physical, wiyatno2019physical}, it does not fundamentally prevent the system from resuming tracking, either through manual re-selection or operator intervention. As shown in the ATT system pipeline (Section~\ref{sec:atts}), the ATT system operates based on both SOT and distance control for maintaining a stable tracking distance. From this system perspective, we focus on exploiting vulnerabilities in the position control mechanism. A particularly compelling attack objective—and the focus of this work—is to intentionally reduce the tracking distance, pulling the ATT drone dangerously close to the tracked target, which we define as the distance-pulling attack (DPA). As shown in Fig.~\ref{fig:attak-goal}, DPA can be exploited to achieve \textit{A1}: drone capturing, e.g., by using a net gun~\cite{net_capture} (shown in Section~\ref{sec:commercial}); \textit{A2}: range-limited sensor attacks~\cite{son2015rocking, zhou2022doublestar}; or \textit{A3}: causing the drone to crash into the target (also shown in Section~\ref{sec:commercial}). In either case, this can result in a more permanent elimination of tracking capabilities, compared to losing tracking~\cite{yan2020cooling, muller2022physical}.

Considering both the benign and criminally motivated applications of ATT (Section~\ref{sec:intro}), the incentives for this overall attack goal can also be either benign or malicious. For example, when used against benign applications (e.g., security surveillance~\cite{forbes_skydio}, border control~\cite{us-cbp_drone}, and law enforcement~\cite{new_auto}), the attack incentives are malicious and can directly threaten public security by capturing the drone and exploiting vulnerabilities for future counter-measures. However, when used for criminally-motivated scenarios (e.g., stalking~\cite{canonsburg_drone_stalking} or lethal actions~\cite{forbes_ukraine, reuters_ukraine, reuters_drones_2024}), the attack incentives may be benign, empowering individuals to defend themselves, e.g., by capturing unauthorized drones, identifying the pilot, and extracting flight logs to uncover malicious intent~\cite{logan_airport_drone_incident, drone_forensics_paraben}. Thus, although we generally call it an ``attack'' in this paper, the security problem studied can be exploited for social good, and the ``attacker'' may be non-malicious individuals who just want to protect their privacy and safety.

\begin{figure}
    \centering
    \includegraphics[width=0.8\linewidth]{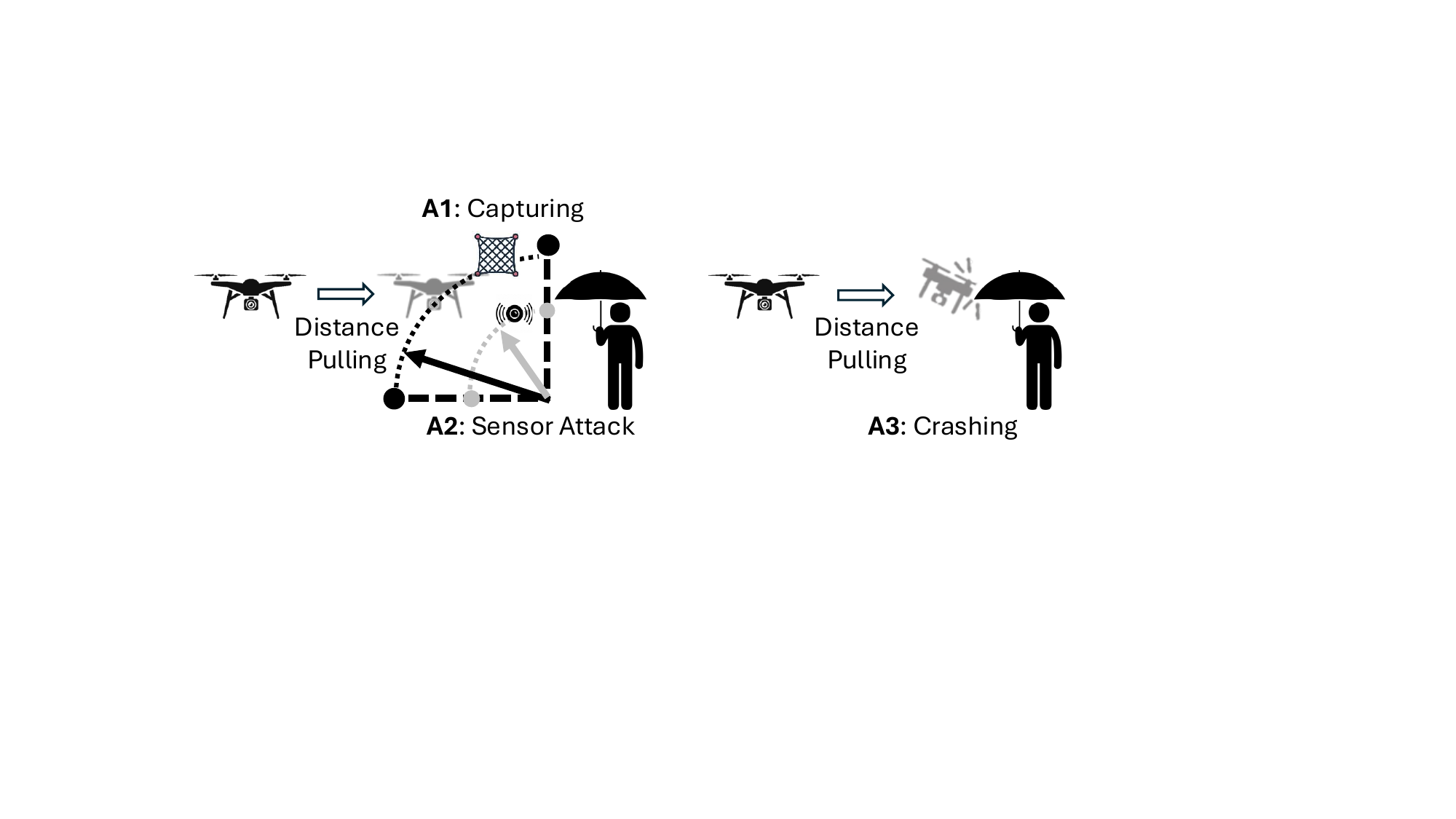}
    \caption{Illustration of distance-pulling attack (DPA) and attack goals targeted in this work. We target to dangerously shorten the tracking distance of ATT drones, which can be exploited to cause the drone to be \textit{A1}: captured; \textit{A2}: under range-limited sensor attacks; or even \textit{A3}: crashed into the attack umbrella.}
    \vspace{-0.3cm}
    \label{fig:attak-goal}
\end{figure}

\subsection{Threat Model}
\label{sec:threat-model}

In this paper, we mainly target ATT drone setups that perform tracking within 20 meters, which is the most typical ATT operation range for person tracking for consumer drones today (e.g., DJI Mini~\cite{dji_active_tracking_2023}, Potensic~\cite{potensic_atom_tracking_2023}, Autel~\cite{autel_dynamic_track_2024}, Skydio 2~\cite{skydio_update_2020}) and is also a range that can more easily allow the attacker to notice the tracking and thus launch the attack. Note that our attack is not limited to this range by design; the attack distance and angle practicality are further discussed in Appendix~\ref{app:distance-shrink-discussion}. We start with a white-box attack design setup, i.e., we assume that the attacker has full knowledge of the SOT model used in the victim ATT system. This can be accomplished by first collecting information about the targeted drones with the ATT feature~\cite{dji2024, djimavic2023, skydio2023, autel2020} and then purchasing the same model and reverse engineering it, which is feasible given recent advances in reverse-engineering such systems~\cite{schiller2023drone} and machine learning models~\cite{wu2022dnd, liu2023decompiling}. We also assume that the attacker can collect videos of different tracking scenarios, but these videos are not necessarily for the same tracking scenario during the attack (i.e., for the same tracking target instance and/or background location when the attack is launched), as we show in the scenario universality evaluation (Section~\ref{sec:universality}).

Although our method is developed under a white-box assumption, it can potentially be extended to black-box settings by leveraging the transferability of adversarial patterns~\cite{liu2016delving}. As black-box settings are more practical, we also evaluate them by performing (1) attack transferability evaluation across open-source models (Section~\ref{sec:transfer}), and (2) direct black-box testing on commercial ATT drones (Section~\ref{sec:commercial}).
\section{Related Works and Design Challenges}
\label{sec:compare}

\subsection{Related Works and Comparisons}
\label{sec:related-work}
\noindent
\textbf{Autonomous systems security.} 
Security research on autonomous systems primarily falls into two categories: sensor security and AI security. For sensor security, prior work has examined threats to commonly used sensors in autonomous systems, including cameras~\cite{ji2021poltergeist, cao2021invisible, yan2016can}, LiDAR~\cite{sato2023lidar, petit2015remote, cao2019adversarial, cao2021invisible}, gyroscopic~\cite{son2015rocking}, IMU~\cite{tu2018injected}, etc. In contrast, autonomous AI security research has primarily targeted self-driving vehicles, such as traffic sign recognition systems~\cite{zhao2019seeing, jia2022fooling, wang2023does, sato2023intriguing}, automatic lane centering~\cite{sato2021dirty, jiao2021end}, high-autonomy autonomous driving systems~\cite{cao2021invisible, sato2023lidar, wan2022too, zhang2022adversarial}, etc. Recently, Zhou et al. were among the first to investigate autonomous AI security in drone contexts, with a focus on stereo camera-based collision avoidance~\cite{zhou2022doublestar}. To the best of our knowledge, we are the first to propose and conduct a system-level security analysis of DPA in camera-based ATT.

\noindent{\textbf{Adversarial attacks on SOT.}} 
While we are the first to propose DPA in ATT systems, prior work has examined vulnerabilities in SOT models individually~\cite{fu2022ad, yan2020cooling, chen2020one, yan2020hijacking, liang2020efficient, jia2021iou, nakka2022universal, wiyatno2019physical, ding2021towards, chen2021unified, muller2022physical}. Specifically, various prior works explored using pixel perturbation to attack SOT models~\cite{fu2022ad, yan2020cooling, chen2020one, yan2020hijacking, liang2020efficient, jia2021iou, nakka2022universal}. However, these studies focus on offline video processing rather than real-time ATT systems, and therefore do not address: (1) physical-world deployment, (2) effective attack across closed-loop ATT control, and (3) spatial-temporal consistency.

Some more recent prior works have started to consider physical-world attack vectors~\cite{wiyatno2019physical, ding2021towards, chen2021unified, muller2022physical}. However, their attack vectors: TV screen~\cite{wiyatno2019physical}, printed paper~\cite{ding2021towards, chen2021unified}, and projectors~\cite{muller2022physical} face serious challenges for practical deployment for the following reasons: printed paper is barely visible from an aerial perspective; TV screen~\cite{wiyatno2019physical} has limitations during the carrying phase; and projectors used in AttrackZone are subject to lighting conditions and require a close enough flat surface for projection, as acknowledged by the authors~\cite{muller2022physical}. Moreover, ATT systems generally operate in well-lit, outdoor environments where attackers have limited control, making projection-based attacks difficult to execute reliably. Moreover, these works were not designed with DPA and closed-loop ATT systems in mind. As a result, these approaches overlook the closed-loop dynamics critical to achieving better distance-pulling effects. Last but not least, the advanced consistency-checking defense can already detect these attacks~\cite{muller2024vogues}, given their insufficient spatial-temporal consistency considerations.

\subsection{Design Challenges}
\label{sec:design-challenge}
Based on the above analysis, we identify key challenges in designing DPA against ATT systems.

\noindent{\textbf{$C_1$: Physical and real-world deployable attack vectors for ATT systems.}} 
Designing effective attacks against ATT systems requires physical, highly deployable vectors. Prior attack vectors, while effective in controlled environments, face significant limitations when deployed in real-world ATT settings~\cite{wiyatno2019physical, muller2022physical, ding2021towards} as discussed in Section~\ref{sec:related-work}. The attackers often have minimal control over environmental factors, especially when they are unwillingly tracked outdoors. This highlights the need for a more versatile, inconspicuous, and deployable physical attack vector.

\noindent{\textbf{$C_2$: Closed-loop distance-pulling effects.}} 
A successful DPA must sustain closed-loop distance-pulling effects. In the context of ATT systems, this means the attack at the current frame will influence the drone's behavior in subsequent frames, reducing the tracking distance in a feedback loop. However, existing works adopt an open-loop\footnote{The open-loop concept in this paper is a similar concept from control theory. By open-loop, we mean the attacker conducts attacks without considering the control feedback loop from the victim systems.} approach, where attacks are optimized independently from the ATT system’s response~\cite{wiyatno2019physical, muller2022physical, ding2021towards}. Such methods fail to address the dynamic and autonomous nature of drone tracking, where attacks must continuously pull the drone closer in response to reduced tracking distances. The motion model in DRP~\cite{sato2021dirty}, designed for lane-centering in ground vehicles, does not generalize to the aerial dynamics of drone tracking.

\noindent{\textbf{$C_3$: Spatial-temporal consistency.}}  
Attacking object tracking introduces additional challenges compared to object detection due to the inherent spatial-temporal consistency of tracking algorithms~\cite{muller2022physical}. Additionally, recent state-of-the-art consistency-based defenses have demonstrated promising performance in securing vision-based autonomous systems~\cite{man2023person, muller2024vogues, han2024visionguard, xu2024physcout, yu2024physense}, further increasing the difficulty of maintaining consistency in attacks. Although prior work has considered the spatial-temporal-based defense~\cite{muller2022physical}, their methods primarily target simplistic approaches, such as Kalman filters, leaving them detectable to more advanced anomaly detection mechanisms~\cite{muller2024vogues}. Addressing this challenge requires developing adversarial attacks under the constraint of maintaining spatial-temporal consistency in both the tracking model and auxiliary consistency-checking mechanisms.
\section{FlyTrap}
\label{sec:methodology}

\begin{figure}[t]
    \centering
    \includegraphics[width=\linewidth]{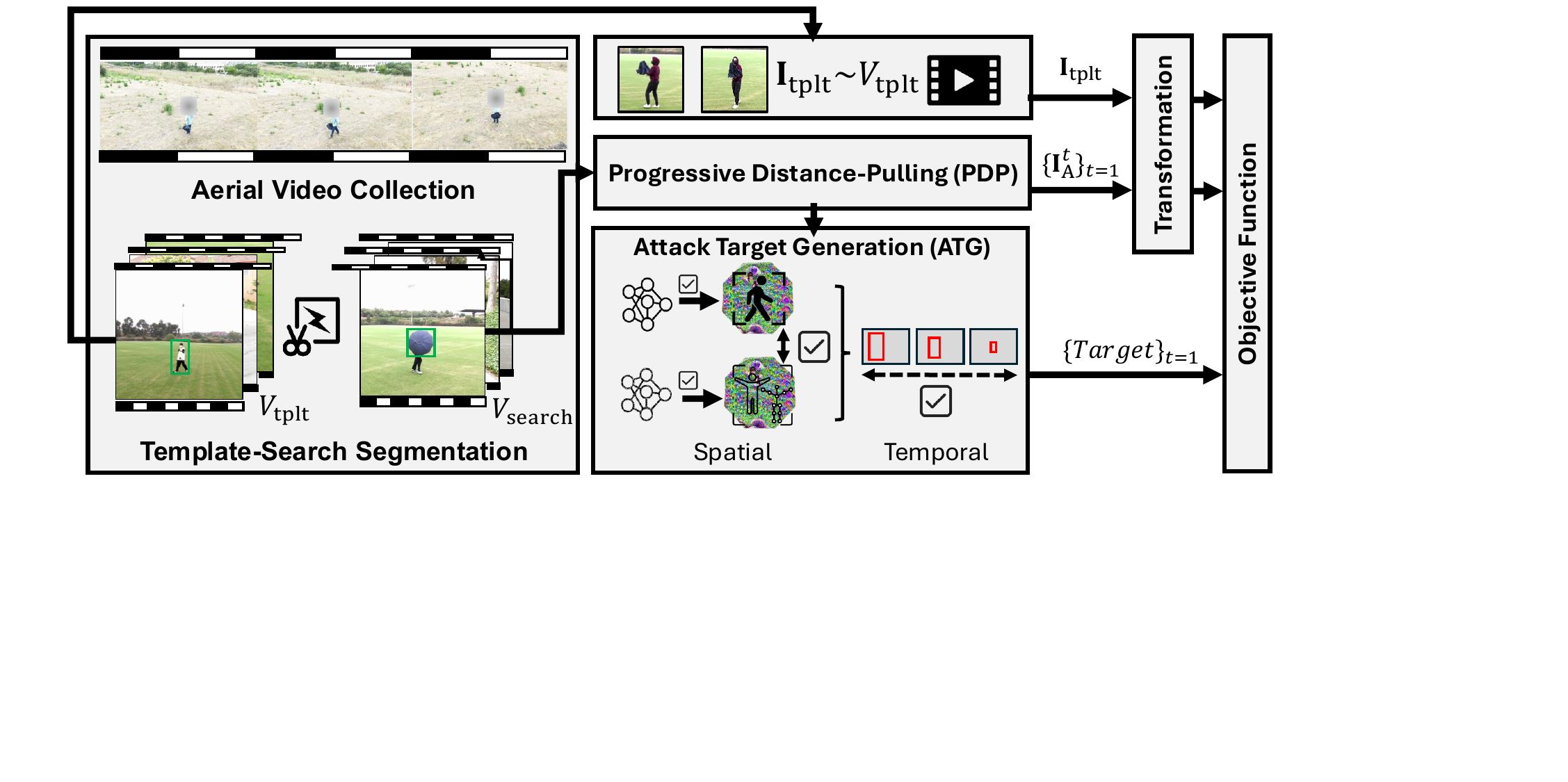}
    \caption{FlyTrap overall pipeline. We design an adversarial umbrella as a domain-specific and deployable attack vector. The progressive distance-pulling (PDP) achieves the closed-loop distance-pulling effects while the attack target generation (ATG) constrains the spatial-temporal consistency.}
    \label{fig:opt-pipeline}
    \vspace{-0.3cm}
\end{figure}

This section introduces FlyTrap, the first physical and system-level DPA targeting camera-based ATT drones. As shown in Fig.~\ref{fig:opt-pipeline}, to achieve the attack goal and address design challenges (Section~\ref{sec:design-challenge}), our FlyTrap attack introduces novel designs: attack vectors, a progressive distance-pulling strategy, and controllable spatial-temporal consistency.

\subsection{Design Overview}
\label{sec:overview}

\noindent\textbf{Adversarial umbrella: A domain-specific, physically deployable attack vector.}
We propose \textit{adversarial umbrellas} as a novel class of physical attack vectors tailored for camera-based ATT drones. An umbrella is an ideal medium for adversarial patterns because: (1) it offers a large, nearly flat, rigid surface for pattern printing; (2) it naturally fits outdoor scenarios, requiring minimal setup and offering ease of transport and deployment; and (3) it offers fine control, allowing the attacker to maximize exposure and obscure themselves. Additionally, umbrellas do not require elaborate directional alignment or power sources, directly addressing challenge $C_1$ in the ATT drone context. In deployment, the attackers only need to cover their upper bodies and point the umbrella at the drone. While standing still is sufficient, crouching and hiding the entire body can further increase success by occluding any visible parts. Note that we don't mean to claim the physical adversarial patch as the major scientific contribution, but rather a practical delivery mechanism to support our design below.

\noindent
\textbf{Progressive distance-pulling via physical modeling.}
To address the challenge of closed-loop effectiveness ($C_2$), we proposed modeling the appearance of adversarial patterns as the drone gradually approaches, simulating the effects of reducing distance under DPA. By incorporating camera geometry, physical rendering, and our proven upper-bound shrink rate setups, our design ensures that the attack remains effective as the drone approaches, ensuring consistent distance-pulling effects. 

\noindent
\textbf{Controllable spatial-temporal consistency.}
To address the spatial-temporal consistency challenge ($C_3$), we introduce an attack target generator for adaptive attacks that jointly constrain spatial and temporal features across models and frames. The attack target generator explicitly encodes the spatial-temporal constraint by manipulating features like box shape, key points, or pose estimation within the adversarial region, simulating human-like motion and appearance. This enables us to bypass consistency-based defense systems, which are receiving growing attention in securing autonomous vehicles.

\begin{figure}[t!]
    \centering
    \includegraphics[width=\linewidth]{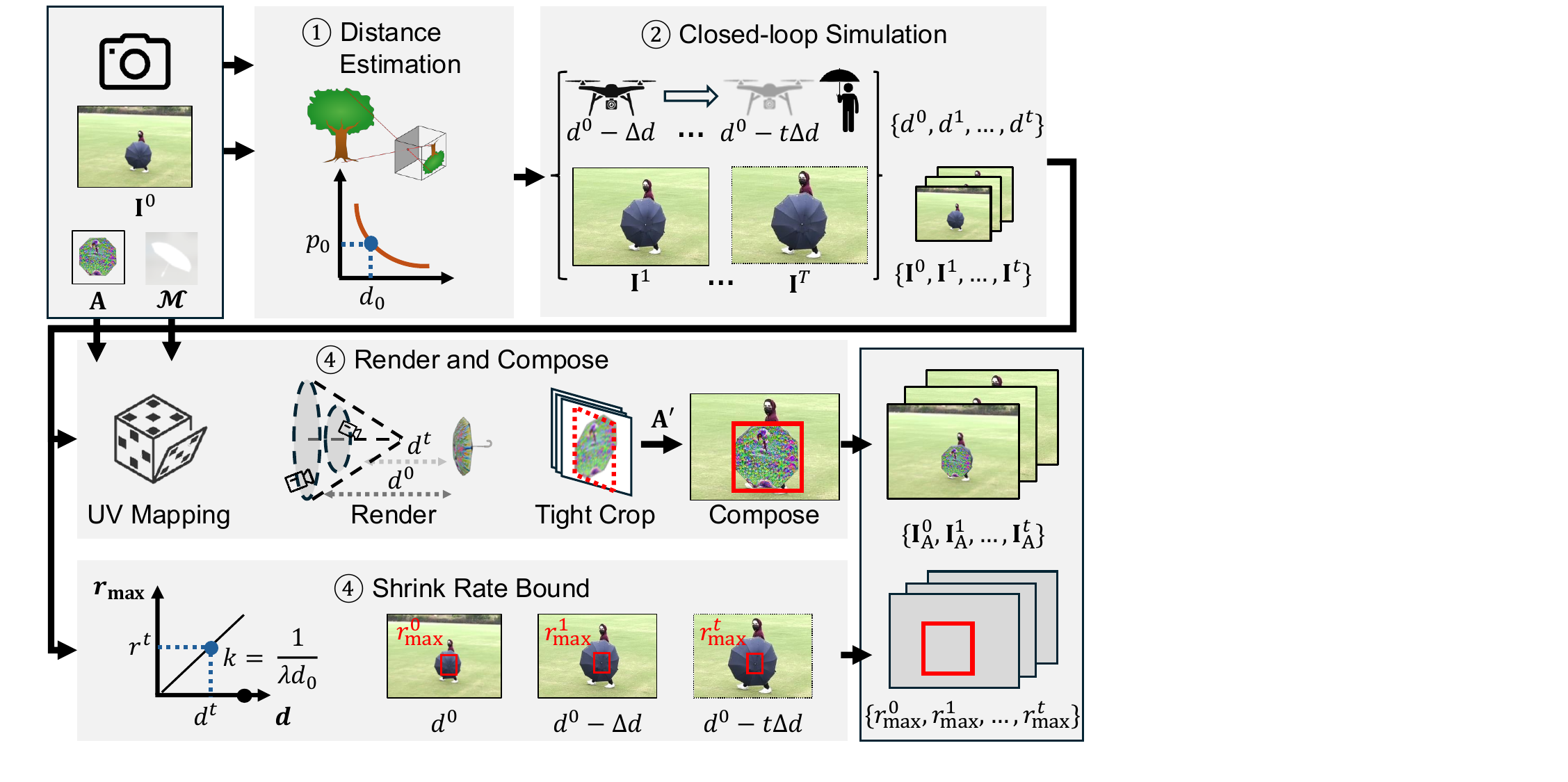}
    \caption{Design for progressive distance-pulling via physical modeling. We propose to leverage computer graphics to simulate the closed-loop dynamics of the DPA process. We further derive the upper bound to set the shrink rate for each stage.}
    \vspace{-0.3cm}
    \label{fig:physical-engine}
\end{figure}

\subsection{Progressive Distance-Pulling via Physical Modeling}
\label{sec:progressive-distance-pulling}

In this section, we introduce our solution to challenge~$C_2$. As shown in Fig.~\ref{fig:physical-engine}, via progressive distance-pulling (PDP), we simulate the effect through physical modeling in computer graphics, including \Circled{1}~distance estimation, \Circled{2}~closed-loop simulation, \Circled{3}~rendering and composition, and our proven \Circled{4} shrink rate bound. The initial inputs are the camera model, the search frame $\mathbf{I}^0\in \mathbb{R}^{H\times W\times 3}$, where $H$ and $W$ represent the height and width, the initial adversarial pattern $\mathbf{A}\in \mathbb{R}^{H_a\times W_a\times 3}$, where $H_a$ and $W_a$ represent the height and width of the adversarial pattern, and the umbrella 3D mesh $\mathcal{M}=(\mathcal{V}, \mathcal{E}, \mathcal{F})$, defining vertices, edges, and faces. The output is a set of simulated images $\{\mathbf{I}_{\mathbf{A}}^0, \mathbf{I}_{\mathbf{A}}^1, ..., \mathbf{I}_{\mathbf{A}}^t\}$ and maximum shrink rates $\{r^0_{\max}, r^1_{\max}, ..., r^t_{\max}\}$. The shrink rate is defined as the ratio between the attacked bounding box area and the umbrella’s visible area.

In step~\Circled{1}, we adopt a pinhole camera model with focal length $f$. The relationship between the pixel length $p$ and the actual length $s$ is defined by: $d = \frac{f\cdot s}{p}$. The focal length can be retrieved from the camera’s specifications, and with the pixel length in each image $\mathbf{I}^0$, we can estimate the distance between the drone and the object. This estimate serves as the initial distance $d^0$ for the subsequent closed-loop simulation.

In step~\Circled{2}, we simulate distance-pulling behavior as the drone incrementally approaches the target. Starting from the initial distance $d^0$, we iteratively reduce it using a user-defined interval: $d^t = d^0 - t\Delta d$, where $t$ denotes the time step and $\Delta d$ the distance decrement per step. Given each distance, we estimate the corresponding pixel length and synthesize a sequence of progressively zoomed-in images from $\mathbf{I}^0$, denoted as $\{\mathbf{I}^1, \ldots, \mathbf{I}^t\}$. We assume the camera is oriented directly toward the tracked object, an assumption justified by the attacker’s ability to aim the umbrella at the drone.

\begin{table*}[t!]
\centering
\caption{Overview of existing representative defense methods leveraging spatial and temporal features to secure perception models in autonomous vehicles. This table excludes sensor-level attacks, as it focuses solely on adversarial attacks targeting machine learning-based perception models. Therefore, sensor attacks (e.g., GPS Spoofing in PhyScout~\cite{xu2024physcout}) are not summarized here. OD and MOT refer to Object Detection and Multi-Object Tracking, respectively. ATG represents the attack target generator.}
\resizebox{\linewidth}{!}{%
    \begin{tabular}{c|c|c|c|c|c}
    \toprule
    Defense & Victim Model & Attack Goal & Spatial Feature & Temporal Feature & ATG \\
    \midrule
    \rowcolor{gray!20} PercepGuard~\cite{man2023person} & OD & Misclassification & Box Shape & Box Behavior, Ego Vehicle States & Inject Box Aspect Ratio \\
    PhyScout~\cite{xu2024physcout} & OD & Hiding, Appearing, Misclassification & Box Feature Point & Box Behavior, Ego Vehicle States & Inject Feature Point \\
    \rowcolor{gray!20} VOGUES~\cite{muller2024vogues} & MOT & Move-In, Move-Out, Hiding & Component Position & Component Behavior & Inject Human Pose\\
    PhySense~\cite{yu2024physense} & OD, MOT & Misclassification & 3D Shape, Texture & Object Behavior, Object Interaction & Inject Human Behavior\\
    \rowcolor{gray!20} VisionGuard~\cite{han2024visionguard} & OD & Hiding, Appearing, Misclassification & N/A &Ego Vehicle States & Multi-Stage Shrink Rate \\
    \bottomrule
    \end{tabular}%
}
\vspace{-0.3cm}
\label{tab:existing-defense}
\end{table*}

Then, we simulate the umbrella geometry with high physical fidelity. In the rendering step, we first construct a UV mapping, which projects a 2D adversarial pattern $\mathbf{A}$ onto the 3D model. The UV mapping function $\Phi: \mathbb{R}^2 \rightarrow \mathbb{R}^3$ maps 2D coordinates $\mathbf{u}_i \in \mathbb{R}^2$ of the adversarial pattern to the corresponding 3D positions $\mathbf{v}_i \in \mathbb{R}^3$ on the mesh vertices $\mathcal{V}$. This allows the seamless attachment of the adversarial pattern onto the umbrella's surface, accounting for its curvature and topology, which is essential for maintaining real-world fidelity during optimization. We render the umbrella by placing a virtual camera at the estimated positions $\{d^0, d^1, \ldots, d^t\}$ from the mesh. The camera is oriented toward the umbrella’s center, with elevation angle $\theta = 0$ and azimuth angle $\phi = 0$. We detail how camera angle randomization improves real-world robustness in Section~\ref{sec:phy-robust}. After rendering, we apply image processing steps, including grayscale conversion, binarization, and morphological dilation, to perform edge segmentation and remove background margins, which produces a tightly cropped rendered image $\mathbf{A}' \in \mathbb{R}^{H_a' \times W_a' \times 3}$, facilitating seamless composition in the next stage. $H_a'$ and $W_a'$ represent the height and width of the rendered and cropped adversarial patterns. The overall process can be expressed as:
\begin{equation}
    \label{eq:render}
    \mathbf{A}' = \text{TightCrop}(\text{Render}(\Phi(\mathbf{A}), d, \theta, \phi)).
\end{equation}
\indent
In the composing step, we compose the rendered adversarial pattern $\mathbf{A'}$ to the target location in the simulated image $\{\mathbf{I}^0, \mathbf{I}^1, ..., \mathbf{I}^t\}$ and get the final adversarial images $\{\mathbf{I}_{\mathbf{A}}^0, \mathbf{I}_{\mathbf{A}}^1, ..., \mathbf{I}_{\mathbf{A}}^t\}$. This is achieved by computing a projection matrix followed by an affine transformation.

Finally, in step~\Circled{4}, we formally derive Theorem~\ref{thm:metric_consistency}, which establishes the relationship between the shrink rate and the resulting pulling distance. Thus, given a distance $d^t$ in the closed-loop simulation, to ensure the attack can pull the drone into the next simulated distance $d^{t+1}$, the maximum shrink rate at step $t$ should be $r^t_{\max} = \frac{d^{t+1}}{d^0}$ divided by a constant $\lambda$, serving as the upper bound when setting the shrink rate for each distance. While one could trivially set all shrink rates to zero, doing so fails to control the spatial-temporal consistency, which is detailed in the next section.

To formally justify this relationship, we provide the following theorem based on the pinhole camera model, which establishes a mathematical link between the shrink rate and the resulting change in physical distance.

\newtheorem{theorem}{Theorem}
\label{thm:metric_consistency}
\begin{theorem}
    Let $d_0$ be the initial distance between the drone and the target, and let $d_a$ be the final distance. Let $r_a$ be the target shrink rate under a pinhole camera model with focal length $f$, and assume the area ratio between the umbrella and the human is a constant $\lambda = \tfrac{s_u}{s_h}$. If $r_a = \tfrac{d_a}{\lambda d_0}$, then the drone can be pulled to a distance of $d_a$, which is shown in Fig.~\ref{fig:physical-engine}.
\end{theorem}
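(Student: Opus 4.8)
The plan is to treat the 2D distance controller of Section~\ref{sec:pos-es} as an idealized regulator that drives the drone forward or backward until the tracked bounding box regains the apparent size it had in the template frame, and then to combine this equilibrium condition with the pinhole relation $d = f s / p$. I would phrase everything in terms of a linear box extent (height or width) rather than an area, since that is what the cited 2D controllers regulate and it makes apparent size depend linearly on $1/d$, matching the stated form $r_a = d_a/(\lambda d_0)$; because the umbrella's aspect ratio is fixed under the UV mapping, an area-based account is equivalent up to squaring the ratios and does not change the argument.

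First I would set up the template: at the initial distance $d_0$ the drone is locked onto the human of physical extent $s_h$, so by the pinhole model the template box has apparent extent $p_{\text{tplt}} = f s_h / d_0$. Next I would describe what the controller perceives during the attack: when the drone is at a generic distance $d$, the umbrella of physical extent $s_u$ subtends apparent extent $f s_u / d$, and by the definition of the shrink rate the adversarial pattern forces the SOT model to emit a box of extent $r_a \cdot f s_u / d$. Writing $s_u = \lambda s_h$ turns this into $r_a \lambda f s_h / d = r_a \lambda \, p_{\text{tplt}} \, (d_0/d)$. Imposing the equilibrium condition of the controller, namely that the emitted box matches the template, $r_a \lambda \, p_{\text{tplt}} (d_0/d) = p_{\text{tplt}}$, and solving for $d$ gives $d = r_a \lambda d_0$; naming this value $d_a$ is exactly the claim $r_a = d_a/(\lambda d_0)$.

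To actually establish that the drone \emph{can be pulled} to $d_a$, rather than merely exhibiting a fixed point, I would add a short closed-loop monotonicity argument: for $r_a = d_a/(\lambda d_0)$ with $d_a < d_0$ we have $r_a \lambda < 1$, so at the start the emitted box is strictly smaller than the template and the controller commands forward motion; since the emitted extent $r_a \lambda \, p_{\text{tplt}} \, d_0/d$ is strictly decreasing in $d$, it equals the template extent at exactly one distance, $d = d_a$, and the sign of the size error flips there, so $d_a$ is the unique attracting equilibrium of the loop. This is the behavior depicted in Fig.~\ref{fig:physical-engine}.

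The algebra here is essentially one line; I expect the real obstacle to lie in justifying the modeling premises that make that line legitimate. In particular: (i) that the \emph{same} shrink rate $r_a$ is realizable over the whole approach interval $[d_a,d_0]$ and not just at a single scale --- this is precisely what forces the staged/progressive construction and the per-stage upper bound $r^t_{\max}$, since a pattern optimized at one distance need not reproduce the target box at another; and (ii) that the frontal pinhole view ($\theta=\phi=0$) stays valid as the drone closes in, which leans on the attacker keeping the umbrella aimed at the drone and on the camera-angle randomization of Section~\ref{sec:phy-robust}. A secondary point worth making explicit is that the derivation presumes the victim controller is a genuine size-regulating loop with no deadband large enough to stall it before $d_a$ and no yaw/altitude coupling that alters the effective $s_u$; under those idealizations the relation is tight, and since it is used only as an upper bound on $r^t_{\max}$ the conclusion still holds with slack in practice.
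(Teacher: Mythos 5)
Your proof is correct and follows essentially the same route as the paper's: both equate the shrunk box's apparent extent $r_a f s_u/d$ at the new distance with the template extent $f s_h/d_0$ under the pinhole model and solve to get $d=\lambda r_a d_0$. Your added monotonicity/attracting-fixed-point remark and the discussion of when a single $r_a$ is realizable over the whole approach are reasonable elaborations of points the paper handles informally via the staged bounds $r^t_{\max}$, but they do not change the argument.
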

\begin{proof}
Under a pinhole camera model, the pixel length $p$ of an object of length $s$ at distance $d$ is $p = \tfrac{f s}{d}$. Initially, the umbrella's pixel length is $p_{u0} = \tfrac{f\,s_u}{d_0}$. During the attack, the bounding box size is shrunk by a factor $r_a$, making its pixel length $\tfrac{f\,s_h}{d_0} r_a$. The drone compensates by advancing until the bounding box size equals the original human pixel length:
\[
  \tfrac{f\,s_u}{d}\,r_a \;=\; \tfrac{f\,s_h}{d_0}.
\]
\indent
Solving for $d$ yields $d = \lambda r_a\,d_0$. Consequently, if $r_a = \tfrac{d_a}{\lambda d_0}$, we have $d = d_a$.
\end{proof}

\subsection{Controllable Spatial-Temporal Consistency}
\label{sec:design-consistency}

Table~\ref{tab:existing-defense} summarizes key spatial-temporal consistency defenses proposed to detect adversarial perception attacks. These defenses share a common principle: cross-validating the victim models' predictions against auxiliary estimations derived from independent features or models. Therefore, we design the FlyTrap to be a highly spatial-temporal controllable DPA against ATT drones by introducing the attack target generator (ATG), shown in Fig.~\ref{fig:opt-pipeline}. ATG enables the attack to explicitly encode both spatial and temporal consistency during optimization, thereby allowing FlyTrap to bypass diverse spatial-temporal defense mechanisms. ATG formulates model-specific constraints to preserve spatial consistency as part of the optimization objective, ensuring intra-model consistency. For instance, we constrain the predicted box to maintain a human-like shape and appear within semantically plausible locations~\cite{man2023person, yu2024physense}. Additionally, ATG can embed adversarial feature points to mislead feature extractors~\cite{xu2024physcout} or craft deceptive human poses to fool pose estimators~\cite{muller2024vogues}. These manipulations are feasible due to the attacker’s full control over the umbrella pattern in FlyTrap. To ensure inter-model spatial consistency, ATG jointly optimizes multiple perception models, such as SOT, object detector, and pose estimator, such that their outputs align coherently. This coordination ensures spatial consistency within and across perception models.

In addition to spatial alignment, ATG enforces temporal consistency by aligning features across simulated frames. In the simulated images in PDP: $\{\mathbf{I}_{\mathbf{A}}^0, \mathbf{I}_{\mathbf{A}}^1, \ldots, \mathbf{I}_{\mathbf{A}}^t\}$, ATG determines the shrink rates $\{r^0, r^1,\ldots, r^t\}$ for each frame, constrained by the upper bounds specified in Theorem~\ref{thm:metric_consistency}. By selecting conservative shrink rates, ATG ensures a stable drone trajectory, minimizing abrupt changes that might trigger anomaly detectors. The gradual shrink rate design allows DPA to mimic benign scenarios in which the tracked object moves away at a plausible speed. This can prevent sudden box movement~\cite{man2023person} or sudden drone movement afterwards~\cite{han2024visionguard}. Finally, ATG enforces temporal feature alignment across frames. For instance, it can inject a consistent human pose throughout the PDP simulated attack sequence $\{\mathbf{I}_{\mathbf{A}}^0, \mathbf{I}_{\mathbf{A}}^1, ..., \mathbf{I}_{\mathbf{A}}^t\}$, thus evading defenses that monitor temporal behavior~\cite{muller2024vogues, man2023person, yu2024physense}.

To demonstrate ATG's generalizability, we categorize existing spatial-temporal consistency defenses into three classes and show how ATG can bypass each class, shown in Table~\ref{tab:existing-defense}. For future defense, ATG can also potentially bypass them if they fall within the categorized classes below. (1) \textit{Box feature-based defenses} inspect properties of the bounding box predicted by the victim model (e.g., SOT in our case), such as shape, location, and feature points. Representative examples include PercepGuard~\cite{man2023person} and PhyScout~\cite{xu2024physcout}. Both approaches examine box-level features. The ATG can set the attack target by manipulating the aspect ratio and feature point within the prediction box accordingly. (2) \textit{Extra visual feature-based defenses} analyze visual features beyond the primary victim model, such as those extracted from additional detectors or pose estimators. Examples include VOGUES~\cite{muller2024vogues} and PhySense~\cite{yu2024physense}. These defenses validate the spatial-temporal consistency of auxiliary visual cues, such as human pose and motion behavior, using modules like auxiliary detectors and temporal models. ATG can constrain the inter-model consistency to attack extra models simultaneously to bypass them. (3) \textit{Ego vehicle state-based defenses} detect anomalies by monitoring the smoothness of ego vehicle dynamics, such as velocity and acceleration. VisionGuard~\cite{han2024visionguard} is a representative example. However, unlike object detectors studied in VisionGuard, SOT models inherently exhibit spatial-temporal consistency, reducing abrupt vehicle movement changes afterwards~\cite{li2018high, cui2022mixformer}. Second, ATG can generate a multi-stage shrink rate to make the drone movement even smoother, thus bypassing defenses relying on vehicle state checking. It should be noted that although PercepGuard also uses vehicle states, it is mainly for assisting the box behavior prediction. PhyScout uses vehicle states for the reconstruction of 3D feature points. Neither of them uses ego states as major or direct evidence for detecting the underlying perception attacks. Therefore, we don't include them in this category.

\subsection{Overall Optimization Pipeline}
After introducing FlyTrap's key novel designs, we present the comprehensive optimization process from data collection to real-world robustness in this section, as shown in Fig.~\ref{fig:opt-pipeline}.

\subsubsection{Dataset Construction}
We first introduce the dataset construction process to train the adversarial pattern. Specifically, we collect aerial videos that depict common deployment areas for ATT drones. Each video is split into two segments, the template video and the search video: $V=\{V_\text{tplt}, V_\text{search}\}$. The first segment tracks a person, corresponding template videos $V_\text{tplt}=\{\mathbf{I}_{\text{tplt}_1}, \mathbf{I}_{\text{tplt}_2}, ...\}$, where $\mathbf{I}_{\text{tplt}_n}\in \mathbb{R}^{H\times W\times 3}$ represent $n$-th template frame in the video. The template frames are used to initialize the tracker. The second segment records the same subject deliberately opening an umbrella and pointing it at the drone to simulate adversarial behavior, resulting in search frames $V_\text{search}=\{\mathbf{I}_{\text{search}_1}, \mathbf{I}_{\text{search}_2}, ...\}$. $\mathbf{I}_{\text{search}_n}\in \mathbb{R}^{H\times W\times 3}$ represents the $n$-th search frame, where the tracker makes predictions. These search frames serve as inputs to the PDP (Section~\ref{sec:progressive-distance-pulling}), which simulates adversarial umbrellas and distance-pulling dynamics. We further perform automated labeling and down-sampling, as detailed in Appendix~\ref{app:data-process}.

\subsubsection{Adversarial Objective Function}
\label{sec:adv-obj-func}

We leverage the attack target from the ATG as our optimization goal. Specifically, for each PDP time step $t$, we guide the SOT model to predict a bounding box of size $w_{a}^t$ and $h_a^t$ and centered at $cx_a^t$ and $cy_a^t$, represented as a tuple $\mathcal{P}_a^t = (cx_a^t, cy_a^t, w_a^t, h_a^t)$. We set all $cx_a^t$ and $cy_a^t$ to the center of the umbrella by default:
\begin{equation}
\textstyle
\label{eq:loc} 
    \mathcal{L}_{\text{loc}} = \frac{1}{NMT} \sum_{i=1}^{N} \sum_{j=1}^{M} \sum_{t=1}^{T}\left\lVert \mathcal{P}_{i, j}^t \ominus \mathcal{P}_{a}^t \right\rVert,
\end{equation}
where $N$, $M$, and $T$ are the overall number of search video frames, tracking candidate proposals, and PDP time steps. Additionally, the control algorithms are designed to react to tracking results only if their confidence scores are sufficiently high to ensure safe autonomous flight~\cite{dji_activetrack_tracking_state}. Thus, we maximize the predicted confidence $score_i$ to ensure that our injected tracking results can propagate throughout the ATT drones:
\begin{equation}
\label{eq:cls}
    \mathcal{L}_{\text{cls}} = \frac{1}{NMT} \sum_{i=1}^{N} \sum_{j=1}^{M} \sum_{t=1}^{T}\left[-\log(score_{i,j}^t)\right].
\end{equation}
\indent
Beyond SOT, we also co-optimize across multiple models to satisfy spatial-temporal consistency constraints for adaptive attacks. For example, to achieve spatial consistency, we assign the same location and confidence objectives (Eq.~\ref{eq:loc} and \ref{eq:cls}) to an object detector. For those defenses that use auxiliary pose estimation model~\cite{muller2024vogues}, we assume the attackers can control their pose right before launching the attack, and preserve the temporally consistent pose by injecting it in consecutive attack frames. Specifically, we optimize the pose estimation heat map $\mathbf{H}^t$ at each time step to remain close enough to a benign reference map $\mathbf{H}_{\text{benign}}$, which is averaged across the last few frames in the template video before umbrella deployment:
\begin{equation}
\label{eq:pose}
    \mathcal{L}_{pose} = \frac{1}{NT}\sum_{i=1}^N \sum_{t=1}^T \left\lVert \mathbf{H}_i^t - \mathbf{H}_{\text{benign}} \right\rVert.
\end{equation}
\indent
Lastly, to maintain physical-world realizability, we regularize the adversarial patterns with a total variation (TV) loss:
\begin{equation}
\label{eq:tv-loss}
\mathcal{L}_{\mathrm{TV}}(\mathbf{A}) = 
\sum_{i=1}^{H_a - 1} \sum_{j=1}^{W_a - 1}
\left\lVert \mathbf{A}_{i+1,j} - \mathbf{A}_{i,j} \right\rVert +
\left\lVert \mathbf{A}_{i,j+1} - \mathbf{A}_{i,j} \right\rVert,
\end{equation}
where $\mathbf{A}_{i,j}$ represent pixel values at location $(i, j)$ on the adversarial pattern $\mathbf{A}$ before rendering. Finally, we optimize the adversarial pattern as a weighted sum of all the objectives:
\begin{equation}
    \min_{\mathbf{A}}\mathbb{E}_{\mathcal{T}\sim\mathcal{T}_\text{compose}}\big[\sum_k w_k\mathcal{L}_k\big],
\end{equation}
where $w_k$ is the tuned weight to balance the $k$-th objective function and $\mathcal{T}$ denotes the transformation detailed below.

\subsubsection{Physical-world robustness}
\label{sec:phy-robust}
To overcome the influence of innumerable physical factors, we stack a set of expectations over transformation (EoT) within the optimization process~\cite{athalye2018synthesizing}. In the rendering operation, we randomly select the camera
elevation angle $\mathcal{T}_1(\cdot): \theta\sim[-5^{\circ}, 5^{\circ}]$ and azimuth angle $\mathcal{T}_2(\cdot): \phi\sim[-5^{\circ}, 5^{\circ}]$ to simulate the attacker pointing the umbrella slightly off the camera center. We randomly sample the angle of rotation $\mathcal{T}_3(\cdot): \psi\sim[-20^{\circ}, 20^{\circ}]$ of the umbrella to simulate the imperfect vertical direction of the physical adversarial pattern. Additionally, we add image transformations to the adversarial patterns, including Gaussian noise ($\mathcal{T}_4$), brightness ($\mathcal{T}_5$), contrast ($\mathcal{T}_6$), saturation ($\mathcal{T}_7$), and hue transformation ($\mathcal{T}_8$) to simulate complex physical world environments. The final transformation is composed of all the transformations $\mathcal{T}_\text{compose} = \left\{ \mathcal{T}_1 \circ \mathcal{T}_2 \circ \cdots \circ \mathcal{T}_8 \right\}$. Additionally, the PDP naturally incorporates estimation error due to imperfect physical modeling, accounting for the real-world imperfection control assumed in the closed-loop simulation.
\section{Attack Evaluation}

\subsection{General Experimental Setups}
\label{sec:eval-setup}

\subsubsection{Dataset collection}
\label{sec:exp-data}
We collected an aerial-view dataset for training and evaluation. The dataset includes video recordings featuring four individuals with diverse appearances and covers four typical drone deployment environment types, including two grass fields, two parking lots, one bare ground area, and one drivable road. For each combination, we recorded two videos: one for training and one for testing. In total, the dataset includes 23 training videos comprising 11,898 frames and 25 evaluation videos comprising 13,594 frames. Ethics considerations can be found in Section~\ref{sec:ethics}.

\subsubsection{Models}
\label{sec:models}
In our experiments, we choose SiamRPN-based~\cite{li2018high} SOT models as victims, following prior work~\cite{muller2022physical}, due to their strong trade-off between tracking accuracy and computational efficiency. To further broaden our evaluation, we also include MixFormer, a state-of-the-art Transformer-based SOT model~\cite{vaswani2017attention}, which represents the recent trend toward more expressive yet computationally intensive tracking architectures. By incorporating both CNN-based and Transformer-based models, this combination provides comprehensive coverage of the current SOT model landscape.

\subsubsection{Metrics}
\label{sec:metrics}
Under the DPA setting, we define evaluation metrics to capture the system-level impact of the attacks. Specifically, we define two metrics: (1) open-loop attack success rate (ASR$_\text{open}$) and (2) closed-loop attack success rate (ASR$_\text{closed}$). ASR$_\text{open}$ is defined as successful if all of the following conditions are satisfied: (1) to ensure the drone is expected to be pulled to within the attacker-desired distance: the bounding box area must be smaller than a shrinkage threshold $r_a$ of the umbrella area; (2) to ensure the ATT system doesn’t lose track and thus fail in distance-pulling: the prediction confidence must exceed a predefined threshold $score_a$ and (3) to ensure the umbrella is the trigger: the attacked prediction bounding box must lie entirely within the umbrella boundary:
\begin{align*}
\mathcal{C}_\text{open} :
\begin{cases}
    a \le r_a \cdot a_u, \\
    score \geq score_a, \\
    (c_x, c_y, w, h) \subseteq (c_{u_x}, c_{u_y}, w_u, h_u),
    \end{cases}
\end{align*}
where $a$ denotes the bounding box area and the $u$ subscript denotes umbrella. We evaluate frames from the testing dataset and compute the ASR$_\text{open}$ as:
\[
    \text{ASR$_\text{open}$} = \frac{\sum_{i=1}^{N} \mathbb{I}(\mathcal{C}_i)}{N},
\]
where $\mathbb{I}$ is the indicator function, $\mathcal{C}_i$ represents the condition for the $i$-th sample, and $N$ is the total number of frames. To capture ASR comprehensively across varying threshold settings, we introduce a metric similar to mean Average Precision (mAP) used in object detection~\cite{everingham2010pascal, lin2014microsoft}. We define mean ASR$_\text{open}$ (mASR$_\text{open}$) as the average ASR$_\text{open}$ over a set of thresholds $r_{a}$ and $score_a$ ranging from 0.1 to 0.9 in increments of 0.1 to provide broad coverage. For the ASR$_\text{closed}$, we define success if the final distance $d$ between the drone and the attacker is below a distance threshold $d_a$:
\begin{align*}
    \mathcal{C}_{\text{closed}}: d \leq d_a.
\end{align*}
~~~~The ASR$_\text{closed}$ is computed as the average success rate across multiple real-world drone flights. The success criterion for ASR$_\text{closed}$ is straightforward: the distance threshold can be the maximum range to capture the drone (e.g., 9 meters~\cite{net_capture}) or it can be the working distance for sensor attacks (e.g., 6 meters for projector~\cite{zhou2022doublestar}) or hitting distance (e.g., 0.5 meters).
\subsection{Attack Effectiveness}
\label{sec:digital-eval}

\subsubsection{Evaluation Methodology}
\label{sec:exp-eff-methodology}
As a baseline, we use target photos (TGT) cropped from the first frame of each training video, corresponding to the same person and location, as they naturally resemble the genuine target being tracked. The TGT can be regarded as a simple human figure printing baseline attack. We use grid search to find the ratio of the printed human figure to the umbrella that can maximize mASR$_\text{open}$ and use that for fair baseline comparisons. More TGT generation details are included in the Appendix~\ref{app:tgt}. Since TGT also applies an image on the umbrella surface, the mASR$_\text{open}$ can be naturally applied to it. This evaluation involved 4 people $\times$ 4 locations = 16 TGT combinations in total. The mASR$_{\text{open}}$ was averaged over 16 TGTs $\times$ 6 testing videos = 96 combinations of experiments. Regarding FlyTrap, we select two people as the tracked target and two locations as the background for training. Then, we evaluate FlyTrap on the 6 testing videos of the same target person and background. Vanilla FlyTrap can also be considered as a baseline from the previous SOT shrinking attack~\cite{ding2021towards, yan2020cooling}, but with our new contributions
of the umbrella modeling, DPA-specific objective function design, and attack vector-specific robustness design.

\begin{table}[t!]
\centering
\caption{Evaluation of attack effectiveness (mASR$_\text{open}$). SiamAlex, SiamRes, and SiamMob refer to SiamRPN~\cite{li2018high} combined with AlexNet~\cite{krizhevsky2012imagenet}, ResNet~\cite{he2016deep}, and MobileNet~\cite{howard2017mobilenets}, respectively. The FlyTrap attack consistently outperforms the TGT baseline,  despite its visual similarity to the tracked object. FlyTrap$_\text{PDP}$ consistently outperforms the vanilla version.}
\resizebox{\linewidth}{!}{%
    \begin{tabular}{l|cccc|c}
    \toprule
    \textbf{Attack} & \textbf{MixFormer} & \textbf{Siam-Alex.} & \textbf{Siam-Res.} & \textbf{Siam-Mob.} & \textbf{Avg.} \\
    \midrule
    TGT & $46.3\%$ & $37.2\%$ & $24.9\%$ & $35.5\%$ & $36.0\%$ \\
    \midrule
    FlyTrap & $42.0\%$ & $17.0\%$ & $44.3\%$ & $32.1\%$ & $33.9\%$ \\
    \rowcolor{gray!20} FlyTrap$_\text{PDP}$ & $78.7\%$ & $35.6\%$ & $50.8\%$ & $49.1\%$ & $53.6\%$ \\
    \bottomrule
    \end{tabular}%
}
\label{tab:main-res}
\end{table}

\begin{table}[t!]
\centering
\caption{Evaluation of scenario universality for attacks across unseen target-location combinations (mASR$_\text{open}$). We evaluate the attack on two unseen people and two unseen locations with different target-location combinations. The FlyTrap in this table is the version with the PDP design.}
\resizebox{\linewidth}{!}{%
    \begin{tabular}{l|cc|cc|cc}
    \toprule
    & \multicolumn{6}{c}{\textbf{Scenario Universality}} \\
    \cline{2-7}
    \multirow{3}{*}{\centering \textbf{Model}} & \multicolumn{2}{c|}{Location (6 Videos)} & \multicolumn{2}{c|}{Person (7 Videos)} & \multicolumn{2}{c}{Both (6 Videos)} \\
    & TGT & \cellcolor{gray!20}FlyTrap & TGT & \cellcolor{gray!20}FlyTrap & TGT & \cellcolor{gray!20}FlyTrap \\
    \midrule
    MixFormer & $25.5\%$ & \cellcolor{gray!20}$85.9\%$ & $11.6\%$ & \cellcolor{gray!20}$40.4\%$ & $6.8\%$ & \cellcolor{gray!20}$34.1\%$ \\
    SiamRPN-Alex & $34.3\%$ & \cellcolor{gray!20}$50.2\%$ & $24.2\%$ & \cellcolor{gray!20}$67.9\%$ & $21.7\%$ & \cellcolor{gray!20}$33.0\%$ \\
    SiamRPN-Res & $20.7\%$ & \cellcolor{gray!20}$55.2\%$ & $10.4\%$ & \cellcolor{gray!20}$63.5\%$ & $9.9\%$ & \cellcolor{gray!20}$42.8\%$ \\
    SiamRPN-Mob & $28.8\%$ & \cellcolor{gray!20}$55.9\%$ & $13.8\%$ & \cellcolor{gray!20}$54.5\%$ & $12.2\%$ & \cellcolor{gray!20}$26.0\%$ \\
    \midrule
    Average & $27.3\%$ & \cellcolor{gray!20}$61.8\%$ & $15.0\%$ & \cellcolor{gray!20}$56.6\%$ & $12.6\%$ & \cellcolor{gray!20}$34.0\%$ \\
    \bottomrule
    \end{tabular}%
}
\vspace{-0.3cm}
\label{tab:main-uni}
\end{table}

\subsubsection{Experiment Results}
The main results of our evaluation are presented in Table~\ref{tab:main-res}. We find TGT, even though it visually matches the genuine person being tracked, performs considerably limited, with an average mASR$_{\text{open}}$ of 36.0\% across all victim models. In comparison, FlyTrap$_\text{PDP}$ achieves a much higher mASR$_{\text{open}}$ of 53.6\% on average, underscoring its effectiveness. The comparison between FlyTrap with and without PDP design shows its effectiveness in further shrinking the area, which is also observed in the physical experiments. We also study the robustness of FlyTrap to environmental distractions, where multiple similar but unobstructed objects (e.g., other passersby) appear in the same scenario and find that FlyTrap can cause consistent attack effects given the presence of other visual distractions. Please refer to the case study demonstration in Appendix~\ref{sec:case-study}.

It's worth noting that mASR$_{\text{open}}$ is a challenging metric. Specifically, assume the $\mathcal{C}_{\text{open}}$ can always be satisfied when $\forall~r_a \ge 0.5,~\mathbb{I}(\mathcal{C}_{\text{open}})=1$ and $\forall~r_a < 0.5,~ \mathbb{I}(\mathcal{C}_{\text{open}})=0$, the mASR$_{\text{open}}$ will be 50.0\%. However, this can already shorten the tracking distance to half of its original distance as indicated by Theorem~\ref{thm:metric_consistency}. We show in physical experiments (Section~\ref{sec:phy-open-loop}) that the shrink rate can continuously decrease as the distance decreases. Therefore, the mASR$_{\text{open}}$ primarily serves for digital, scalable evaluation before printing adversarial patterns for physical evaluation. Thus, even though the absolute number of mASR$_{\text{open}}$ might not seem as high as expected, we find it already sufficient enough to cause closed-loop impacts as indicated by our physical closed-loop experiments (Section~\ref{sec:phy-close-loop}).

\subsection{Scenario Universality}
\label{sec:universality}

\subsubsection{Evaluation Methodology}
For TGT, we apply the same set of images from Section~\ref{sec:digital-eval} to videos of unseen scenarios, including different target persons and/or different background locations. The results are derived from 16 TGTs $\times$ 19 testing videos =~304 combinations of tests. For FlyTrap, we use the same set of adversarial patterns optimized in Section~\ref{sec:digital-eval} to an unseen person and/or unseen background. We report the mASR$_{\text{open}}$ of universality to location (6 testing videos), to person (7 testing videos), and both (6 testing videos).

\subsubsection{Experiment Results}
In Table~\ref{tab:main-uni}, we observe that the TGT shows limited universality, even for the same tracked person with a different background. Its universality to location is only 27.3\% across all models. The universality to person and to both are even worse. Therefore, TGT might only be useful if the attacker knows the exact scenario, including both the person and location. On the contrary, FlyTrap shows a significantly better universality of 61.8\%. The universality to location and to person can achieve comparable mASR$_{\text{open}}$ as effectiveness shown in Table~\ref{tab:main-res}, suggesting that FlyTrap, when trained on a subset of location or person, can generalize effectively to unseen individuals or environments, satisfying the need to attack ATT drones in unknown deployment places. However, when both the person and location are unseen, the mASR$_{\text{open}}$ are slightly lower, but still 21.4\% higher than TGT.

\begin{figure}[t]
    \centering
    \includegraphics[width=0.6\linewidth]{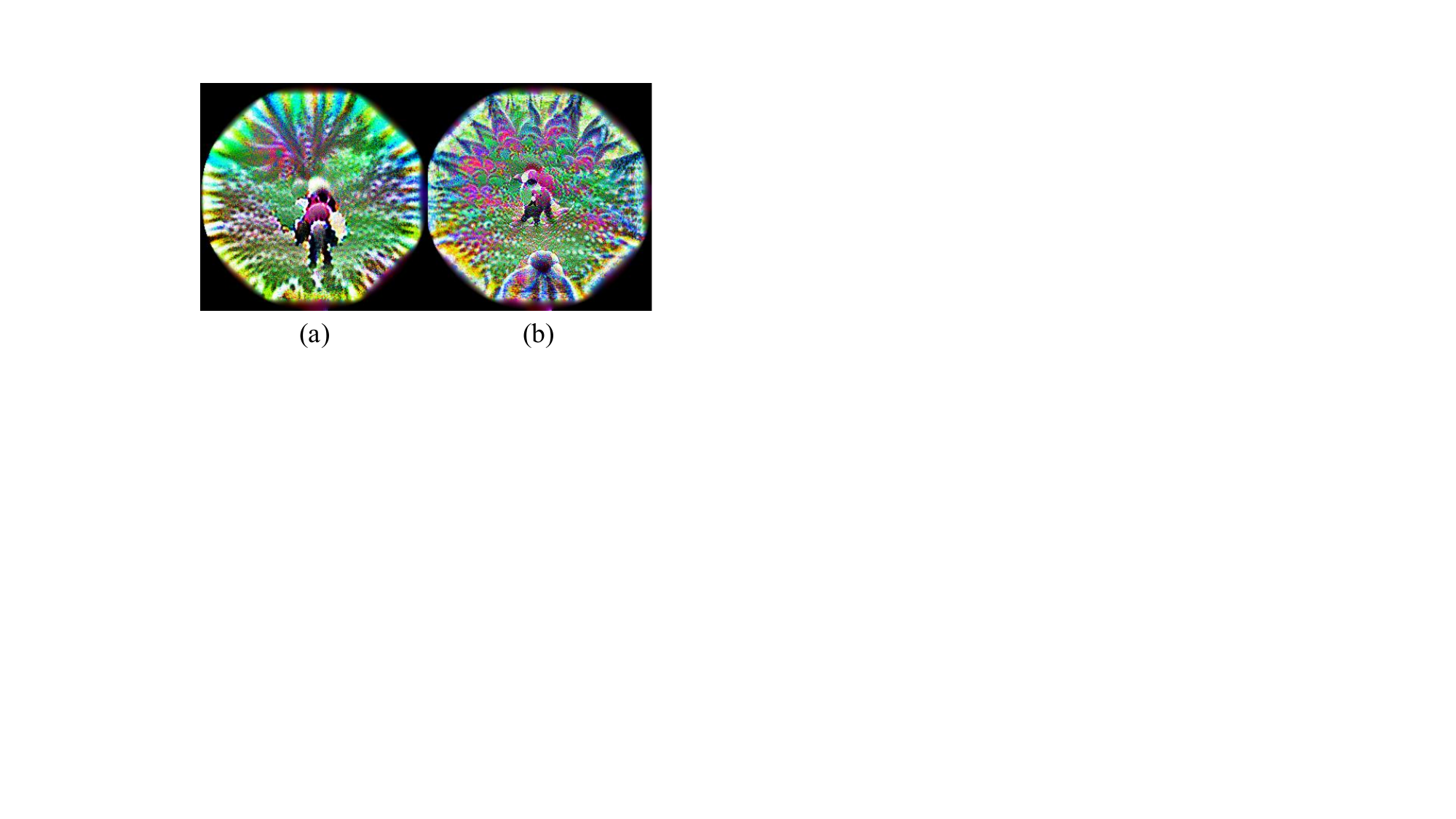}
    \caption{Visualization of adversarial patch designs against MixFormer~\cite{cui2022mixformer}. (a) Umbrella pattern without progressive distance-pulling, which achieves high transferability due to its visual resemblance to a standing human. (b) Umbrella pattern with progressive distance-pulling, exhibiting a structured cascade pattern that enhances continuous distance-pulling effects.}
    \label{fig:adv-pattern}
    \vspace{-0.3cm}
\end{figure}

\subsection{Attack Transferability}
\label{sec:transfer}
\subsubsection{Evaluation Methodology}
We employ the FlyTrap optimized from one victim SOT model for transferring to attack another. We consider FlyTrap with and without PDP designs. We study the transferability of FlyTrap without PDP as we observe an interesting adversarial pattern: the human-shape pattern in Fig.~\ref{fig:adv-pattern} (a), which might be more transferable as it visually resembles a standing human. Then, we report the mASR$_{\text{open}}$ on the same set of testing videos as Section~\ref{sec:exp-eff-methodology}.

\begin{figure}[t]
    \includegraphics[width=\linewidth]{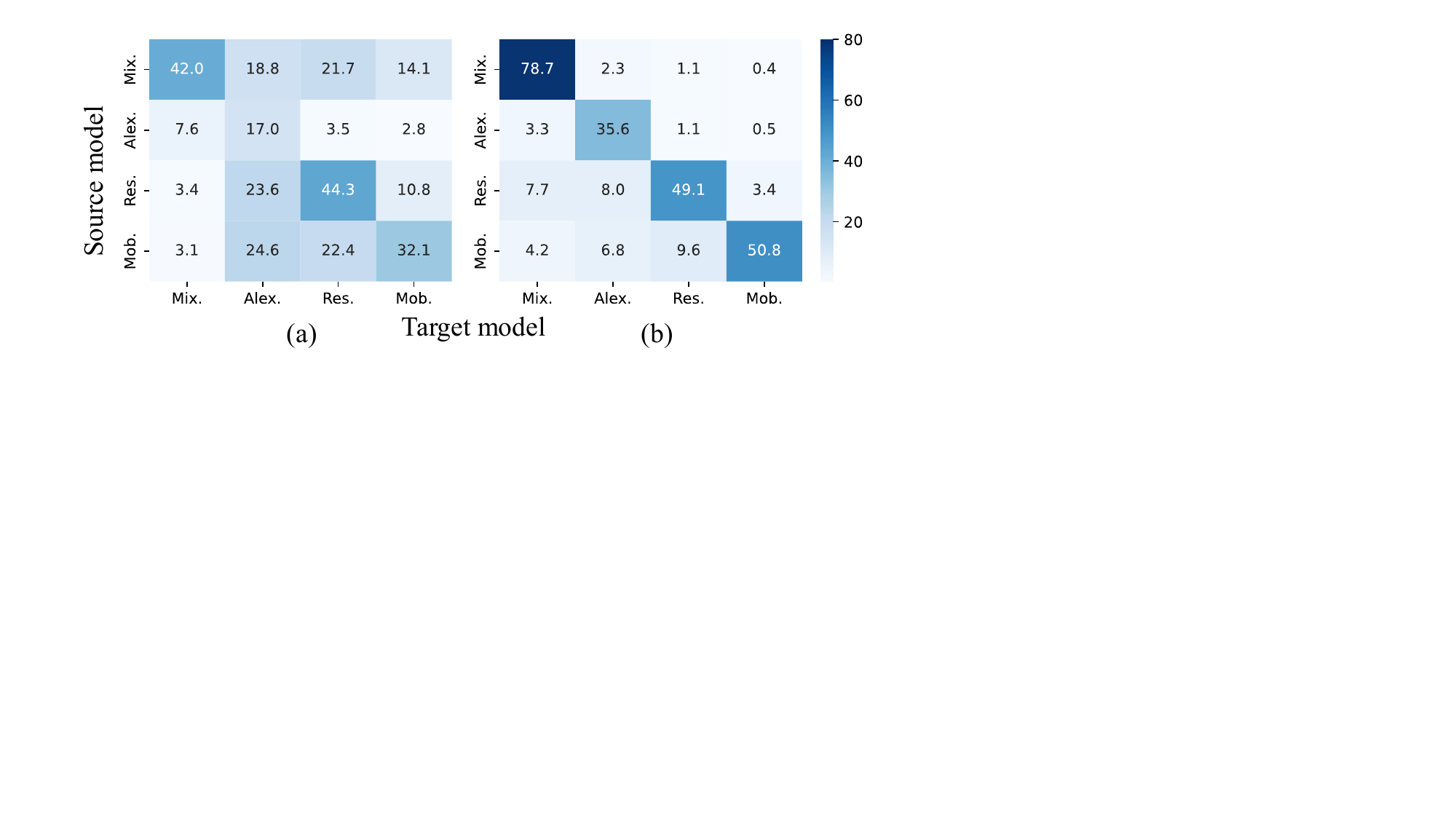}
    \caption{Attack transferability evaluation (mASR$_\text{open}$\%). (a) FlyTrap pattern optimized without progressive distance-pulling (PDP) design; and (b) with PDP design.}
    \label{fig:trans}
\end{figure}

\subsubsection{Experiment Results}

The main results are shown in Fig.~\ref{fig:trans}. Notably, we observe that the human-shape patterns indeed have better transferability, with an average of mASR$_{\text{open}}$ of 13.1\% compared to FlyTrap with PDP design, which is 4.0\%. The phenomenon can be explained by the visual appearance of the optimized pattern shown in Fig.~\ref{fig:adv-pattern}. Specifically, Fig.~\ref{fig:adv-pattern} (a) shows visual resemblance to a standing human, which potentially benefits the transferability, while Fig.~\ref{fig:adv-pattern} (b) exhibits a structured cascade pattern that enhances continuous distance-pulling effects but is more model-specific. Among them, the adversarial pattern against MixFormer~\cite{cui2022mixformer} shows the highest transferability of 18.2\% mASR$_{\text{open}}$ on average. Such a level of transferability can already achieve effective DPA against black-box commercial systems (Section~\ref{sec:commercial}). The results suggest a trade-off between a more continuous distance-pulling attack to a more transferable attack, which we acknowledge as a limitation in Section~\ref{sec:limitation}.

\subsection{Spatial-temporal Consistency}
\label{sec:spatial-temporal}

We select one defense for each of our three categorized classes (Section~\ref{sec:design-consistency}) to evaluate FlyTrap's spatial-temporal consistency for each defense type, using the same set of testing videos in effectiveness evaluation (Section~\ref{sec:digital-eval}).

\begin{table}[t!]
\centering
\caption{Evaluation of the PercepGuard~\cite{man2023person} defense. We report False Alarm Rates (FAR) under benign inputs and True Alarm Rates (TAR) under vanilla and FlyTrap$_\text{ATG}$ attacks. Notably, the FlyTrap$_\text{ATG}$ achieves an average detection rate of only 2.8\%, which is lower than the 5\% benign PercepGuard FAR reported in~\cite{muller2024vogues}.}
\resizebox{0.8\linewidth}{!}{%
    \begin{tabular}{l|c|cc}
    \toprule
    \multirow{2}{*}{\centering \textbf{Model}} & \textbf{False Alarm Rate} & \multicolumn{2}{c}{\textbf{True Alarm Rate}} \\
    & Benign & FlyTrap &  \cellcolor{gray!20} FlyTrap$_\text{ATG}$ \\
    \midrule
    MixFormer & $2.6\%$ & $78.0\%$ & \cellcolor{gray!20}$2.2\%$ \\
    Siam-Alex & $0.0\%$ & $41.2\%$ & \cellcolor{gray!20}$4.9\%$ \\
    Siam-Res & $0.0\%$ & $55.2\%$ & \cellcolor{gray!20}$0.0\%$ \\
    Siam-Mob & $0.0\%$ & $68.2\%$ & \cellcolor{gray!20}$4.2\%$ \\
    \midrule
    Average & $0.7\%$ & $60.7\%$ & \cellcolor{gray!20}$2.8\%$ \\
    \bottomrule
    \end{tabular}%
}
\vspace{-0.3cm}
\label{tab:percepguard}
\end{table}

\subsubsection{PercepGuard Evaluation}
We adopt the released behavior LSTM~\cite{hochreiter1997long} model in the official codebase~\cite{percepguard_code2023}. The input to the LSTM model is the tracked bounding box prediction in the last ten frames. The output of the LSTM model is a probability distribution over several classes. We regard the alarm as raised if the prediction is not ``pedestrians''. The results are shown in Table~\ref{tab:percepguard}. The original PercepGuard true alarm rate (TAR) in benign case and false alarm rate (FAR) in attack case are 99.0\% and 5.0\%, respectively~\cite{man2023person, muller2024vogues}. Our evaluation found a similar trend of low FAR in benign tracking cases and high TAR for vanilla FlyTrap. However, with our ATG design, the FlyTrap$_\text{ATG}$ attack can decrease the TAR to 2.8\%, even lower than the FAR of around 5.0\% reported in the original paper~\cite{man2023person}. The FAR in our evaluation is lower because our collected dataset is single object tracking scenarios, while in the original evaluation, it's tested in driving scenarios with a more complex environment (i.e., the BDD dataset~\cite{yu2020bdd100k}), thus leading to slightly higher FAR. 

\begin{figure}
    \centering
    \includegraphics[width=0.6\linewidth]{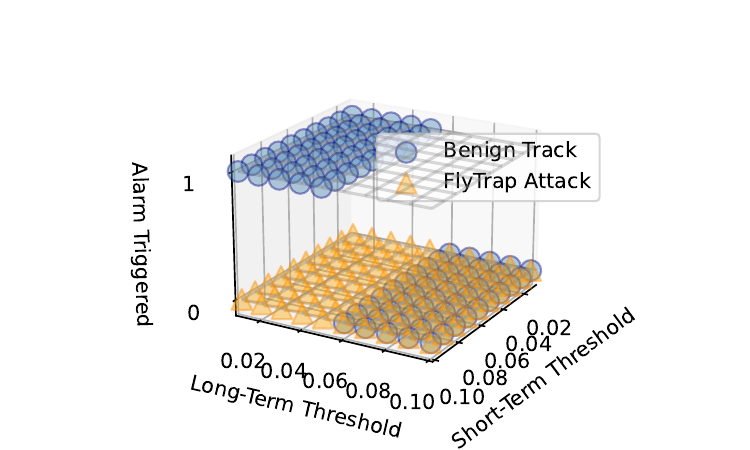}
    \caption{Evaluation results of VisionGuard~\cite{han2024visionguard}. In the z-axis, 0 represents alarm is not activated, and 1 represents that the alarm is triggered. FlyTrap can bypass it across all the tuned thresholds, even when the benign track triggers the alarm.}
    \label{fig:visionguard}
    \vspace{-0.3cm}
\end{figure}

\subsubsection{VOGUES Evaluation} 
VOGUES~\cite{muller2024vogues} was originally proposed for defense Multiple Object Tracking (MOT). Therefore, we follow the original setups while making necessary adaptations for SOT. Following their setups~\cite{vogues_code2023}, we adopt YOLOv3~\cite{redmon2018yolov3} as the object detector and AlphaPose~\cite{fang2022alphapose} as the pose estimator. We compute the spatial consistency by choosing the object detection prediction that has the highest Intersection of Union (IoU) with the SOT prediction to avoid high false positive rates during the SOT adaptation. Since VOGUES doesn't release the LSTM model they use to evaluate the consistency of the human pose, we reproduce it by following their setups: we train the LSTM model using the pose from the UCF101 dataset~\cite{soomro2012ucf101}. Same to their setups, we set the OD and OT prediction IoU threshold as 0.5 and the LSTM threshold as 0.5. If any of the values is below the threshold, an alarm will be raised. We observe that SOT can mostly operate normally even if a benign umbrella is used as camouflage. Therefore, the defense is desired to tolerate the spatial-temporal anomaly induced by a normal umbrella, avoiding an unnecessarily high false alarm rate that hampers ATT drones from operating normally in this case. 

Table~\ref{tab:vogues} shows that with the ATG design, FlyTrap$_\text{ATG}$ can largely decrease the alarm rate across models and almost achieve consistently lower alarm rates than a benign umbrella. We find that the relatively high alarm rate compared with benign tracking is caused by imperfect universal attacks against the object detector across multiple frames and videos, which is an observed limitation in existing attacks against object detectors~\cite{han2024visionguard}. Nonetheless, the ablation study of FlyTrap$_\text{ATG}$ is already sufficient in justifying the effectiveness of our
ATG design and is significantly lower than the effectiveness level that the original VOGUES used to claim success (98.4\%).

\begin{table}[t!]
\centering
\caption{VOGUES~\cite{muller2024vogues} defense evaluation. FlyTrap$_\text{ATG}$ can largely decrease the alarm rate across models and almost achieve consistently lower alarm rates than a benign umbrella.}
\resizebox{0.8\linewidth}{!}{%
    \begin{tabular}{l|cc|cc}
    \toprule
    \multirow{2}{*}{\centering \textbf{Model}} & \multicolumn{2}{c}{\textbf{False Alarm Rate}} & \multicolumn{2}{c}{\textbf{True Alarm Rate}} \\
    & Benign & Umbrella & FlyTrap &  \cellcolor{gray!20} FlyTrap$_\text{ATG}$ \\
    \midrule
    MixFormer & $3.1\%$ & $51.6\%$ & $93.7\%$ & \cellcolor{gray!20} $32.4\%$ \\
    Siam-Alex & $4.2\%$ & $75.6\%$ & $89.3\%$ & \cellcolor{gray!20} $77.9\%$ \\
    Siam-Res & $3.4\%$ & $73.0\%$ & $91.4\%$ & \cellcolor{gray!20} $54.4\%$ \\
    Siam-Mob & $3.9\%$ & $65.1\%$ &$80.8\%$ & \cellcolor{gray!20} $44.8\%$ \\
    \midrule
    Average & $3.7\%$ & $66.3\%$ &$88.8\%$ & \cellcolor{gray!20} $52.4\%$ \\
    \bottomrule
    \end{tabular}%
}
\label{tab:vogues}
\end{table}

\subsubsection{VisionGuard Evaluation} 
We use the official implementation for evaluation~\cite{visionguard_code2024}. To collect drone states, we use the Clover drone simulator~\cite{copterexpress_clover}, which uses Gazebo~\cite{koenig2004design} as the backend and PX4~\cite{px4} as the firmware. We simulate the benign tracking and the FlyTrap attacked tracking in \texttt{OFFBOARD} mode~\cite{px4} by publishing to the ROS topic \texttt{mavros/setpoint\_position/local} to set the target point at a frequency of 20 Hz, simulating the SOT model's inference FPS. Following their setup, we only consider the drone and person moving along the x-axis for simplicity, without losing generality. The drone's states can be retrieved by subscribing to \texttt{mavros/local\_position/velocity\_local}, which is used as the input to the ARIMA~\cite{box2015time} states estimation model. Finally, we use one benign sequence to train the model for predicting the drones' states and test on the FlyTrap attack sequence and another benign track sequence. Each sequence has around 130 frames sampled every 0.1 seconds. We iterate the alarm threshold, including long-term residual and short-term residual, and set the accumulated threshold to 2. In Fig.~\ref{fig:visionguard}, we find that FlyTrap can bypass VisionGuard~\cite{han2024visionguard} across all the tuned parameters, even when the benign track triggers the alarm. The results suggest that the fundamental rationale of VisionGuard to aim for inconsistent attack effects in object detection has limited applicability to the ATT drone context, where the SOT prediction is temporally consistent by nature.

\begin{table}[t!]
\centering
\caption{Attack effectiveness evaluation (mASR$_\text{open}$) with spatial-temporal constraint tailored for different defenses. The constraint has a subtle impact (within 10\%) on the attack performance across the models, which are still significantly higher than TGT.}
\resizebox{\linewidth}{!}{%
    \begin{tabular}{l|cccc}
    \toprule
    \textbf{ATG} & \textbf{MixFormer} & \textbf{Siam-Alex.} & \textbf{Siam-Res.} & \textbf{Siam-Mob.} \\
    \midrule
    - & $78.7\%$ & $35.6\%$ & $50.8\%$ & $49.1\%$ \\
    \midrule
    PercepGuard & $76.8\%$ & $51.1\%$ & $53.5\%$ & $47.6\%$ \\
    VOGUES & $69.4\%$ & $41.4\%$ & $40.6\%$ & $40.5\%$ \\
    \bottomrule
    \end{tabular}%
}
\vspace{-0.3cm}
\label{tab:defense-asr}
\end{table}

\subsubsection{Impact on Attack Effectiveness} In Table~\ref{tab:defense-asr}, we evaluate the impact of ATG design on attack effectiveness. The results show that the ATG design to constrain spatial-temporal consistency has a subtle impact (within 10\%) on the attack performance, which is still significantly higher than TGT (in Table~\ref{tab:main-res}). For example, FlyTrap against MixFormer with spatial-temporal constraints can achieve similar mASR$_{\text{open}}$ compared with FlyTrap without constraints: 76.8\% for PercepGuard~\cite{man2023person} and 69.4\% for VOGUES~\cite{muller2024vogues}. Interestingly, we find the FlyTrap with ATG can even significantly boost the mASR$_{\text{open}}$ for SiamRPN-AlexNet to 51.1\% and 41.4\%.

\subsection{Physical-World Attack Evaluation}
\label{sec:exp-phy}

\begin{figure}
    \centering
    \begin{subfigure}[t]{0.48\linewidth}
        \centering
        \includegraphics[width=\linewidth]{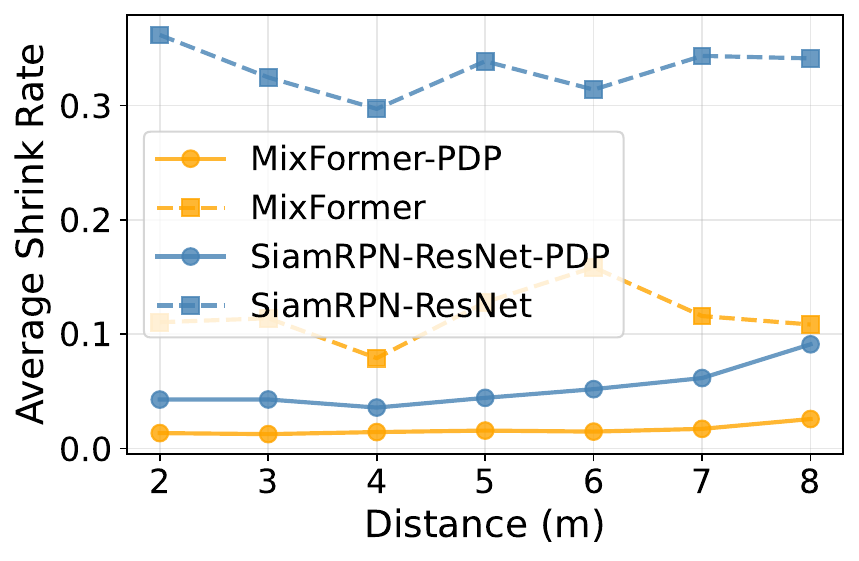}
        \caption{Short-range evaluation}
        \label{fig:shrink_rate_distance}
    \end{subfigure}%
    \hfill
    \begin{subfigure}[t]{0.48\linewidth}
        \centering
        \includegraphics[width=\linewidth]{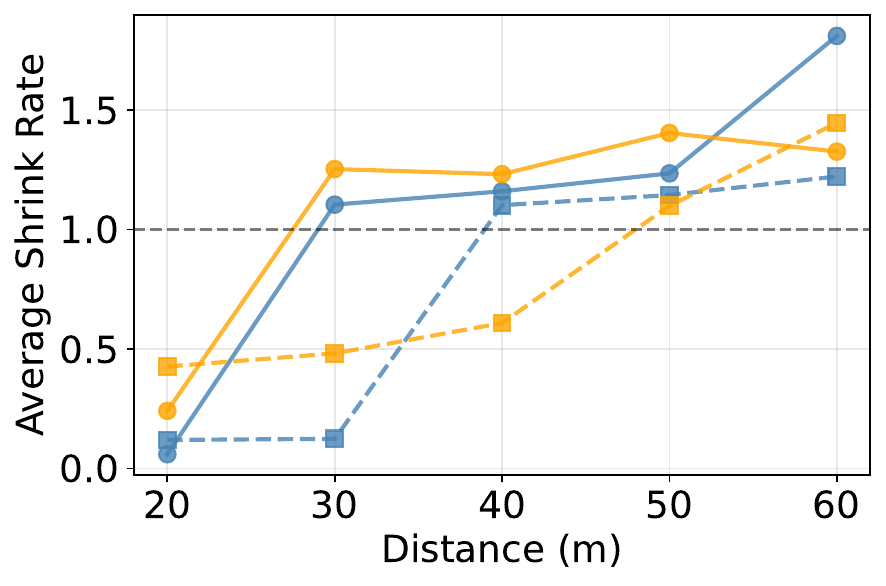}
        \caption{Long-range evaluation}
        \label{fig:max-distance}
    \end{subfigure}
    \caption{Physical evaluation of average shrink rate versus distance. (a) FlyTrap$_\text{PDP}$ achieves a lower shrink rate across all distances in short-range evaluation since the shrink rate continuously decreases as distance decreases. (b) FlyTrap works well below 20m and can
    potentially extend to 30m or 40m. \textit{PDP} means using PDP during attack optimization.}
    \label{fig:distance_evaluation}
    \vspace{-0.3cm}
\end{figure}

\subsubsection{Open-Loop Evaluation Setups}
\label{sec:phy-open-loop-setup}
We create real-world adversarial umbrella prototypes by uploading the optimized adversarial patterns to an online umbrella-printing service. We record 10-second videos at varying distances using a 4K resolution smartphone camera (iPhone 16). This results in around 600 frames with a resolution of 3840 $\times$ 2160 for each video. Then, we run the SOT model offline and evaluate the shrink rate. We report the average shrink rate recorded at different distances. We use umbrellas printed with FlyTrap patterns optimized against MixFormer and SiamRPN-ResNet, both with and without PDP design. For both, the target shrink rate in the objective function (in Section~\ref{sec:adv-obj-func}) is set to 0 to study the effects of closed-loop distance-pulling effects.

\subsubsection{Open-Loop Experiment Results} 
\label{sec:phy-open-loop}

We study the average shrink rates at different distances, shown in Fig.~\ref{fig:distance_evaluation}. In the short-range evaluation, for both SiamRPN and MixFormer, the average shrink rate of FlyTrap$_\text{PDP}$ decreases as the distance decreases since the higher resolution on the adversarial pattern can cause an even lower shrink rate. On the other hand, the shrink rate of the vanilla FlyTrap remains almost the same since it's locked onto one fixed area in the pattern (e.g., the human shape area in Fig.~\ref{fig:adv-pattern} (a)). The results suggest that PDP enables finer optimization at higher resolution, resulting in a significantly smaller tracked area. We further evaluate at long-range distances to study the maximum working distance for the FlyTrap attack. Optimized using collected videos under 20 meters, we find FlyTrap works well at that distance and can potentially extend to 30 or 40 meters. The distance is beyond the maximum functional distance of available ATT drones with 4K videos, which are typically around 20 meters~\cite{dji_active_tracking_2023, skydio_update_2020, potensic_atom_tracking_2023}. It should be noted that the training dataset we collect mainly includes close-range footage. Thus, FlyTrap$_\text{PDP}$ might not be well-optimized for the long-range attack case. We leave it as future work to further study the long-range attack capabilities. More discussions are in the Appendix~\ref{app:distance-shrink-discussion}.

\begin{figure}[t!]
    \centering
    \includegraphics[width=\linewidth]{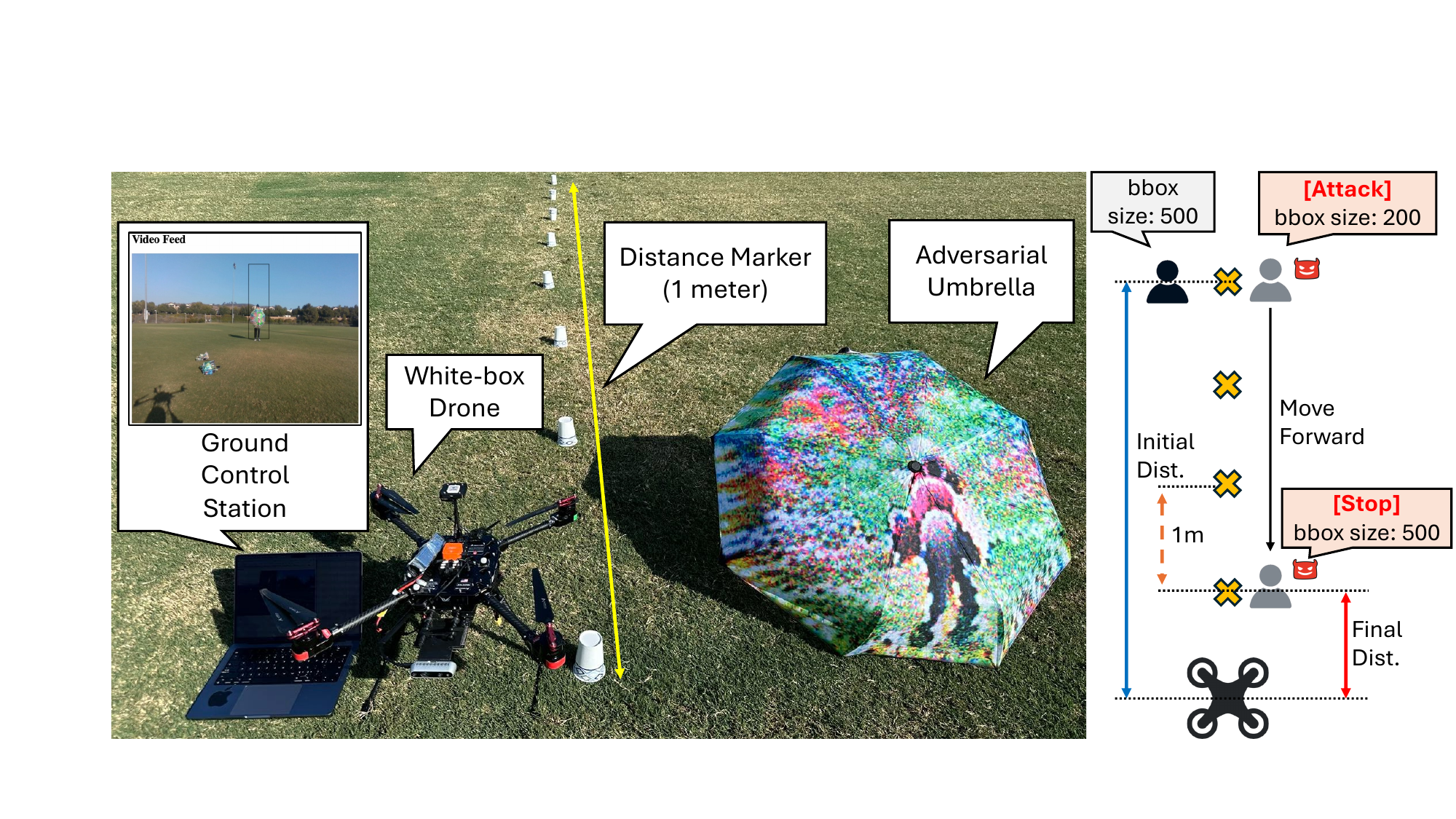}
    \caption{Physical closed-loop evaluation setups. An operator carries the drone to move forward until the shrunk box size matches the original size.}
    \label{fig:physical-setups}
    \vspace{-0.3cm}
\end{figure}

\subsubsection{Closed-Loop Evaluation Setups}
\label{sec:phy-eval-setups} 
To evaluate our attack in physical, closed-loop setups, we build our experimental platform using Holybro X500 v2 drone~\cite{holybro_px4_development_kit}, a medium-lift quadcopter powered by a Pixhawk flight controller. The system uses Robot Operating System (ROS)~\cite{quigley2009ros} as the communication backbone. We detail the implementation in Appendix~\ref{app:att-implement}.

\subsubsection{Closed-loop Evaluation Methodology}
We conduct experiments with the same location and target person as our training videos (Section~\ref{sec:digital-eval}). For safety and distance measurement issues, we simulate ATT behavior without a physical takeoff by manually maneuvering the drone on the ground. The experimenter, who acts as the attacker, then initiates the FlyTrap attack and manually moves
forward
until the tracking box size matches the initialization (shown in Fig.~\ref{fig:physical-setups}), which closely approximates the outcome of an autonomous closed-loop flight. We test all four models with FlyTrap initialized from 7 various starting distances from 6 to 12 meters. For closed-loop attack success criterion $d_a$ (Section~\ref{sec:metrics}), we choose each of the three attack goals in Section~\ref{sec:problem}: \textit{A1}: $d_{a}=9$ for net gun capturing~\cite{netgun2025, ultranetHD2025, netgunstore2025}; \textit{A2}: $d_{a}=6$ for binocular camera projection attack~\cite{zhou2022doublestar}, as it directly related to drone sensor spoofing attacks with range limits; and \textit{A3}: $d_a=0.5$ for crashing, as it within human arm length to push the umbrella. Finally, we report the resulting ASR$_{\text{closed}}$.

\begin{figure}[t!]
    \centering
    \includegraphics[width=\linewidth]{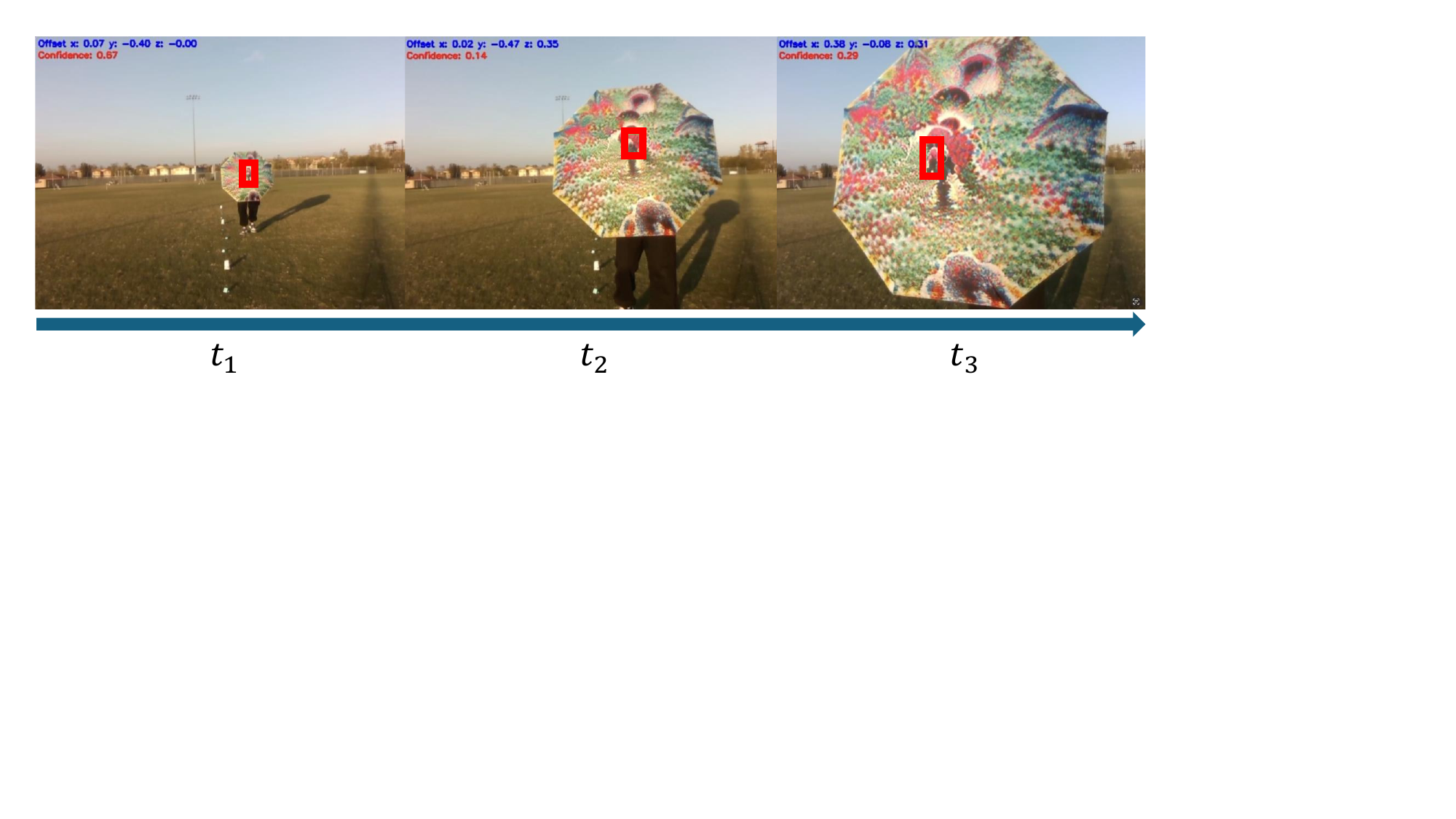}
    \caption{Physical evaluation of FlyTrap$_\text{PDP}$ against MixFormer~\cite{cui2022mixformer}. The video is captured by an on-drone RealSense camera with a resolution of 640 $\times$ 480. The shrink rate decreases largely as the distance decreases, enabling the progressive distance-pulling effects. Bold the box for clarity.}
    \label{fig:continuous_shrink}
\end{figure}

\subsubsection{Closed-loop Experiment Results}
\label{sec:phy-close-loop}
\begin{table}[t!]
\centering
\caption{White-box physical closed-loop evaluation results of ASR$_{\text{closed}}$ against different models under different attack goals. \textit{w/ PDP} means using PDP during attack optimization.}
\resizebox{\linewidth}{!}{%
    \begin{tabular}{l|c|c|c}
    \toprule
    {\textbf{Victim Model}} & \textbf{Capture ($9~m$)} & \textbf{DoubleStar~($6~m$)} & \textbf{Crash~($0.5~m$)} \\
    \midrule
    MixFormer & $100.0\%$ & $100.0\%$ & $0.0\%$ \\
    \cellcolor{gray!20}MixFormer w/ PDP & \cellcolor{gray!20}$100.0\%$ & \cellcolor{gray!20}$100.0\%$ & \cellcolor{gray!20}$\mathbf{100.0\%}$ \\
    \midrule
    Siam-Alex & $100.0\%$ & $100.0\%$ & $0.0\%$ \\
    Siam-Res & $100.0\%$ & $100.0\%$ & $0.0\%$ \\
    \cellcolor{gray!20}Siam-Res w/ PDP & \cellcolor{gray!20}$100.0\%$ & \cellcolor{gray!20}$100.0\%$ & \cellcolor{gray!20}$\mathbf{100.0\%}$ \\
    Siam-Mob & $100.0\%$ & $85.7\%$  & $0.0\%$ \\
    \bottomrule
    \end{tabular}%
}
\label{tab:physical-res}
\end{table}

As shown in Table~\ref{tab:physical-res}, FlyTrap$_\text{PDP}$ substantially reduces the tracking distance of the ATT system to be within the range of capturing, sensor attacks, or direct crashes. These results confirm the physical feasibility of our threat model and demonstrate that the proposed PDP can significantly affect the tracking distance of autonomous tracking drones in closed-loop control. As shown in Fig.~\ref{fig:continuous_shrink}, since we use RealSense camera~\cite{intel_realsense_d435i}, which has lower resolution of 640 $\times$ 480 compared to the videos captured by the smartphone in Section~\ref{sec:phy-open-loop-setup}, the phenomenon that the shrink rate decreases as the distance decreases is even more obvious compared to those in Fig.~\ref{fig:shrink_rate_distance}. The results highlight the physical-world impact and FlyTrap$_\text{PDP}$'s capability to progressively pull the drone as it approaches.

\begin{figure}[t!]
    \centering
    \includegraphics[width=\linewidth]{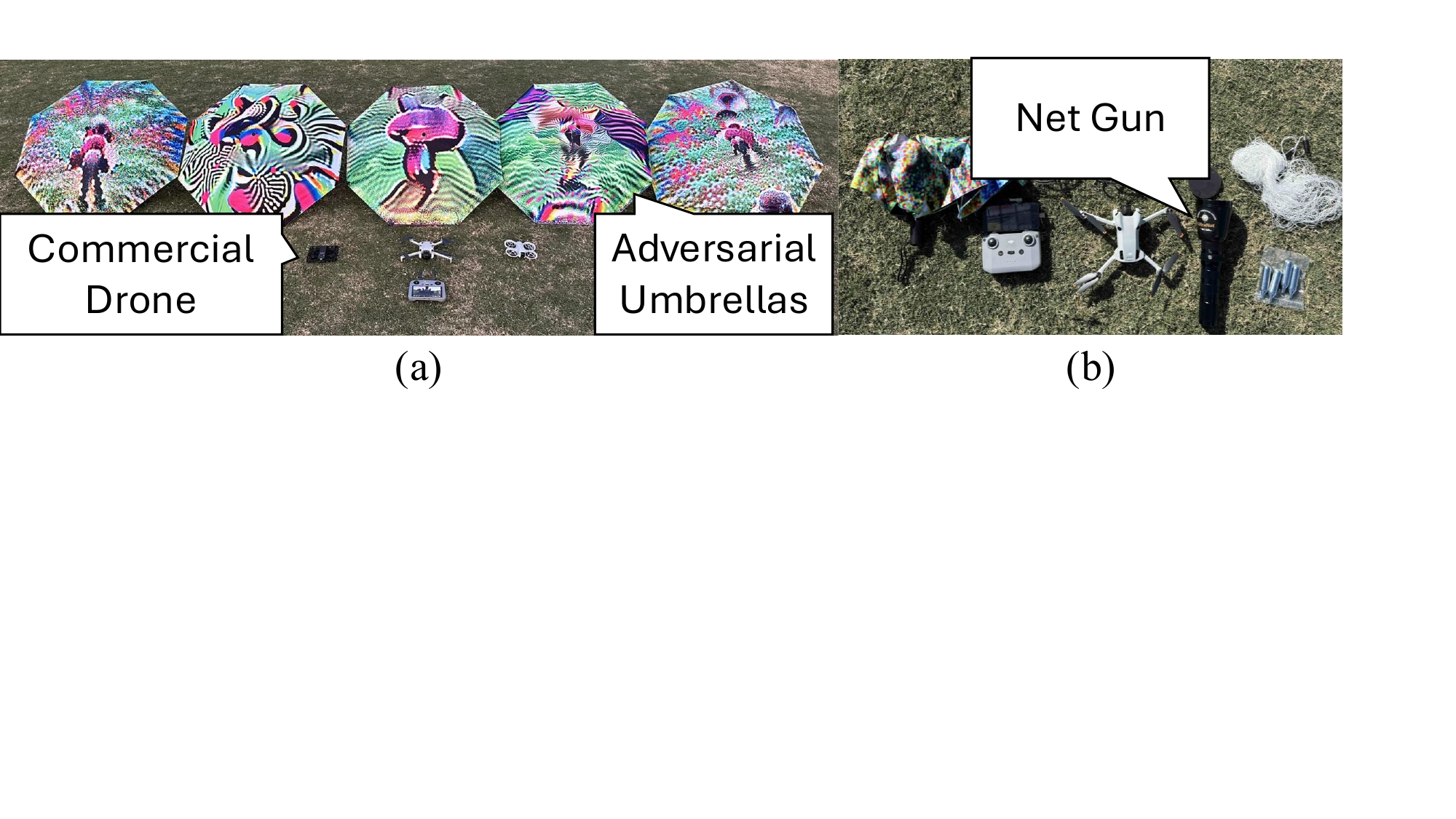}
    \vspace{-0.3cm}
    \caption{Commercial evaluation. (a) Crafted FlyTrap physical umbrellas and three commercial drones. (b) Net gun equipment set used for an end-to-end attack demonstration.}
    \label{fig:commercial-setups}
    \vspace{-0.3cm}
\end{figure}

\begin{figure*}
    \centering
    \includegraphics[width=\linewidth]{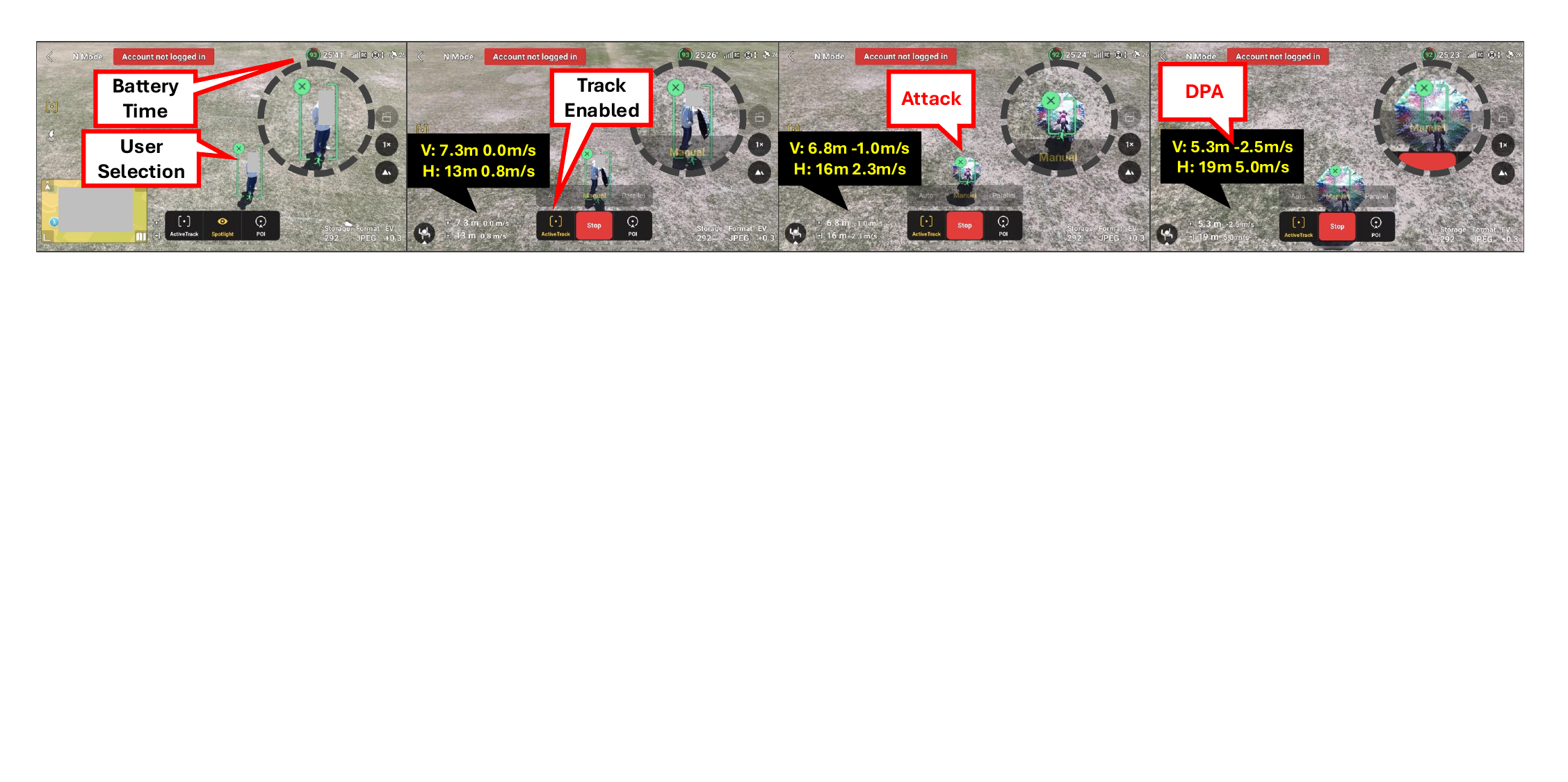}
    \caption{Screenshots from the DJI Mini 4 Pro remote controller interface during a real-world FlyTrap DPA. The green tracking box indicates the output of DJI’s built-in tracking model (zoomed-in view in the figure with a grey dotted border for clarity). In the third frame, the tracker erroneously locks onto a small subregion within the adversarial umbrella pattern. We highlight the vertical and horizontal position and their corresponding velocity of the drone. In the fourth frame, the altitude decreases to 5.3 meters at -2.5 m/s while the drone is moving forward at 5 m/s, suggesting the drone rapidly descends toward the attacker.}
    \label{fig:dji_ui}
    \vspace{-0.3cm}
\end{figure*}

\subsection{Commercial System Attack Evaluation}
\label{sec:commercial}

To assess DPA’s real-world viability and highlight potential vulnerabilities in widely accessible commercial products, we conduct a black-box evaluation on three consumer-grade drones equipped with visual tracking systems. As noticed in~\cite{wang2024revisiting}, the obscure implementations in commercial systems can heavily undermine the phenomenon observed in academia prototypes. Therefore, we evaluate whether our proposed DPA can be conducted in the physical world against real-world products instead of testing our PDP and ATG design, which is specifically for white-box systems (Section~\ref{sec:threat-model}).

\subsubsection{Evaluation Setups}
\label{sec:commercial-setup}
We acquire three commercially available drone products, including DJI Mini 4 Pro~\cite{dji2024}, DJI Neo~\cite{dji2025}, and HoverAir~\cite{hoverair2025}. We employ the same umbrella design described in Section~\ref{sec:phy-eval-setups}, shown in Fig.~\ref{fig:commercial-setups} (a). We purchase a net gun~\cite{ultranetHD2025} to demonstrate the DPA-enabled drone capturing attacks, shown in Fig.~\ref{fig:commercial-setups} (b).

\subsubsection{Evaluation Methodology}
We place ground markers at fixed intervals ($\Delta d$) and have an observer walk parallel to the drone’s flight path to estimate displacement. The flying altitude $h$ is retrieved from the flight log or viewed directly on the controller interface. The distance between the drone and the person is derived by $d=\sqrt{h^2 + (n\Delta d)^2}$. We record the final distance under attacks and report the resulting ASR$_{\text{closed}}$. Each drone is tested in 10 separate flights with fixed initial distances. For the DJI Mini 4 Pro, we set the initial distance as 12 meters, and for the DJI Neo and HoverAir drones, the tracking distance is factory set to around 2 meters. For each flight, we begin by powering up the drone, taking off, and then activating the tracking. We don't report capturing and DoubleStar~\cite{zhou2022doublestar} attack for DJI Neo and HoverAir since their preset initial tracking distance is already within those ranges.

\begin{table}[t!]
\centering
\caption{Commercial evaluation of ASR$_\text{closed}$. Distance limitation for each attack is used as the threshold for ASR$_\text{closed}$. A human-shaped FlyTrap umbrella successfully deceives three consumer drones, with crash outcomes on DJI Neo and HoverAir. N/A means the drones' preset initial tracking distance is already within the distance. These findings demonstrate the real-world existence of the ATT vulnerabilities we identified.}
\resizebox{\linewidth}{!}{%
    \begin{tabular}{l|ccc}
    \toprule
    \textbf{Attacks} & \textbf{DJI Mini 4 Pro} & \textbf{DJI Neo} & \textbf{HoverAir}  \\ 
    \midrule
    Capturing~($9~m$) & \cellcolor{gray!20}$60.0\%$ & N/A & N/A  \\
    DoubleStar~($6~m$) & \cellcolor{gray!20}$30.0\%$ & N/A & N/A \\
    Crash ($0.5~m$) & \cellcolor{gray!20}$0.0\%$ & \cellcolor{gray!20}$60.0\%$ & \cellcolor{gray!20}$80.0\%$ \\
    \bottomrule
    \end{tabular}%
    }
    \label{tab:commercial-masr}
    \vspace{-0.2cm}
\end{table}

\begin{figure}[th]
    \centering
    \includegraphics[width=\linewidth]{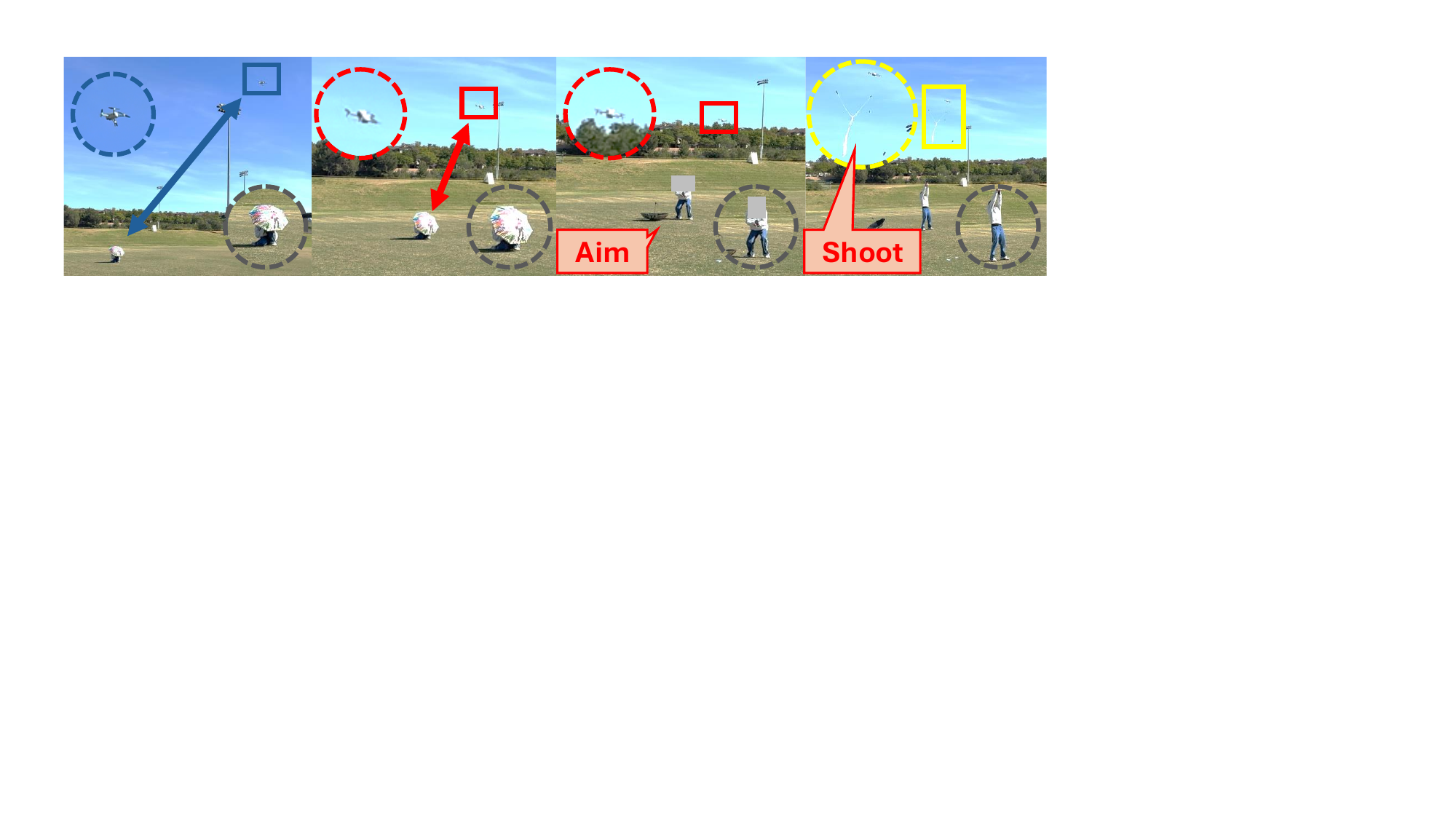}
    \caption{End-to-end FlyTrap-enabled DPA demonstration: \textit{A1: drone capturing}. We use a net gun to shoot the DJI Mini 4 Pro with \textit{Active Tracking} feature when the tracking distance is pulling closer to the attacker. Zoomed-in view in the figure with dotted borders for clarity.}
    \label{fig:commercial-demo2}
    \vspace{-0.3cm}
\end{figure}

\subsubsection{Experiment Results}
We find that one of the umbrellas, i.e., adversarial pattern against MixFormer shown in Fig.~\ref{fig:adv-pattern} (a), can successfully deceive all three tested drones. As shown in Table~\ref{tab:commercial-masr}, we find that with FlyTrap, we can capture or launch sensor attacks on the DJI Mini 4 Pro with a success rate of 60\% and 30\%, respectively. In Fig.~\ref{fig:dji_ui}, we record the screen of the DJI remote controller to verify that the distance-pulling is indeed caused by a shrunk box instead of other factors. In Fig.~\ref{fig:commercial-demo2}, we show an end-to-end DPA-enabled drone-capturing attack. After conducting the DPA, the tracking distance is shortened largely, allowing the attacker to aim and then shoot the capturing net against the drone. For DJI Neo and HoverAir, we observe that these two ultra-light drones are easy to crash, with a success rate of 60\% and 80\%, respectively. We suspect they lack the tracking state-verification mechanisms found in more advanced models like the DJI Mini 4 Pro~\cite{dji_activetrack_tracking_state}. Thus, they try to catch up with the ``shrunk'' object at a relatively high speed, causing the observed collision. Fig.~\ref{fig:commercial-demo} showcases an end-to-end DPA-enabled crashing attack on the HoverAir drone. The attacker can physically hit the drone using the umbrella to cause a collision and crash. We also empirically find that crouching down to cover the whole body can increase the attack success rate, as shown in Fig.~\ref{fig:dji_ui} and \ref{fig:commercial-demo2}.

The failure of other umbrellas might be due to the white-box model, which they are optimized against, being convolutional-based models. Ma et al. find that adversarial examples generated using Transformer-based models tend to be more transferable than CNN-based models~\cite{ma2023transferable}. We also find that adversarial patches against SiamRPN, which is a CNN-based SOT model, show more model-specific patterns (second to fourth umbrellas in Fig.~\ref{fig:commercial-setups} (a)) compared to the human-shape pattern from MixFormer (umbrella in Fig.~\ref{fig:adv-pattern}). Further discussion can be seen in Appendix~\ref{app:commercial-justify}.

Additionally, we find the black-box commercial results consistent with transferability experiments (Section~\ref{sec:transfer}): the pattern with a human-like shape, with the highest transferability among others (18.2\%), can potentially transfer to commercial drones. Despite only one of our crafted umbrellas succeeding, we reveal the first demonstration of the proposed DPA vulnerabilities in deployed commercial drone systems, thus justifying the security problem we identified, and demonstrate two out of three use cases of DPA (\textit{A1} and \textit{A3} in Section~\ref{sec:problem}). We have already performed responsible vulnerability disclosure to the corresponding manufacturers (Section~\ref{sec:ethics}).

\begin{figure}[t]
    \centering
    \includegraphics[width=\linewidth]{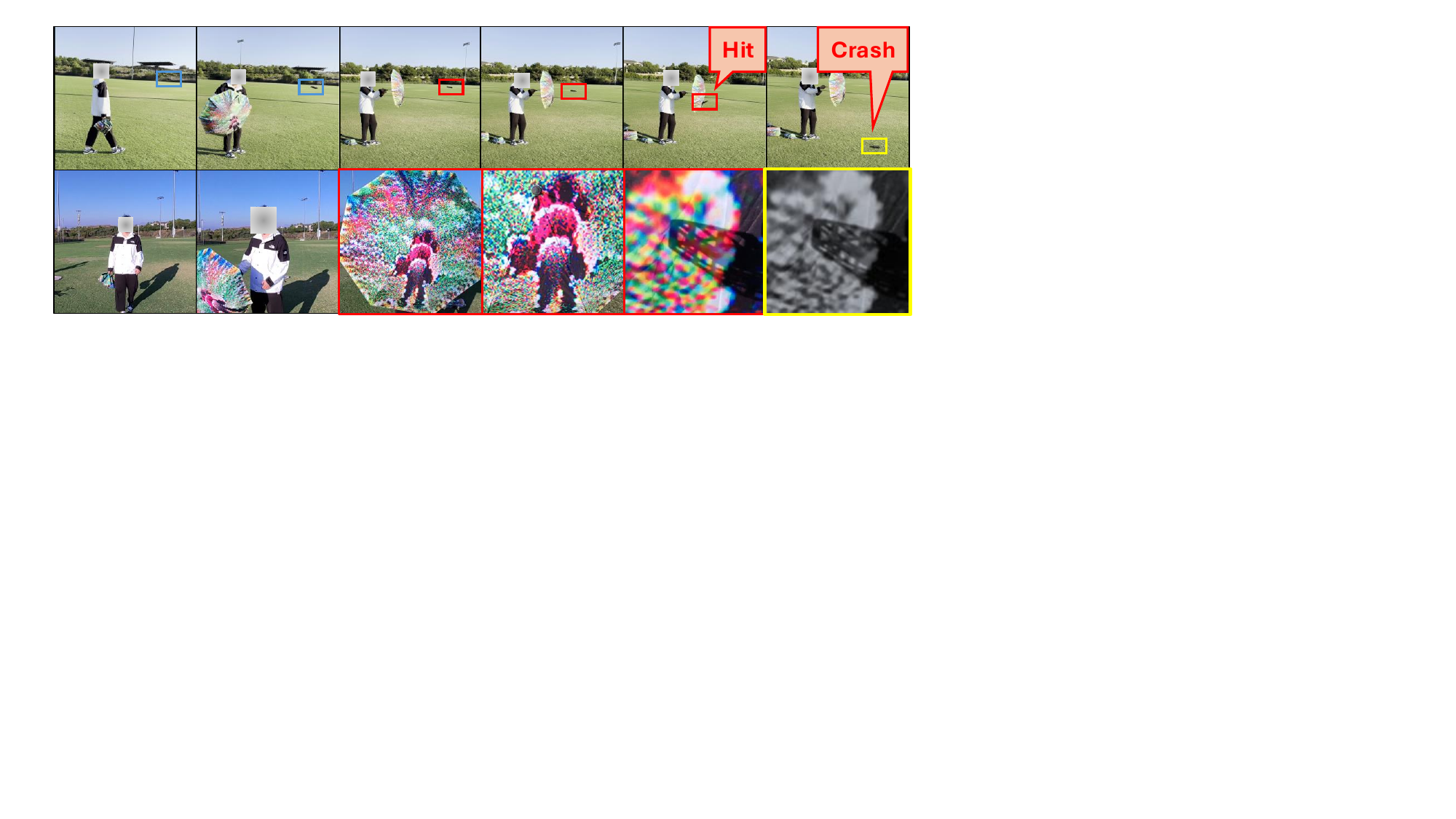}
    \caption{End-to-end FlyTrap-enabled DPA demonstration: \textit{A3: drone crashing}. We use the umbrella to hit the HoverAir drone with \textit{Dolly Track} feature when the tracking distance is pulled within the hitting distance to the attacker.}
    \label{fig:commercial-demo}
    \vspace{-0.5cm}
\end{figure}

\subsection{Attack Stealthiness Evaluation}

The proposed FlyTrap attack is stealthy since the umbrella is folded for most of the carrying time. In this section, we conduct a user study to further justify the stealthiness during the deployment time.

\subsubsection{Evaluation Setups}

The survey consists of two parts. In the first part, we focus on the attack vector, the umbrella: we investigate if it's uncommon to use an umbrella on a non-rainy day. Next, given the adversarial pattern, we investigate if people feel suspicious about it. We released the survey on the Prolific platform~\cite{prolific2025} with 200 participants sampled to reflect a broad demographic distribution in the United States. For the user study, we go through the IRB review process of our institution and receive confirmation of the self-exempt subject search categorization. More details can be found in the ethics discussion in Section~\ref{sec:ethics}.

\subsubsection{Experiment Results}

In terms of the usage of umbrellas, we find 78.6\% of the participants think it's not abnormal to them when seeing someone using an umbrella on a non-rainy day. In terms of the adversarial patterns, some portion of the participants might think our FlyTrap pattern is eye-catching ($\sim$13\%), and rank 3rd among all the candidate umbrellas. However, only 5.9\% of participants found the FlyTrap pattern suspicious, lower than the percentage of 26.7\% who selected ``\textit{none of the umbrellas are suspicious}''. The results suggest the FlyTrap is a deployable attack in the real world without being considered suspicious by the general public, even during the launch time. Please refer to Appendix~\ref{sec:user-study} for full results.
\section{Discussion and Limitations}
\label{sec:defense}

\subsection{Other countermeasures}

In addition to spatial-temporal consistency checking, there are model-level defense strategies such as certified robustness~\cite{levine2020randomized, xiang2021detectorguard, xiang2021patchguard, xiang2021patchguard++}, adversarial training~\cite{shafahi2019adversarial, goodfellow2014explaining, madry2017towards}, and input transformation~\cite{guo2017countering, xie2017mitigating}. Certified defenses provide theoretical guarantees of model robustness against white-box attacks, typically by ensuring bounded prediction error in the presence of adversarial perturbations. However, existing certified robustness techniques and adversarial training methods primarily target misclassification attacks, making their direct application to SOT models non-trivial due to the fundamentally different attack objectives and model behaviors. To the best of our knowledge, there is no specific certified robustness or adversarial training for SOT defense. Input transformation-based defenses, in contrast, are ineffective under strong Expectation over Transformation (EoT) optimization~\cite{athalye2018synthesizing}. Jia et al.~\cite{jia2020robust} proposed a defense specifically tailored for SOT models. However, their method targets pixel-level perturbations and relies on run-time gradient-based iterative denoising. The gradient back-propagation process is well-known to be much slower than the inference, which is computationally intensive and thus impractical for real-time ATT drone applications~\cite{medium2023}. Specifically, ATT drones are more likely to lose the target if the inference speed is limited, given the temporal dependency of their working mechanisms. Therefore, future defense needs to improve the efficiency to support real-time ATT applications.

\subsection{Limitations}
\label{sec:limitation}

We acknowledge the limitation of FlyTrap's black-box transferability. Our approach is mainly designed to achieve better attack effects for white-box ATT systems, and thus might sacrifice the black-box transferability. However, it should not be the major concern given (1) the existing advancement in reverse engineering (in Section~\ref{sec:threat-model}) and (2) the existing approach to boost black-box transferability in attacking SOT~\cite{yin2024dimba, jia2021iou}. They are applicable to FlyTrap by replacing the gradient-based optimization with their black-box searching, which is out of the scope of this work. Furthermore, the black-box commercial system implementation can influence how the commercial systems react towards the attack~\cite{wang2024revisiting}. Nonetheless, we've shown the real-world existence of the vulnerabilities by exploiting the proposed DPA to capture or crash the drone (Section~\ref{sec:commercial}). We leave it as future work to improve the transferability and understand the commercial ATT systems to further investigate the real-world security problems.
\section{Conclusion}
In this paper, we conduct the first systematic study on the security of camera-based Autonomous Target Tracking (ATT) systems, with a focus on the newly proposed physical-world distance-pulling attacks (DPA). We define the problem with domain-specific goals and introduce the adversarial umbrella as a novel, real-world deployable attack vector. By designing a progressive distance-pulling strategy and controllable spatial-temporal consistency, we achieve closed-loop distance-pulling effects and spatial-temporal consistent attacks. Through a new dataset and system-level metrics, we demonstrate the attack’s high effectiveness and generalizability. Physical-world evaluations with adversarial umbrella prototypes and full-stack ATT drones, alongside black-box testing on commercial drones, reveal significant real-world implications. Given the critical security and safety concerns surrounding ATT, we hope our findings will inspire future research and community attention.

\section{Ethics Considerations}
\label{sec:ethics}

\noindent
\textbf{Disclosures.} We evaluated our attack on three commercial drones and confirmed that their ATT features are all vulnerable to the FlyTrap attack, sometimes causing high-speed collisions. To prevent negative impacts, we have performed responsible vulnerability disclosure to both affected manufacturers prior to all public releases of this work, following the ethical standards in the security community.

\noindent
\textbf{Data collection.} We collected videos of researchers (with obscured facial features) to evaluate ML models for tracking and detection. No identifiable private information was involved, and participants could not be identified. As confirmed by our IRB officers, this anonymized setup does not qualify as human subject research under federal policy~\cite{federal_register_2017_p203}.

\noindent
\textbf{User study.} We conducted anonymous visual recognition tasks on Prolific. According to the Federal Policy~\cite{federal_register_2017_p1315}, this qualifies for exempt Category-2~\cite{federal_register_2017_p1375} since no identifiable information was collected and participants faced no physical or psychological risks. Thus, IRB review was not required.

\section*{Acknowledgment}

We would like to thank Chi Zhang, Li-Chen Cheng, and the anonymous reviewers for their valuable and insightful feedback. This work was supported by the National Science Foundation under grant CNS-2145493 and by the NASA University Leadership Initiative under Award 80NSSC24M0070. All drone data and experiments presented in this work were completed before December 22, 2025.

\bibliographystyle{IEEEtranS}
\bibliography{reference}

\begin{thebibliography}{100}
\providecommand{\url}[1]{#1}
\csname url@samestyle\endcsname
\providecommand{\newblock}{\relax}
\providecommand{\bibinfo}[2]{#2}
\providecommand{\BIBentrySTDinterwordspacing}{\spaceskip=0pt\relax}
\providecommand{\BIBentryALTinterwordstretchfactor}{4}
\providecommand{\BIBentryALTinterwordspacing}{\spaceskip=\fontdimen2\font plus
\BIBentryALTinterwordstretchfactor\fontdimen3\font minus \fontdimen4\font\relax}
\providecommand{\BIBforeignlanguage}[2]{{%
\expandafter\ifx\csname l@#1\endcsname\relax
\typeout{** WARNING: IEEEtranS.bst: No hyphenation pattern has been}%
\typeout{** loaded for the language `#1'. Using the pattern for}%
\typeout{** the default language instead.}%
\else
\language=\csname l@#1\endcsname
\fi
#2}}
\providecommand{\BIBdecl}{\relax}
\BIBdecl

\bibitem{illinois_drone}
{Andrew Adams}, ``{Illinois Expands Use of Police Surveillance Drones},'' \url{https://capitolnewsillinois.com/news/illinois-expands-use-of-police-surveillance-drones}, 2023.

\bibitem{athalye2018synthesizing}
A.~Athalye, L.~Engstrom, A.~Ilyas, and K.~Kwok, ``Synthesizing robust adversarial examples,'' in \emph{International Conference on Machine Learning (ICML)}, 2018.

\bibitem{dynamictrack}
{AUTEL}, ``{What Is Autel {{EVO 2}} Dynamic Track 2.0?}'' \url{https://www.autelpilot.com/blogs/news/what-is-autel-evo-2-dynamic-track-2-0}, 2021.

\bibitem{autel_evo_lite_640t_enterprise_2025}
{Autel Robotics (via AutelPilot)}, ``{Autel EVO Lite 640T Enterprise Product Page},'' \url{https://www.autelpilot.com/collections/evo-lite-enterprise/products/autel-evo-lite-640t-enterprise}, 2025.

\bibitem{autel_dynamic_track_2024}
{AutelPilots Forum}, ``{Dynamic Track Distance Limitations},'' \url{https://autelpilots.com/threads/dynamic-track-distance.7954/}, 2024.

\bibitem{bertinetto2016fully}
L.~Bertinetto, J.~Valmadre, J.~F. Henriques, A.~Vedaldi, and P.~H. Torr, ``{Fully-Convolutional Siamese Networks for Object Tracking},'' in \emph{European Conference on Computer Vision (ECCV)}, 2016.

\bibitem{activetrack}
{Billy Kyle}, ``{DJI Mini 4 Pro {{ActiveTrack 360º}} Full Tutorial - A Brand New Experience},'' \url{https://www.youtube.com/watch?v=xM_d0FBnweY}, 2024.

\bibitem{box2015time}
G.~E. Box, G.~M. Jenkins, G.~C. Reinsel, and G.~M. Ljung, \emph{{Time Series Analysis: Forecasting and Control}}.\hskip 1em plus 0.5em minus 0.4em\relax John Wiley \& Sons, 2015.

\bibitem{cao2021invisible}
Y.~Cao, N.~Wang, C.~Xiao, D.~Yang, J.~Fang, R.~Yang, Q.~A. Chen, M.~Liu, and B.~Li, ``{Invisible for Both Camera and {{LiDAR}}: Security of Multi-Sensor Fusion Based Perception in Autonomous Driving Under Physical-World Attacks},'' in \emph{IEEE Symposium on Security and Privacy ({{SP}})}.\hskip 1em plus 0.5em minus 0.4em\relax IEEE, 2021.

\bibitem{cao2019adversarial}
Y.~Cao, C.~Xiao, B.~Cyr, Y.~Zhou, W.~Park, S.~Rampazzi, Q.~A. Chen, K.~Fu, and Z.~M. Mao, ``{Adversarial Sensor Attack on LiDAR-Based Perception in Autonomous Driving},'' in \emph{ACM SIGSAC Conference on Computer and Communications Security (CCS)}, 2019.

\bibitem{7502612}
A.~Chakrabarty, R.~Morris, X.~Bouyssounouse, and R.~Hunt, ``{Autonomous Indoor Object Tracking with the {{Parrot AR.Drone}}},'' in \emph{International Conference on Unmanned Aircraft Systems ({{ICUAS}})}, 2016.

\bibitem{chen2021unified}
X.~Chen, C.~Fu, F.~Zheng, Y.~Zhao, H.~Li, P.~Luo, and G.-J. Qi, ``{A Unified Multi-Scenario Attacking Network for Visual Object Tracking},'' in \emph{Proceedings of the AAAI Conference on Artificial Intelligence}, 2021.

\bibitem{chen2020one}
X.~Chen, X.~Yan, F.~Zheng, Y.~Jiang, S.-T. Xia, Y.~Zhao, and R.~Ji, ``{One-Shot Adversarial Attacks on Visual Tracking with Dual Attention},'' in \emph{Proceedings of the {{IEEE/CVF}} Conference on Computer Vision and Pattern Recognition (CVPR)}, 2020.

\bibitem{chen2022anti}
Y.~Chen, Z.~Li, L.~Li, S.~Ma, F.~Zhang, and C.~Fan, ``{An Anti-Drone Device based on Capture Technology},'' \emph{Biomimetic Intelligence and Robotics}, 2022.

\bibitem{cheng2017autonomous}
H.~Cheng, L.~Lin, Z.~Zheng, Y.~Guan, and Z.~Liu, ``{An Autonomous Vision-Based Target Tracking System for Rotorcraft Unmanned Aerial Vehicles},'' in \emph{IEEE/RSJ International Conference on Intelligent Robots and Systems (IROS)}.\hskip 1em plus 0.5em minus 0.4em\relax IEEE, 2017.

\bibitem{chuang2019autonomous}
H.-M. Chuang, D.~He, and A.~Namiki, ``{Autonomous Target Tracking of UAV Using High-Speed Visual Feedback},'' \emph{Applied Sciences}, 2019.

\bibitem{mavros}
M.~Contributors, ``{MAVROS: MAVLink Extendable Communication Node for ROS},'' {https://github.com/mavlink/mavros}, 2014.

\bibitem{mmcv}
M.~Contributors, ``{MMCV: OpenMMLab Computer Vision Foundation},'' \url{https://github.com/open-mmlab/mmcv}, 2018.

\bibitem{cui2022mixformer}
Y.~Cui, C.~Jiang, L.~Wang, and G.~Wu, ``{MixFormer: End-to-End Tracking with Iterative Mixed Attention},'' in \emph{Proceedings of the {{IEEE/CVF}} Conference on Computer Vision and Pattern Recognition (CVPR)}, 2022.

\bibitem{forbes_ukraine}
{David Hambling}, ``{Forbes: Ukraine Rolls Out Target-Seeking Terminator Drones},'' \url{https://www.forbes.com/sites/davidhambling/2024/03/21/ukraine-rolls-out-target-seeking-terminator-drones}.

\bibitem{net_capture}
{Defense Central}, ``{Capturing Enemy Drones Using Nets?}'' \url{https://www.youtube.com/watch?v=fDftjuGM5mM}, 2024.

\bibitem{ding2021towards}
L.~Ding, Y.~Wang, K.~Yuan, M.~Jiang, P.~Wang, H.~Huang, and Z.~J. Wang, ``{Towards Universal Physical Attacks on Single Object Tracking},'' in \emph{Proceedings of the {{AAAI}} Conference on Artificial Intelligence}, 2021.

\bibitem{djimavic2023}
DJI, ``{DJI Mavic 3 Pro},'' \url{https://store.dji.com/product/dji-mavic-3-pro?vid=137691}, 2023.

\bibitem{dji2024}
DJI, ``{DJI Mini 4 Pro},'' \url{https://www.dji.com/mini-4-pro?site=brandsite&from=homepage}, 2024.

\bibitem{dji_activetrack_tracking_state}
{DJI}, ``{DJI Mobile SDK: DJIActiveTrackTrackingState Class Reference},'' \url{https://developer.dji.com/api-reference/android-api/Components/Missions/DJIActiveTrackTrackingState.html}, 2025.

\bibitem{dji2025}
DJI, ``{DJI Neo},'' \url{https://www.dji.com/neo}, 2025.

\bibitem{dji_mini4pro_specs_2025}
{DJI}, ``{Mini 4 Pro Specifications},'' \url{https://www.dji.com/mini-4-pro/specs}, 2025.

\bibitem{dji_active_tracking_2023}
{DJI Forum}, ``{DJI Active Tracking Distance Restrictions},'' \url{https://forum.dji.com/forum.php?mod=redirect&goto=findpost&ptid=284247&pid=2972409}, 2023.

\bibitem{skydio_update_2020}
{DroneXL}, ``{Skydio 2 Software Update Offers Many New Features},'' \url{https://dronexl.co/2020/06/25/skydio-2-software-update#:~:text=Maximum%20tracking%20distance%20with%20the%20phone%20has%20increased%20from%2010%20to%2020%20meters%20(66%20feet)}, 2020.

\bibitem{everingham2010pascal}
M.~Everingham, L.~Van~Gool, C.~K.~I. Williams, J.~Winn, and A.~Zisserman, ``{The {{PASCAL}} Visual Object Classes ({{VOC}}) Challenge},'' \emph{International Journal of Computer Vision}, 2010.

\bibitem{copterexpress_clover}
C.~Express, ``{Clover: ROS-Based Framework and Raspberry Pi Image to Control PX4-Powered Drones},'' \url{https://github.com/CopterExpress/clover}, 2025.

\bibitem{fang2022alphapose}
H.-S. Fang, J.~Li, H.~Tang, C.~Xu, H.~Zhu, Y.~Xiu, Y.-L. Li, and C.~Lu, ``{AlphaPose: Whole-Body Regional Multi-Person Pose Estimation and Tracking in Real-Time},'' \emph{IEEE Transactions on Pattern Analysis and Machine Intelligence}, 2022.

\bibitem{fradi2018autonomous}
H.~Fradi, L.~Bracco, F.~Canino, and J.-L. Dugelay, ``{Autonomous Person Detection and Tracking Framework Using Unmanned Aerial Vehicles ({{UAVs}})},'' in \emph{European Signal Processing Conference ({{EUSIPCO}})}.\hskip 1em plus 0.5em minus 0.4em\relax IEEE, 2018.

\bibitem{fu2022ad}
C.~Fu, S.~Li, X.~Yuan, J.~Ye, Z.~Cao, and F.~Ding, ``{AD\textsuperscript{2} Attack: Adaptive Adversarial Attack on Real-Time {{UAV}} Tracking},'' in \emph{International Conference on Robotics and Automation ({{ICRA}})}.\hskip 1em plus 0.5em minus 0.4em\relax IEEE, 2022.

\bibitem{garcia2020autonomous}
M.~Garc{\'\i}a, R.~Caballero, F.~Gonz{\'a}lez, A.~Viguria, and A.~Ollero, ``{Autonomous Drone with Ability to Track and Capture an Aerial Target},'' in \emph{International Conference on Unmanned Aircraft Systems (ICUAS)}.\hskip 1em plus 0.5em minus 0.4em\relax IEEE, 2020.

\bibitem{goodfellow2014explaining}
I.~J. Goodfellow, J.~Shlens, and C.~Szegedy, ``{Explaining and Harnessing Adversarial Examples},'' \emph{International Conference on Learning Representations (ICLR)}, 2015.

\bibitem{guo2017countering}
C.~Guo, M.~Rana, M.~Cisse, and L.~Van Der~Maaten, ``{Countering Adversarial Images Using Input Transformations},'' \emph{International Conference on Learning Representations (ICLR)}, 2018.

\bibitem{visionguard_code2024}
X.~Han, H.~Wang, K.~Zhao, G.~Deng, Y.~Xu, H.~Liu, H.~Qiu, and T.~Zhang, ``{VisionGuard: Official Code Repository},'' \url{https://zenodo.org/records/11140958}, 2024.

\bibitem{han2024visionguard}
X.~Han, H.~Wang, K.~Zhao, G.~Deng, Y.~Xu, H.~Liu, H.~Qiu, and T.~Zhang, ``{VisionGuard: Secure and Robust Visual Perception of Autonomous Vehicles in Practice},'' in \emph{{{ACM SIGSAC}} Conference on Computer and Communications Security}, 2024.

\bibitem{he2016deep}
K.~He, X.~Zhang, S.~Ren, and J.~Sun, ``{Deep Residual Learning for Image Recognition},'' in \emph{Proceedings of the {{IEEE}} Conference on Computer Vision and Pattern Recognition (CVPR)}, 2016.

\bibitem{hochreiter1997long}
S.~Hochreiter, ``{Long Short-Term Memory},'' \emph{Neural Computation {{MIT}}-Press}, 1997.

\bibitem{holybro_px4_development_kit}
{Holybro}, ``{PX4 Development Kit X500 V2},'' \url{https://holybro.com/collections/x500-kits/products/px4-development-kit-x500-v2}, 2025.

\bibitem{hoverair2025}
HOVERAir, ``{HOVERAir Camera Drone Ultra-Light Self-Piloting},'' \url{https://www.amazon.com/HOVERAir-Camera-Drone-Ultra-Light-Self-Piloting/dp/B0DLGRP5PQ}, 2025.

\bibitem{howard2017mobilenets}
A.~G. Howard, ``{MobileNets: Efficient Convolutional Neural Networks for Mobile Vision Applications},'' \emph{arXiv preprint arXiv:1704.04861}, 2017.

\bibitem{canonsburg_drone_stalking}
S.~Ingram, ``{Canonsburg Man Charged with Stalking After Allegedly Using Drone to Follow Woman},'' \url{https://www.wtae.com/article/canonsburg-drone-stalking-washington-county/61100793}, 2024.

\bibitem{intel_realsense_d435i}
{Intel Corporation}, ``{Intel RealSense Depth Camera D435i},'' \url{https://www.intelrealsense.com/depth-camera-d435i/}, 2025.

\bibitem{ji2021poltergeist}
X.~Ji, Y.~Cheng, Y.~Zhang, K.~Wang, C.~Yan, W.~Xu, and K.~Fu, ``{Poltergeist: Acoustic Adversarial Machine Learning Against Cameras and Computer Vision},'' in \emph{2021 IEEE Symposium on Security and Privacy (SP)}.\hskip 1em plus 0.5em minus 0.4em\relax IEEE, 2021.

\bibitem{jia2020robust}
S.~Jia, C.~Ma, Y.~Song, and X.~Yang, ``{Robust Tracking against Adversarial Attacks},'' in \emph{European Conference on Computer Vision (ECCV)}.\hskip 1em plus 0.5em minus 0.4em\relax Springer, 2020.

\bibitem{jia2021iou}
S.~Jia, Y.~Song, C.~Ma, and X.~Yang, ``{IoU Attack: Towards Temporally Coherent Black-Box Adversarial Attack for Visual Object Tracking},'' in \emph{Proceedings of the IEEE/CVF Conference on Computer Vision and Pattern Recognition (CVPR)}, 2021.

\bibitem{jia2022fooling}
W.~Jia, Z.~Lu, H.~Zhang, Z.~Liu, J.~Wang, and G.~Qu, ``{Fooling the Eyes of Autonomous Vehicles: Robust Physical Adversarial Examples Against Traffic Sign Recognition Systems},'' \emph{Network and Distributed System Security (NDSS) Symposium}, 2022.

\bibitem{jiao2021end}
R.~Jiao, H.~Liang, T.~Sato, J.~Shen, Q.~A. Chen, and Q.~Zhu, ``{End-to-End Uncertainty-Based Mitigation of Adversarial Attacks to Automated Lane Centering},'' in \emph{IEEE Intelligent Vehicles Symposium ({{IV}})}.\hskip 1em plus 0.5em minus 0.4em\relax IEEE, 2021.

\bibitem{us-cbp_drone}
{John, Davis}, ``{U.S. Customs and Border Protection: Small but Mighty: Border Patrol’s use of small drones is a game changer in border security},'' \url{https://www.cbp.gov/frontline/cbp-small-drones-program}.

\bibitem{new_jersey_drone_incursions}
{Joseph Trevithick}, ``{New Jersey Base Confirms Multiple Past Drone Incursions… By Contraband Smugglers},'' \url{https://news.yahoo.com/news/jersey-confirms-multiple-past-drone-211029636.html}, 2024.

\bibitem{kendall2014on}
A.~G. Kendall, N.~N. Salvapantula, and K.~A. Stol, ``{On-Board Object Tracking Control of a Quadcopter with Monocular Vision},'' in \emph{International Conference on Unmanned Aircraft Systems ({{ICUAS}})}, 2014.

\bibitem{new_auto}
{Khari Johnson}, ``{This New Autonomous Drone for Cops Can Track You in the Dark},'' \url{https://www.wired.com/story/new-autonomous-drone-for-cops-can-track-you-in-the-dark/}, 2023.

\bibitem{koenig2004design}
N.~Koenig and A.~Howard, ``{Design and Use Paradigms for Gazebo, an Open-Source Multi-Robot Simulator},'' in \emph{IEEE/RSJ International Conference on Intelligent Robots and Systems (IROS)}.\hskip 1em plus 0.5em minus 0.4em\relax IEEE, 2004.

\bibitem{krizhevsky2012imagenet}
A.~Krizhevsky, I.~Sutskever, and G.~E. Hinton, ``{ImageNet Classification with Deep Convolutional Neural Networks},'' \emph{Advances in Neural Information Processing Systems (NeurIPS)}, 2012.

\bibitem{levine2020randomized}
A.~Levine and S.~Feizi, ``{(De)Randomized Smoothing for Certifiable Defense Against Patch Attacks},'' \emph{Advances in Neural Information Processing Systems (NeurIPS)}, 2020.

\bibitem{li2019siamrpn++}
B.~Li, W.~Wu, Q.~Wang, F.~Zhang, J.~Xing, and J.~Yan, ``{SiamRPN++: Evolution of Siamese Visual Tracking with Very Deep Networks},'' in \emph{Proceedings of the {{IEEE/CVF}} Conference on Computer Vision and Pattern Recognition (CVPR)}, 2019.

\bibitem{li2018high}
B.~Li, J.~Yan, W.~Wu, Z.~Zhu, and X.~Hu, ``{High Performance Visual Tracking with Siamese Region Proposal Network},'' in \emph{Proceedings of the {{IEEE}} Conference on Computer Vision and Pattern Recognition (CVPR)}, 2018.

\bibitem{liang2020efficient}
S.~Liang, X.~Wei, S.~Yao, and X.~Cao, ``{Efficient Adversarial Attacks for Visual Object Tracking},'' in \emph{European Conference on Computer Vision (ECCV)}.\hskip 1em plus 0.5em minus 0.4em\relax Springer, 2020.

\bibitem{lin2014microsoft}
T.-Y. Lin, M.~Maire, S.~Belongie, J.~Hays, P.~Perona, D.~Ramanan, P.~Doll{\'a}r, and C.~L. Zitnick, ``{Microsoft COCO: Common Objects in Context},'' in \emph{European Conference on Computer Vision (ECCV)}.\hskip 1em plus 0.5em minus 0.4em\relax Springer, 2014.

\bibitem{liu2016ssd}
W.~Liu, D.~Anguelov, D.~Erhan, C.~Szegedy, S.~Reed, C.-Y. Fu, and A.~C. Berg, ``{SSD: Single Shot Multibox Detector},'' in \emph{European Conference on Computer Vision (ECCV)}.\hskip 1em plus 0.5em minus 0.4em\relax Springer, 2016.

\bibitem{liu2016delving}
Y.~Liu, X.~Chen, C.~Liu, and D.~Song, ``{Delving into Transferable Adversarial Examples and Black-Box Attacks},'' \emph{International Conference on Learning Representations (ICLR)}, 2016.

\bibitem{liu2023decompiling}
Z.~Liu, Y.~Yuan, S.~Wang, X.~Xie, and L.~Ma, ``{Decompiling x86 Deep Neural Network Executables},'' in \emph{32nd USENIX Security Symposium (USENIX Security)}, 2023.

\bibitem{ma2023transferable}
W.~Ma, Y.~Li, X.~Jia, and W.~Xu, ``{Transferable Adversarial Attack for Both Vision Transformers and Convolutional Networks via Momentum Integrated Gradients},'' in \emph{Proceedings of the IEEE/CVF International Conference on Computer Vision (ICCV)}, 2023.

\bibitem{madry2017towards}
A.~Madry, A.~Makelov, L.~Schmidt, D.~Tsipras, and A.~Vladu, ``{Towards Deep Learning Models Resistant to Adversarial Attacks},'' \emph{arXiv Preprint {{arXiv:1706.06083}}}, 2017.

\bibitem{percepguard_code2023}
Y.~Man, R.~Muller, M.~Li, Z.~B. Celik, and R.~Gerdes, ``{PercepGuard: Official Code Repository},'' \url{https://github.com/Harry1993/PercepGuard}, 2023.

\bibitem{man2023person}
Y.~Man, R.~Muller, M.~Li, Z.~B. Celik, and R.~Gerdes, ``{That Person Moves Like A Car: Misclassification Attack Detection for Autonomous Systems Using Spatiotemporal Consistency},'' in \emph{32nd USENIX Security Symposium (USENIX Security)}, 2023.

\bibitem{reuters_ukraine}
{Mariano, Zafra and Max, Hunder and Anurag, Rao and Sudev, Kiyada}, ``{Reuters: How Drone Combat in Ukraine is Changing Warfare},'' \url{https://www.reuters.com/graphics/UKRAINE-CRISIS/DRONES/dwpkeyjwkpm/}.

\bibitem{medium2023}
{Marvelous Catawba Otter}, ``{A Brief Discussion: The Computational Cost of Backward Propagation Is Approximately Twice That of Forward Propagation},'' \url{https://medium.com/@marvelous_catawba_otter_200/a-brief-discussion-the-computational-cost-of-backward-propagation-is-approximately-twice-that-of-5dd0eac9b389}, 2023.

\bibitem{mavlink}
{MAVLink Contributors}, ``{MAVLink Micro Air Vehicle Communication Protocol},'' \url{https://mavlink.io/en/}, 2009.

\bibitem{muller2022physical}
R.~Muller, Y.~Man, Z.~B. Celik, M.~Li, and R.~Gerdes, ``{Physical Hijacking Attacks Against Object Trackers},'' in \emph{ACM SIGSAC Conference on Computer and Communications Security (CCS)}, 2022.

\bibitem{vogues_code2023}
R.~Muller, Y.~Man, M.~Li, R.~Gerdes, J.~Petit, and Z.~B. Celik, ``{VOGUES: Official Code Repository},'' \url{https://github.com/purseclab/VOGUES}, 2024.

\bibitem{muller2024vogues}
R.~Muller, Y.~Man, M.~Li, R.~Gerdes, J.~Petit, and Z.~B. Celik, ``{VOGUES: Validation of Object Guise using Estimated Components},'' in \emph{33rd USENIX Security Symposium (USENIX Security)}, 2024.

\bibitem{nakka2022universal}
K.~K. Nakka and M.~Salzmann, ``{Universal, Transferable Adversarial Perturbations for Visual Object Trackers},'' in \emph{European Conference on Computer Vision (ECCV)}.\hskip 1em plus 0.5em minus 0.4em\relax Springer, 2022.

\bibitem{netgun2025}
NetGun, ``{NetGun Official Website},'' \url{https://www.net-gun.com/}.

\bibitem{ultranetHD2025}
NetGun, ``{UltraNet HD -- Large Animal Target Net Gun},'' \url{https://netgun.com/netgun-info/ultranet-hd-large-animal-target-net-gun}.

\bibitem{nvidia2019jetson}
NVIDIA, ``{NVIDIA Jetson Nano},'' \url{https://www.nvidia.com/en-us/autonomous-machines/embedded-systems/jetson-nano/product-development}, 2019.

\bibitem{drone_forensics_paraben}
{Paraben Corporation}, ``{Drone Forensics: Navigating the New Frontier of Digital Evidence},'' \url{https://paraben.com/drone-forensics-navigating-the-new-frontier-of-digital-evidence/?utm_source=chatgpt.com}, 2024.

\bibitem{logan_airport_drone_incident}
{People Staff}, ``{Two Arrested for Flying Drones Dangerously Close to Logan Airport, Boston Police Report},'' \url{https://people.com/drones-two-arrested-dangerously-close-logan-airport-boston-police-8761977}, 2024.

\bibitem{pestana2013vision}
J.~Pestana, J.~L. Sanchez-Lopez, P.~Campoy, and S.~Saripalli, ``Vision-based {{GPS}}-denied object tracking and following for unmanned aerial vehicles,'' in \emph{IEEE International Symposium on Safety, Security, and Rescue Robotics ({{SSRR}})}.\hskip 1em plus 0.5em minus 0.4em\relax IEEE, 2013.

\bibitem{pestana2014computer}
J.~Pestana, J.~L. Sanchez-Lopez, S.~Saripalli, and P.~Campoy, ``{Computer Vision-Based General Object Following for GPS-Denied Multirotor Unmanned Vehicles},'' in \emph{American Control Conference}.\hskip 1em plus 0.5em minus 0.4em\relax IEEE, 2014.

\bibitem{petit2015remote}
J.~Petit, B.~Stottelaar, M.~Feiri, and F.~Kargl, ``{Remote Attacks on Automated Vehicles Sensors: Experiments on Camera and LiDAR},'' \emph{Black Hat Europe}, 2015.

\bibitem{potensic_atom_tracking_2023}
{Potensic}, ``{How to Better Use Potensic’s Atom’s Visual Tracking},'' \url{https://store.potensic.com/blogs/news/how-to-better-use-atoms-visual-tracking?srsltid=AfmBOoqWRqzrmDs16OWVJwVc6qtEd35r3qeXj82le2gPjJCAuXZ8eBrl}.

\bibitem{potensic_atom2_specs_2025}
{Potensic}, ``{ATOM-2 Specifications},'' \url{https://www.potensic.com/atom-2.html}, 2025.

\bibitem{prolific2025}
{Prolific}, ``{Prolific: Participant Recruitment for Research},'' \url{https://www.prolific.com}, 2025.

\bibitem{px4}
{PX4 Contributors}, ``{PX4},'' \url{https://github.com/PX4/PX4-Autopilot}.

\bibitem{quigley2009ros}
M.~Quigley, K.~Conley, B.~Gerkey, J.~Faust, T.~Foote, J.~Leibs, R.~Wheeler, A.~Y. Ng \emph{et~al.}, ``{ROS: An Open-Source Robot Operating System},'' in \emph{ICRA Workshop on Open Source Software}.\hskip 1em plus 0.5em minus 0.4em\relax Kobe, Japan, 2009.

\bibitem{redmon2018yolov3}
J.~Redmon, ``{YOLOv3: An Incremental Improvement},'' \emph{arXiv preprint arXiv:1804.02767}, 2018.

\bibitem{federal_register_2017_p1315}
F.~Register, ``{Federal Register Document 2017-01058, Paragraph 1315},'' \url{https://www.federalregister.gov/d/2017-01058/p-1315}, 2017.

\bibitem{federal_register_2017_p1375}
F.~Register, ``{Federal Register Document 2017-01058, Paragraph 1375},'' \url{https://www.federalregister.gov/d/2017-01058/p-1375}, 2017.

\bibitem{federal_register_2017_p203}
F.~Register, ``{Federal Register Document 2017-01058, Paragraph 203},'' \url{https://www.federalregister.gov/d/2017-01058/p-203}, 2017.

\bibitem{reuters_drones_2024}
Reuters, ``{Explainer: How Drones Are Being Used in the Ukraine War},'' \url{https://www.reuters.com/graphics/UKRAINE-CRISIS/DRONES/dwpkeyjwkpm/}, 2024.

\bibitem{autel2020}
A.~Robotics, ``{Autel EVO 2},'' \url{https://shop.autelrobotics.com/collections/autel-evo-ii-series}, 2020.

\bibitem{sandler2018mobilenetv2}
M.~Sandler, A.~Howard, M.~Zhu, A.~Zhmoginov, and L.-C. Chen, ``{MobileNetV2: Inverted Residuals and Linear Bottlenecks},'' in \emph{Proceedings of the IEEE Conference on Computer Vision and Pattern Recognition (CVPR)}, 2018.

\bibitem{sato2023lidar}
T.~Sato, Y.~Hayakawa, R.~Suzuki, Y.~Shiiki, K.~Yoshioka, and Q.~A. Chen, ``{LiDAR Spoofing Meets the New-Gen: Capability Improvements, Broken Assumptions, and New Attack Strategies},'' \emph{Network and Distributed System Security (NDSS) Symposium}, 2024.

\bibitem{sato2021dirty}
T.~Sato, J.~Shen, N.~Wang, Y.~Jia, X.~Lin, and Q.~A. Chen, ``{Dirty Road Can Attack: Security of Deep Learning Based Automated Lane Centering Under {{Physical-World}} Attack},'' in \emph{30th {{USENIX}} Security Symposium ({{USENIX}} Security)}, 2021.

\bibitem{sato2023intriguing}
T.~Sato, J.~Yue, N.~Chen, N.~Wang, and Q.~A. Chen, ``{Intriguing Properties of Diffusion Models: An Empirical Study of the Natural Attack Capability in Text-to-Image Generative Models},'' in \emph{Proceedings of the IEEE/CVF conference on computer vision and pattern recognition (CVPR)}, 2024.

\bibitem{schiller2023drone}
N.~Schiller, M.~Chlosta, M.~Schloegel, N.~Bars, T.~Eisenhofer, T.~Scharnowski, F.~Domke, L.~Sch{\"o}nherr, and T.~Holz, ``{Drone Security and the Mysterious Case of DJI's DroneID},'' in \emph{Network and Distributed System Security (NDSS) Symposium}, 2023.

\bibitem{shafahi2019adversarial}
A.~Shafahi, M.~Najibi, M.~A. Ghiasi, Z.~Xu, J.~Dickerson, C.~Studer, L.~S. Davis, G.~Taylor, and T.~Goldstein, ``{Adversarial Training for Free!}'' \emph{Advances in Neural Information Processing Systems (NeurIPS)}, 2019.

\bibitem{skydio2023}
Skydio, ``{Skydio 2 Plus},'' \url{https://www.skydio.com/skydio-2-plus-enterprise}, 2022.

\bibitem{skydio_2_plus_enterprise_2025}
{Skydio Inc.}, ``{Skydio 2+ Enterprise Product Page},'' \url{https://www.skydio.com/skydio-2-plus-enterprise}, 2025.

\bibitem{son2015rocking}
Y.~Son, H.~Shin, D.~Kim, Y.~Park, J.~Noh, K.~Choi, J.~Choi, and Y.~Kim, ``{Rocking Drones with Intentional Sound Noise on Gyroscopic Sensors},'' in \emph{24th USENIX Security Symposium (USENIX Security)}, 2015.

\bibitem{soomro2012ucf101}
K.~Soomro, A.~R. Zamir, and M.~Shah, ``{UCF101: A Dataset of 101 Human Actions Classes from Videos in the Wild},'' \emph{arXiv preprint arXiv:1212.0402}, 2012.

\bibitem{netgunstore2025}
T.~N.~G. Store, ``{The Net Gun Mega Pack -- Most Popular},'' \url{https://thenetgunstore.com/products/the-net-gun-mega-pack-most-popular?variant=9806433189945}, 2025.

\bibitem{forbes_skydio}
{Thomas, Brewster}, ``{Forbes: Founded By Ex-Google Engineers, Meet The Drone Startup Scoring Millions In Government Surveillance Contracts},'' \url{https://www.forbes.com/sites/thomasbrewster/2020/06/03/funded-by-kevin-durant-and-founded-by-ex-googlers-this-drone-startup-is-scoring-millions-in-government-surveillance-contracts/}.

\bibitem{tu2018injected}
Y.~Tu, Z.~Lin, I.~Lee, and X.~Hei, ``{Injected and Delivered: Fabricating Implicit Control Over Actuation Systems by Spoofing Inertial Sensors},'' in \emph{27th USENIX Security Symposium (USENIX Security)}, 2018.

\bibitem{vaswani2017attention}
A.~Vaswani, N.~Shazeer, N.~Parmar, J.~Uszkoreit, L.~Jones, A.~N. Gomez, {\L}.~Kaiser, and I.~Polosukhin, ``{Attention Is All You Need},'' \emph{Advances in Neural Information Processing Systems (NeurIPS)}, 2017.

\bibitem{wan2022too}
Z.~Wan, J.~Shen, J.~Chuang, X.~Xia, J.~Garcia, J.~Ma, and Q.~A. Chen, ``{Too Afraid to Drive: Systematic Discovery of Semantic DoS Vulnerability in Autonomous Driving Planning Under Physical-World Attacks},'' in \emph{Network and Distributed System Security (NDSS) Symposium}, 2022.

\bibitem{wang2023does}
N.~Wang, Y.~Luo, T.~Sato, K.~Xu, and Q.~A. Chen, ``{Does Physical Adversarial Example Really Matter to Autonomous Driving? Towards System-Level Effect of Adversarial Object Evasion Attack},'' in \emph{Proceedings of the IEEE/CVF International Conference on Computer Vision (ICCV)}, 2023.

\bibitem{wang2024revisiting}
N.~Wang, S.~Xie, T.~Sato, Y.~Luo, K.~Xu, and Q.~A. Chen, ``{Revisiting Physical-World Adversarial Attack on Traffic Sign Recognition: A Commercial Systems Perspective},'' in \emph{Network and Distributed System Security (NDSS) Symposium}, 2025.

\bibitem{wiyatno2019physical}
R.~R. Wiyatno and A.~Xu, ``{Physical Adversarial Textures That Fool Visual Object Tracking},'' in \emph{Proceedings of the {{IEEE/CVF}} International Conference on Computer Vision (ICCV)}, 2019.

\bibitem{wu2022dnd}
R.~Wu, T.~Kim, D.~J. Tian, A.~Bianchi, and D.~Xu, ``{DnD}: A cross-architecture deep neural network decompiler,'' in \emph{31st USENIX Security Symposium (USENIX Security)}, 2022.

\bibitem{xiang2021patchguard}
C.~Xiang, A.~N. Bhagoji, V.~Sehwag, and P.~Mittal, ``{PatchGuard: A Provably Robust Defense Against Adversarial Patches via Small Receptive Fields and Masking},'' in \emph{30th USENIX Security Symposium (USENIX Security)}, 2021.

\bibitem{xiang2021detectorguard}
C.~Xiang and P.~Mittal, ``{DetectorGuard: Provably Securing Object Detectors Against Localized Patch Hiding Attacks},'' in \emph{ACM SIGSAC Conference on Computer and Communications Security (CCS)}, 2021.

\bibitem{xiang2021patchguard++}
C.~Xiang and P.~Mittal, ``{PatchGuard++: Efficient Provable Attack Detection Against Adversarial Patches},'' in \emph{ICLR Workshop on Security and Safety in Machine Learning Systems}, 2021.

\bibitem{xie2017mitigating}
C.~Xie, J.~Wang, Z.~Zhang, Z.~Ren, and A.~Yuille, ``{Mitigating Adversarial Effects Through Randomization},'' \emph{International Conference on Learning Representations (ICLR)}, 2018.

\bibitem{flytrap_adv_2025}
S.~Xie, M.~H. Fakih, J.~Lu, F.~Alshammari, N.~Wang, T.~Sato, H.~Bouzidi, M.~A. Al~Faruque, and Q.~A. Chen, ``{Flytrap Project Website},'' \url{https://sites.google.com/view/av-ioat-sec/flytrap}, 2025.

\bibitem{xu2024physcout}
Y.~Xu, G.~Deng, X.~Han, G.~Li, H.~Qiu, and T.~Zhang, ``{PhyScout: Detecting Sensor Spoofing Attacks via Spatio-Temporal Consistency},'' in \emph{{{ACM SIGSAC}} Conference on Computer and Communications Security (CCS)}, 2024.

\bibitem{yan2020cooling}
B.~Yan, D.~Wang, H.~Lu, and X.~Yang, ``{Cooling-Shrinking Attack: Blinding the Tracker with Imperceptible Noises},'' in \emph{Proceedings of the {{IEEE/CVF}} Conference on Computer Vision and Pattern Recognition (CVPR)}, 2020.

\bibitem{yan2016can}
C.~Yan, W.~Xu, and J.~Liu, ``{Can You Trust Autonomous Vehicles: Contactless Attacks Against Sensors of Self-Driving Vehicle},'' \emph{Def Con}, 2016.

\bibitem{yan2020hijacking}
X.~Yan, X.~Chen, Y.~Jiang, S.-T. Xia, Y.~Zhao, and F.~Zheng, ``{Hijacking Tracker: A Powerful Adversarial Attack on Visual Tracking},'' in \emph{IEEE International Conference on Acoustics, Speech and Signal Processing (ICASSP)}.\hskip 1em plus 0.5em minus 0.4em\relax IEEE, 2020.

\bibitem{yin2024dimba}
X.~Yin, W.~Ruan, and J.~Fieldsend, ``{DIMBA: Discretely Masked Black-Box Attack in Single Object Tracking},'' \emph{Machine Learning}, 2024.

\bibitem{yu2020bdd100k}
F.~Yu, H.~Chen, X.~Wang, W.~Xian, Y.~Chen, F.~Liu, V.~Madhavan, and T.~Darrell, ``{BDD100K: A Diverse Driving Dataset for Heterogeneous Multitask Learning},'' in \emph{Proceedings of the IEEE/CVF Conference on Computer Vision and Pattern Recognition (CVPR)}, 2020.

\bibitem{yu2024physense}
Z.~Yu, A.~Li, R.~Wen, Y.~Chen, and N.~Zhang, ``{Physense: Defending Physically Realizable Attacks for Autonomous Systems via Consistency Reasoning},'' in \emph{ACM SIGSAC Conference on Computer and Communications Security (CCS)}, 2024.

\bibitem{zhang2019eye}
H.~Zhang, G.~Wang, Z.~Lei, and J.-N. Hwang, ``{Eye in the Sky: Drone-Based Object Tracking and 3D Localization},'' in \emph{Proceedings of the 27th {{ACM}} International Conference on Multimedia (MM)}, 2019.

\bibitem{zhang2022adversarial}
Q.~Zhang, S.~Hu, J.~Sun, Q.~A. Chen, and Z.~M. Mao, ``{On Adversarial Robustness of Trajectory Prediction for Autonomous Vehicles},'' in \emph{Proceedings of the IEEE/CVF Conference on Computer Vision and Pattern Recognition (CVPR)}, 2022.

\bibitem{zhao2019seeing}
Y.~Zhao, H.~Zhu, R.~Liang, Q.~Shen, S.~Zhang, and K.~Chen, ``{Seeing Isn't Believing: Towards More Robust Adversarial Attack Against Real-World Object Detectors},'' in \emph{{{ACM SIGSAC}} Conference on Computer and Communications Security (CCS)}, 2019.

\bibitem{zhou2022doublestar}
C.~Zhou, Q.~Yan, Y.~Shi, and L.~Sun, ``{DoubleStar: Long-Range Attack Towards Depth Estimation-Based Obstacle Avoidance in Autonomous Systems},'' in \emph{31st USENIX Security Symposium (USENIX Security)}, 2022.

\end{thebibliography}

\appendix

\subsection{Full-Stack ATT System Implementation}
\label{app:att-implement}

\begin{figure*}[t]
    \centering
    \includegraphics[width=\linewidth]{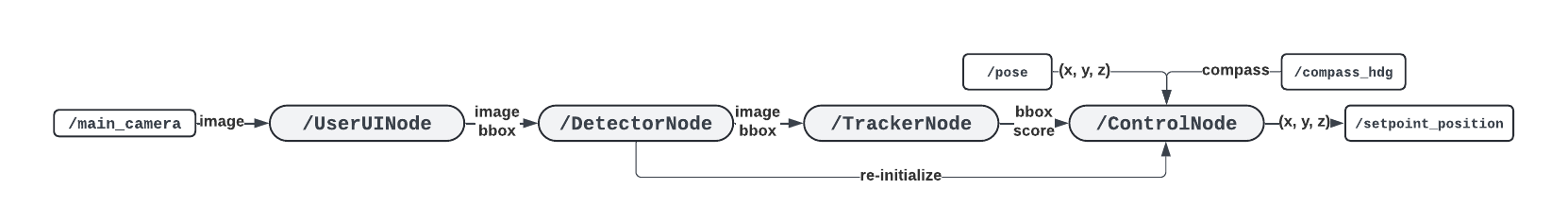}
    \caption{Overall pipeline of ROS node communication in our autonomous target tracking (ATT) drone.}
    \label{fig:rosnode}
\end{figure*}

\begin{figure*}[t!]
    \centering
    \includegraphics[width=\linewidth]{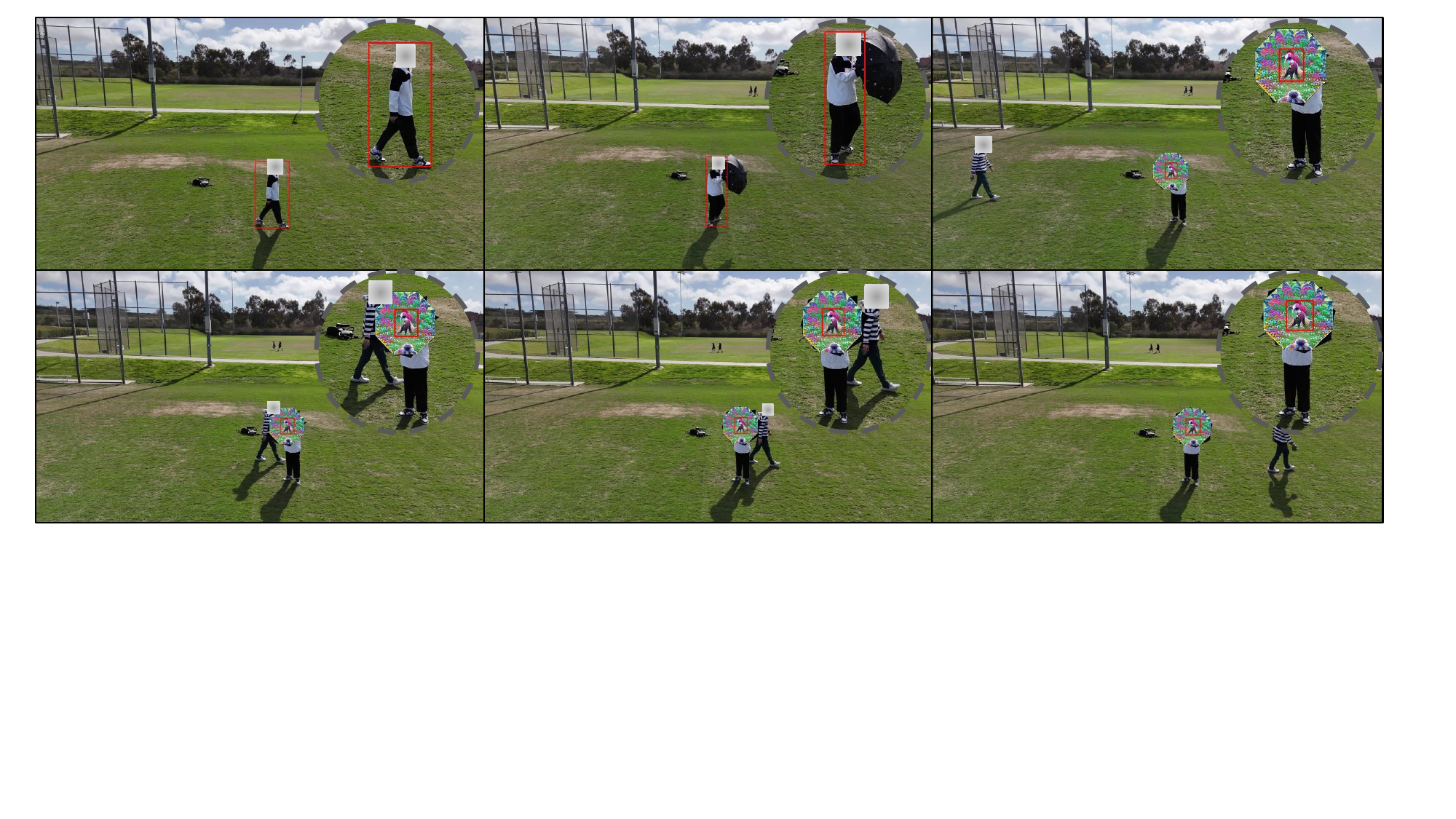}
    \caption{Environmental distraction case study. FlyTrap can achieve consistent attack effects even with the presence of an environmental distraction, e.g., another similar yet unobstructed person passing by the attacker. The umbrella pattern is applied offline and digitally, the same as the evaluation setups in Section~\ref{sec:digital-eval}. Zoomed-in view for clarity.}
    \label{fig:multi-obj}
\end{figure*}

\noindent
\textbf{Hardware setup.}
Our experimental platform is the Holybro X500 v2 drone~\cite{holybro_px4_development_kit}, a medium-lift quadcopter powered by a Pixhawk flight controller. The Pixhawk controller is equipped with onboard inertial sensors and a GPS module. We mount an NVIDIA Jetson Orin Nano~\cite{nvidia2019jetson} on the drone to handle tasks related to object detection, tracking, and control generation. The flight controller runs PX4~\cite{px4}, an open-source flight firmware, while an Intel RealSense D435i camera~\cite{intel_realsense_d435i} provides a live video stream. The Jetson creates a WiFi access point, enabling the operator to view the video via a browser-based interface; from this interface, the operator can select the intended target by drawing a bounding box shown in Fig.~\ref{fig:physical-setups}.

\noindent
\textbf{Software setup.}
We employ the Robot Operating System (ROS)~\cite{quigley2009ros} as the communication backbone for inter-module interactions within our system. The architecture of the overall pipeline is depicted in Fig.~\ref{fig:rosnode}. The system comprises four distinct ROS nodes: \texttt{UserUINode}, \texttt{DetectorNode}, \texttt{TrackerNode}, and \texttt{ControlNode}. The \texttt{UserUINode} serves as an interface, allowing users to select the target object through a simple web interface. This node subsequently publishes the selected bounding box (\texttt{bbox}) alongside the corresponding image. The \texttt{DetectorNode} processes the image and \texttt{bbox} by utilizing an SSD-MobileNet-V2 model~\cite{liu2016ssd, sandler2018mobilenetv2} to refine the user-selected \texttt{bbox}. To enhance precision, we filter the detection results to focus solely on the "person" class and calculate the Intersection over Union (IoU) between the detection \texttt{bbox} and the user-selected \texttt{bbox}, outputting the box with the highest IoU. This calibration step significantly improves the performance of the SOT model by providing an accurate target object location, which is particularly crucial when the drone is airborne, where it may be challenging for the user to select a precise \texttt{bbox}. The \texttt{TrackerNode} then takes the calibrated \texttt{bbox} and image as inputs and initiates object tracking, publishing the tracked \texttt{bbox} and the corresponding confidence score to the \texttt{ControlNode}. It should be noted that the same images are shared among \texttt{UserUINode}, \texttt{DetectorNode}, and \texttt{TrackerNode} during initialization to ensure information synchronization. Finally, the \texttt{ControlNode} processes these signals only if the confidence exceeds a predefined threshold. Specifically, the \texttt{ControlNode} determines the movement offsets along three coordinates (i.e., $x$, $y$, and $z$) based on the center position and area of the \texttt{bbox}. The control algorithm strives to maintain the target object at the center of the image while preserving its size as initially specified. In the experiments, we find the \texttt{/pose} topic published by MAVROS~\cite{mavlink, mavros} is not that accurate. Therefore, instead of publishing absolute locations, we add the offset $(\Delta x, \Delta y, \Delta z)$ to the current pose $(x, y, z)$ received from \texttt{/pose} in real time to ensure stability.

\begin{figure}
    \centering
    \includegraphics[width=0.7\linewidth]{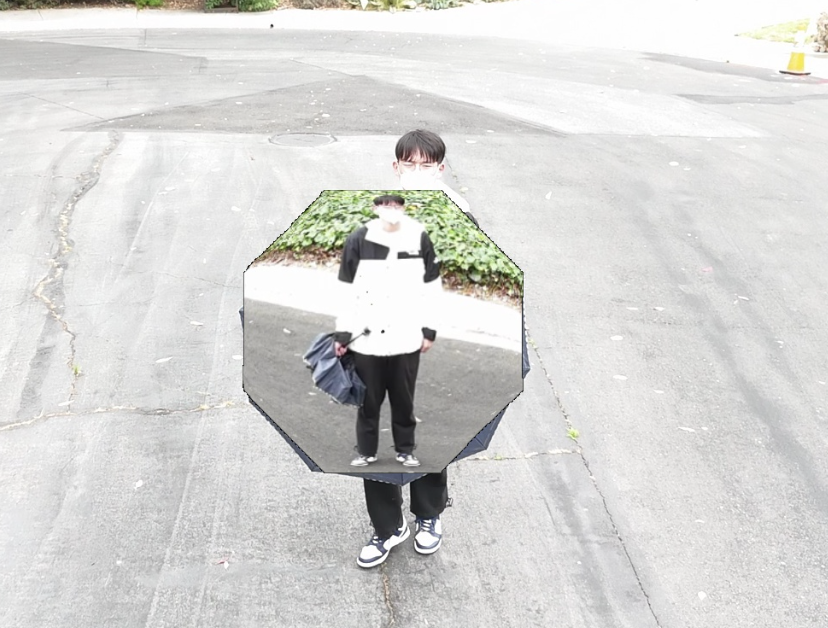}
    \caption{Target photo baseline attack (TGT) example. The attackers can print a photo of themselves to misguide the object tracker. This is considered a baseline given that the printed human naturally resembles the tracking target and is smaller than the original target. We use the images from the first frame of the training set as the default TGT. The observed distortion is induced in the rendering process to simulate real-world umbrella geometry.}
    \label{fig:tgt-visualization}
\end{figure}

\subsection{Human Figure Printing Baseline Attack}
\label{app:tgt}
In this section, we provide further information on our target photo baseline attack (TGT). As shown in Fig.~\ref{fig:tgt-visualization}, TGT means the attackers print the photos of themselves on the umbrella to misguide the ATT drone. We design the attack since it's a straightforward implementation: (1) the TGT images naturally resemble the genuine target being tracked; (2) the TGT images are smaller than the original target, which can potentially cause the shrinking effect, and (3) the attackers don't need any adversarial machine learning expertise to conduct the attack.

\begin{figure}
    \centering
    \includegraphics[width=\linewidth]{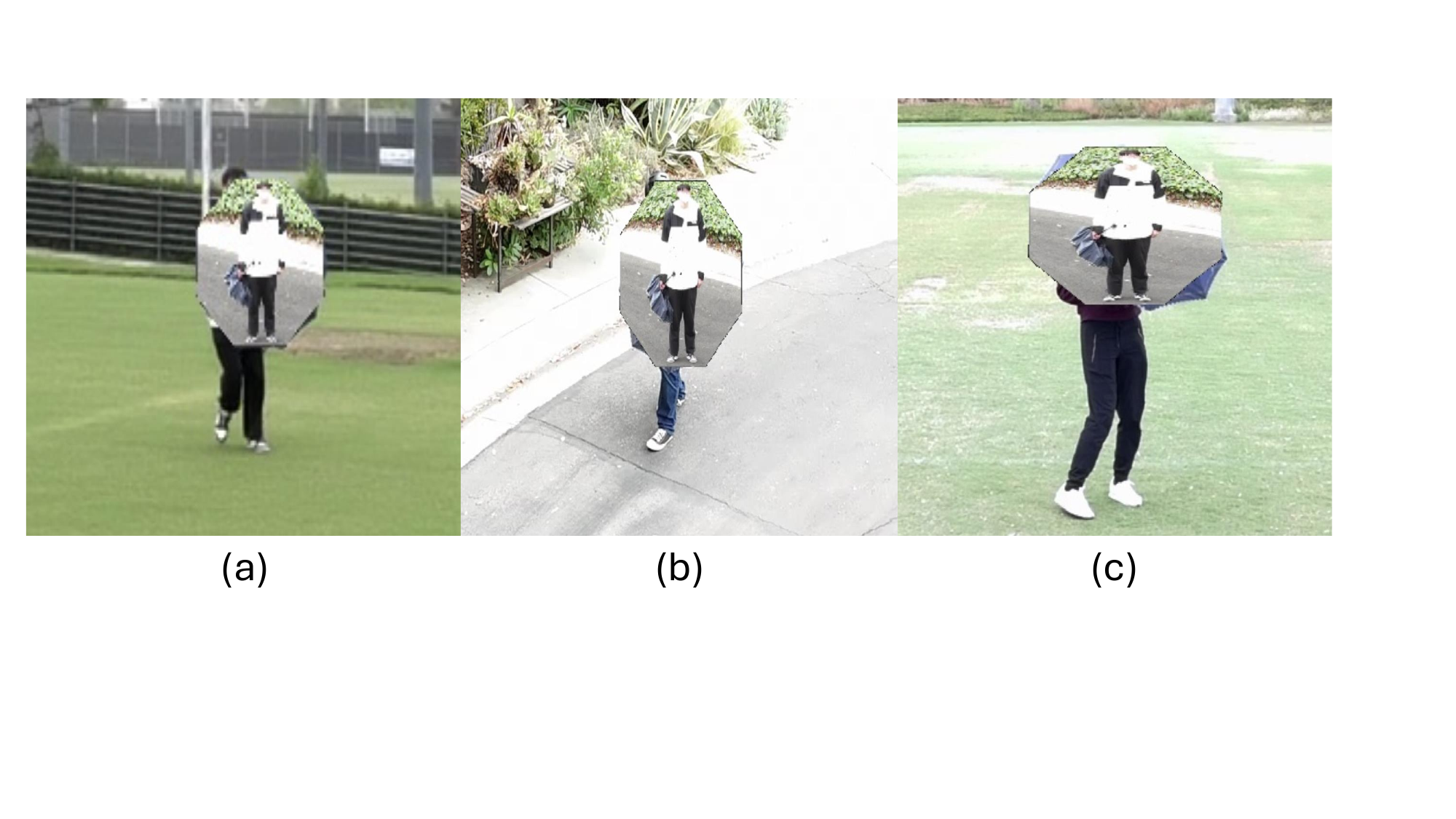}
    \caption{Target photo (TGT) universality evaluation: (a) universality to different backgrounds; (b) universality to different persons; (c) universality to different backgrounds and persons.}
    \label{fig:tgt-universal}
\end{figure}

To guarantee fair comparison with the proposed FlyTrap, we conduct a grid search to find the person/background ratio that can maximize the mASR$_\text{open}$. If the person is too large, the shrink is minor, and if the person is too small, it might not have enough features to misguide the tracker to lock on it. We randomly select four different people and four different backgrounds from the training set. Then, we generate TGT with human-figure to background ratios ranging from 0.01 to 0.9. For each TGT, we evaluate it on the test set with the same person and background (e.g., in Fig.~\ref{fig:tgt-visualization}) across four models. The results can be seen from Fig.~\ref{fig:tgt-search}. We find that Transformer-based MixFormer is more vulnerable to a smaller-sized printing target. On the other hand, SiamRPN-based models are more vulnerable to medium-sized printing targets due to their mechanism to penalize predictions that largely differ from the initialization size.

\begin{figure}
    \centering
    \includegraphics[width=\linewidth]{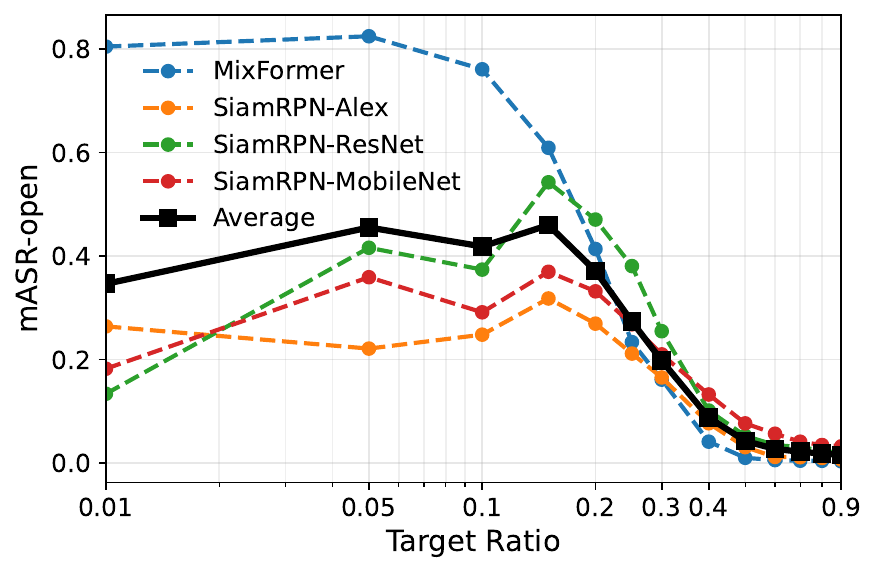}
    \caption{Target photo baseline attack. We randomly select four unique combinations of person and background from the training set, evaluating on the corresponding testing set. We adopt the ratio (i.e., 0.15) that achieves the highest mASR$_\text{open}$ as a fair baseline comparison with the proposed FlyTrap attack.}
    \label{fig:tgt-search}
\end{figure}

Regarding the universality evaluation in Section~\ref{sec:universality}, as shown in Fig.~\ref{fig:tgt-universal}, we test it with different deployed persons and/or backgrounds.

\subsection{Dataset Processing}
\label{app:data-process}

To facilitate the training process, we leverage the off-the-shelf single-object tracking model to label the videos. We also perform down-sampling to balance the amount of data in different scenarios.

\textbf{Automate labeling.} 
We need the tracked person's location to initialize the tracker. We also require the umbrella's location and size to estimate the distance, determine the shrink rate, and use it in the compositing process (Section~\ref{sec:progressive-distance-pulling}). To reduce manual efforts, we use a trained SOT that automates the labeling process. We find that these generated labels are sufficiently accurate for our simulation and optimization.

\noindent
\textbf{Down-sampling and balance scenario.}
Consecutive frames contain nearly identical visual information; thus, the raw videos are redundant in terms of scenario variety. Such redundancy increases optimization overhead without improving the generality of adversarial patterns. To address this, we down-sample the videos and balance the distribution of scenarios.

\begin{figure*}[htbp]
    \centering
    \begin{subfigure}[b]{0.23\textwidth}
        \centering
        \includegraphics[width=\textwidth]{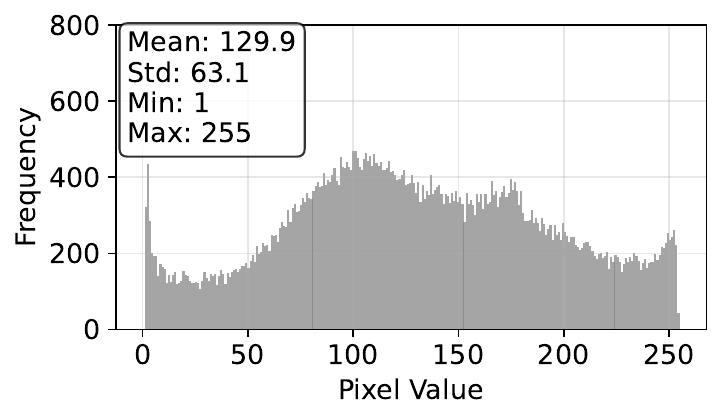}
        \caption{MixFormer}
        \label{fig:image1}
    \end{subfigure}
    \hfill
    \begin{subfigure}[b]{0.23\textwidth}
        \centering
        \includegraphics[width=\textwidth]{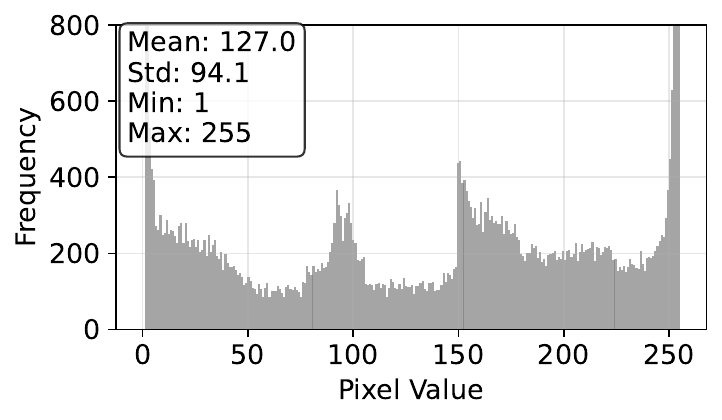}
        \caption{SiamRPN-AlexNet}
        \label{fig:image2}
    \end{subfigure}
    \hfill
    \begin{subfigure}[b]{0.23\textwidth}
        \centering
        \includegraphics[width=\textwidth]{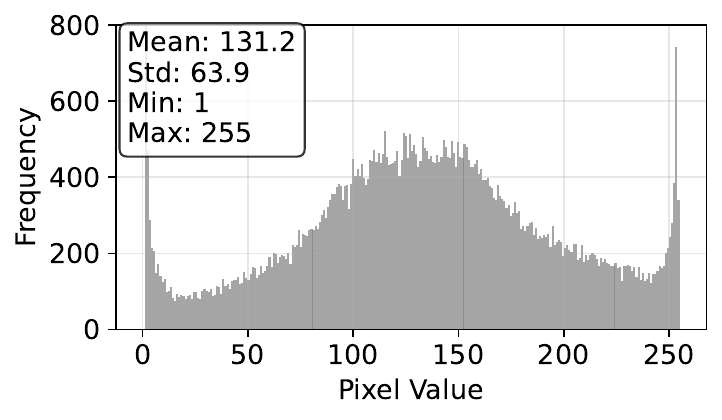}
        \caption{SiamRPN-MobileNet}
        \label{fig:image3}
    \end{subfigure}
    \hfill
    \begin{subfigure}[b]{0.23\textwidth}
        \centering
        \includegraphics[width=\textwidth]{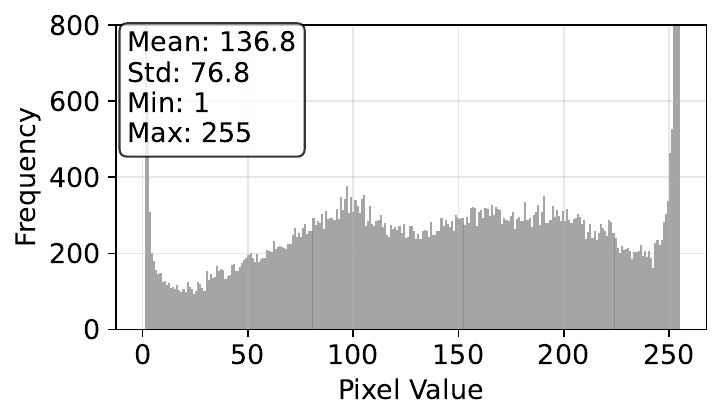}
        \caption{SiamRPN-ResNet}
        \label{fig:image4}
    \end{subfigure}
    \caption{Pixel histogram of attack patterns optimized against four different white-box single object tracking models. We find the attack patterns have similar color intensity (i.e., mean pixel value). Specifically, we find that the histogram of FlyTrap optimized against SiamRPN-MobileNet shares high similarity with MixFormer. However, patterns optimized for SiamRPN-MobileNet fail to attack commercial drones, suggesting the successful attack is due to the pattern, not color distinction.}
    \label{fig:pixel-distribution}
\end{figure*}

\subsection{Scenario with Multiple Object Case Study}
\label{sec:case-study}

Since our collected dataset mainly includes one person in the scenario, we also study a more complex scenario to illustrate the robustness of our attacks toward environmental distractions, e.g., with multiple objects in the scenario. In Fig.~\ref{fig:multi-obj}, we collect a video where the attacker is launching the attack while another person passes by the attacker closely. The scenario challenges whether FlyTrap can cause consistent DPA effects with a similar but unobstructed object (e.g., another person here) appearing in the scenario. As shown in Fig.~\ref{fig:multi-obj}, FlyTrap umbrella patterns consistently shrink to a subarea within the umbrella surface, even with the presence of another person passing by closely. Two explanations demonstrate the results. First is about how the SOT model works: SOT will search the tracked object within a subarea in the current image~\cite{li2018high, cui2022mixformer}, and the size of the search area is determined by the prediction bounding box size in the last frame. Thus, when the third person is away from the attacker, it is not presented in the SOT search area. Secondly, even when the third person is within the search area, our objective function is to maximize the prediction score (in Section~\ref{sec:adv-obj-func}), ensuring the prediction confidence on the attack area is still larger than the prediction confidence on the third person, thus being robust to similar environmental distractions.

\subsection{Real-World Attack Distance and Angle Discussion}
\label{app:distance-shrink-discussion}

For the attack distance, it is important to first note that in our problem setting, the attack distance limit should be measured and judged with respect to the tracking distance limit of the ATT feature, since the ATT feature itself is not designed for arbitrary tracking distance in the first place. Specifically, a primary factor that fundamentally affects both the maximum tracking distance of ATT systems and, consequently, the maximum attack distance of FlyTrap is the resolution and optical capabilities of the drone’s onboard camera. For example, popular consumer commercial drones such as the DJI Mini series, Potensic Atom, Autel, and Skydio 2 drones are equipped with 4K videos capabilities~\cite{dji_mini4pro_specs_2025, potensic_atom2_specs_2025, autel_evo_lite_640t_enterprise_2025, skydio_2_plus_enterprise_2025} and the maximum tracking distance they can support is typically up to 20 meters for person tracking~\cite{dji_active_tracking_2023, potensic_atom_tracking_2023, autel_dynamic_track_2024, skydio_update_2020}. Meanwhile, the Skydio drones designed specifically for police and equipped with a 65x zoom camera can read license plates from $\sim$250 meters and track a vehicle from as far as 4.8 kilometers away~\cite{new_auto}.

In our experiment setup, all the data used to generate the FlyTrap attack patterns are collected using the DJI Mini 4 Pro drone with ATT enabled, which are thus all bounded by its maximum tracking distance of 20 meters~\cite{dji_active_tracking_2023}. As shown, FlyTrap can maintain the attack effectiveness for distances up to 20 meters, which shows that it can fully cover the operational range of today's commercial drones with 4K videos capabilities~\cite{dji_active_tracking_2023, potensic_atom_tracking_2023, autel_dynamic_track_2024, skydio_update_2020}. Furthermore, for several adversarial umbrella instances, the attack can remain effective even at 30–40 meters, which is beyond the distances that these attack instances are optimized for (up to 20 meters with the current dataset and PDP); this shows that the attack effect of FlyTrap can potentially even generalize beyond the distance limit of the dataset used for attack optimization.

To conclude, in our current experimental setup, we do not find that FlyTrap has attack distance limitations with respect to the maximum tracking distance of the commercial drone setups it is optimized for. We leave such potential further confirmation of the attack capability in longer distance ranges to future work.

\begin{figure}
    \centering
    \includegraphics[width=\linewidth]{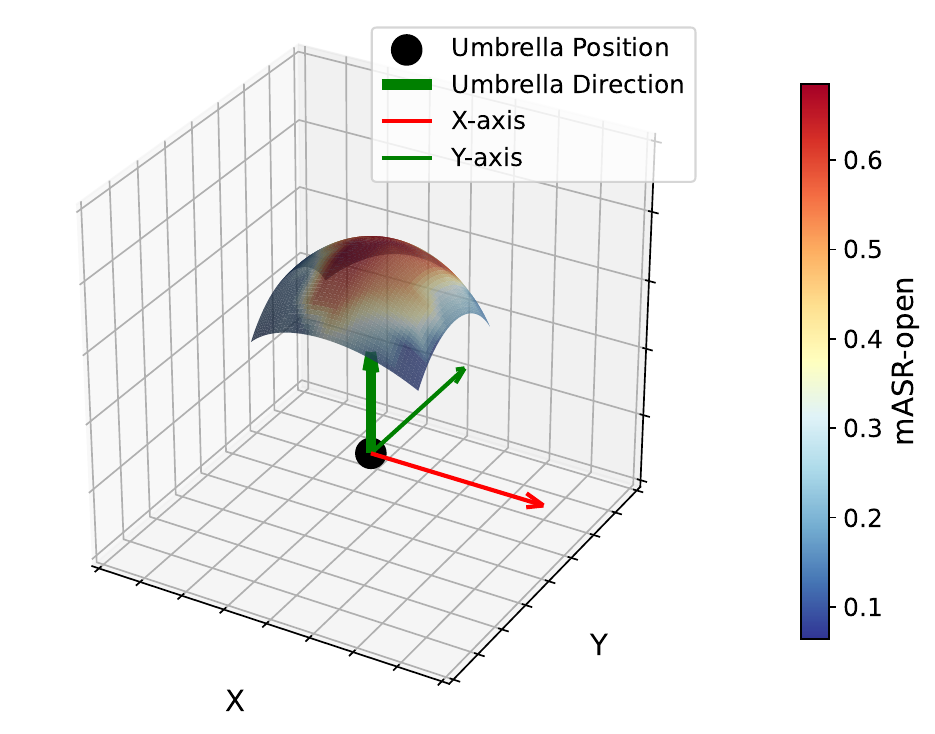}
    \caption{Spatial visualization of mASR$_\text{open}$ with different observation angle. The umbrella is pointed upwards, and the camera is always pointing directly at the umbrella from different spatial positions. This simulates the scenario where the attacker directs the umbrella to the ATT drone imperfectly.}
    \label{fig:angle}
\end{figure}

\begin{figure}
    \centering
    \includegraphics[width=0.8\linewidth]{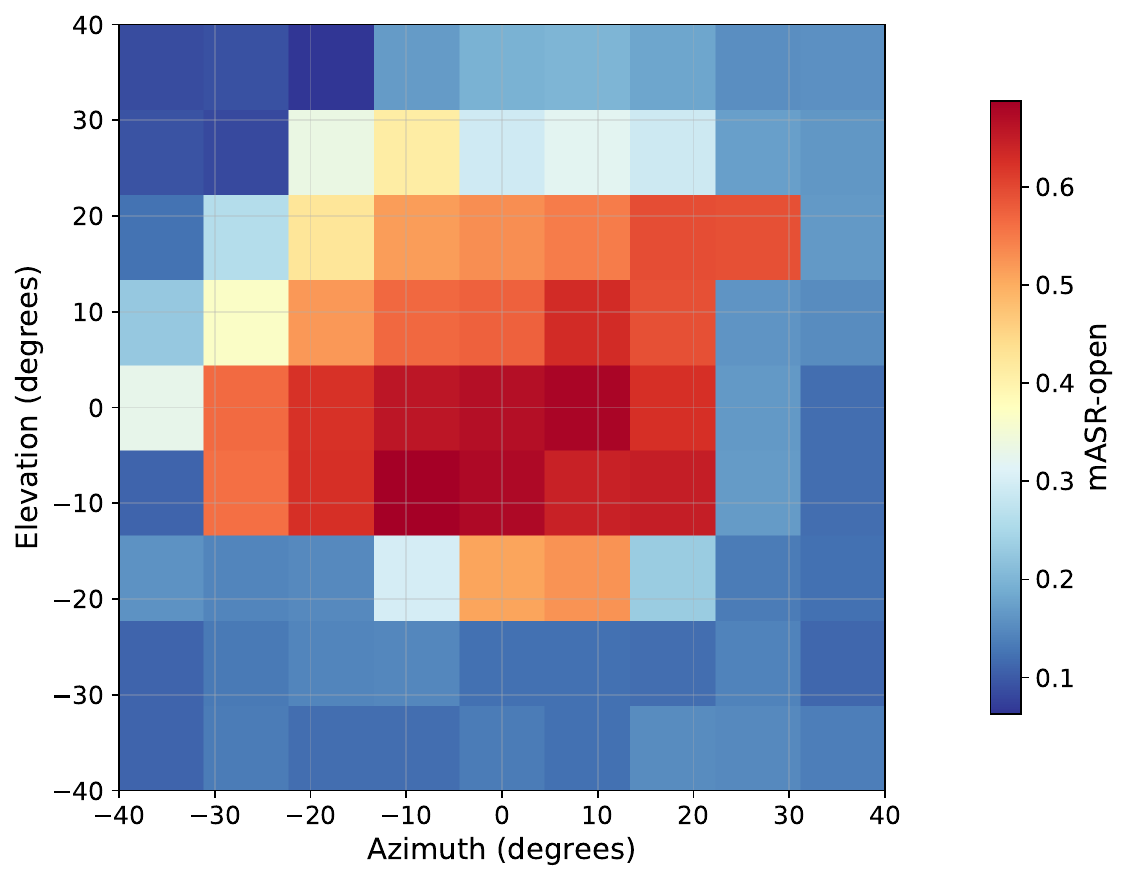}
    \caption{Heatmap of mASR$_\text{open}$ with different angle. The FlyTrap can largely tolerate an umbrella pointing error within 10 degrees, indicated by a similar mASR$_\text{open}$ from different angles within 10 degrees.}
    \label{fig:angle-heatmap}
\end{figure}

\begin{figure}
    \centering
    \includegraphics[width=0.5\linewidth]{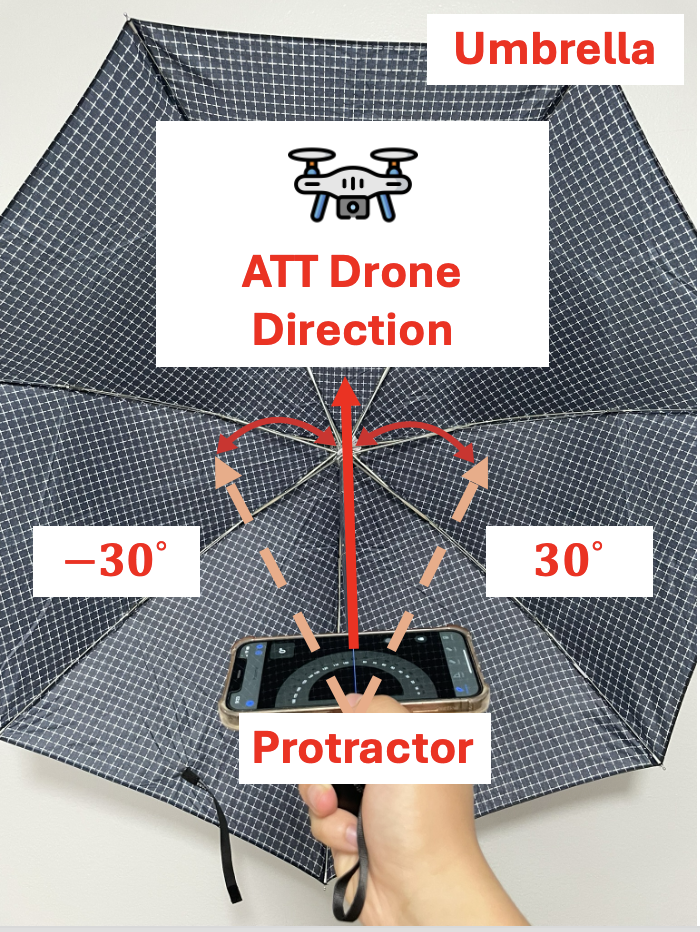}
    \caption{A real-world example of umbrella directing. The attack effectiveness only significantly degrades when getting to an angle of 30$^\circ$. We believe this angle is easily controllable for the attacker to aim the ATT drones.}
    \label{fig:umbrella-directing}
\end{figure}

In terms of the attack angle, since it's hard to accurately control the angle in physical experiments, we simulate it in the rendering process in the digital simulation. We change the azimuth angle and elevation angle of the simulated camera in rendering, and simulate the effect that the attacker doesn't point the umbrella at the ATT drone perfectly. The results are shown in Fig.~\ref{fig:angle} and \ref{fig:angle-heatmap}. We find that the attack success rate gradually degrades with a larger observation angle, suggesting the robustness design (Section IV-D-3) enables FlyTrap to tolerate umbrella directing error within a certain range (e.g., 10 degrees). Specifically, we randomly sample azimuth and elevation angles within [-5$^\circ$, 5$^\circ$] during the optimization. As shown in the heatmap, the umbrella works well within that range and even beyond. The attack effectiveness only significantly degrades when getting to an angle of 30$^\circ$, which we consider to be already well within the normal controllable range when a normal person intentionally tries to aim the umbrella at the drone (visualized in Fig.~\ref{fig:umbrella-directing} using a real umbrella). Our real-world experiments in Section V-F-2 and Section V-G-3 have also confirmed that such a level of attack robustness to directing angle errors is sufficient for real-world attack usage: In all these experiments, we closely simulated real-world attack usage in which the attacker starts by a best-effort aiming of the umbrella towards the drone (as shown by all the video demos in our website~\cite{flytrap_adv_2025}), and we never experienced any needs to have more than 1 round of such best-effort aiming in order to achieve the attack effect currently reported in these two sections.

\subsection{Commercial Experiment Comparison}
\label{app:commercial-justify}

We further conduct experiments to show that one of the umbrella successfully attack three commercial drones because of the pattern, not the occlusion or color brightness. First, we use a benign black umbrella, following the same experimental setup in Section~\ref{sec:commercial-setup}. We found that none of the commercial drones reacted to the black umbrella during 10 repeated tests. The black umbrella attack video demonstration can be found on our website. Second, we visualize the pixel distribution of our generated umbrella patterns in Fig.~\ref{fig:pixel-distribution}. The physical umbrella can also be seen from Fig.~\ref{fig:commercial-setups}. We observe that the umbrella pattern has a close pixel mean value, suggesting similar pixel intensity. Specifically, patterns against SiamRPN-MobileNet show a similar distribution to patterns against MixFormer. However, the former one fails to attack any commercial drone while the latter succeeds. All the evidence suggests that one of our umbrellas works because of the pattern, not the occlusion or the color distinction.

\subsection{User Study}
\label{sec:user-study}

In this section, we conduct a user study to more directly evaluate the stealthiness of the FlyTrap attack. We have gone through the IRB process, and our study is determined as in the IRB Exempt category since it does not involve the collection of any Personally Identifiable Information (PII) or target any sensitive population.

\textbf{Evaluation methodology.} To conduct the user study, we used the umbrella from our previous physical experiments alongside eight commercially available umbrellas. At the beginning of the study, we present participants with a video recorded on a non-sunny day, depicting a person holding an umbrella in a crowd. This ensures that participants fully understand the contextual setting of the subsequent questions. Participants then respond to a series of carefully designed questions to scientifically evaluate whether our Flytrap attack is likely to arouse suspicion in real-world scenarios.

\textbf{Evaluation setup.} We use the Prolific platform to perform this study, and in total collected 200 participants. The demographic distribution of the participants is illustrated in Fig.~\ref{fig:demographic information}. The participants demonstrate a clear understanding of our test scenario videos and questions, ensuring that our tests effectively simulate the plausibility of using the attack umbrella in real-world settings.

We design a series of questions to study how people perceive umbrella use on non-rainy days and compare our umbrella with eight commercially available options. The study focuses on four objectives: (1) to assess whether holding an umbrella on a non-rainy day attracts attention and why; (2) to understand if holding an umbrella on a non-rainy day is seen as unusual; (3) to test whether our umbrella is more noticeable; and (4) to examine if our umbrella design is considered suspicious. The questions are structured as follows.

\textit{\textbf{Q1}}: \textit{``In the video above, do you notice a person holding an umbrella? If so, how do you feel about this behavior?''}

Follow-up \textit{q}: \textit{``If your answer is yes, please briefly explain why. Otherwise, write N/A.''  }

\textit{\textbf{Q2}}: \textit{``Do you think seeing a person using an umbrella on a non-rainy day is abnormal?''}  

\textit{\textbf{Q3}}: \textit{``Which umbrella do you think is the most eye-catching, if any?''}  

\textit{\textbf{Q4}}: \textit{``Which umbrella do you think is the most suspicious, if any?''}

Follow-up \textit{q1}. \textit{``Do you think the umbrella(s) you chose as suspicious could be used in illegal behavior?''} 

Follow-up \textit{q2}: \textit{``Why do you think it’s suspicious? If not, type N/A.''}

\begin{figure}[t]
    \centering
    \begin{subfigure}[b]{0.38\linewidth}
        \centering
        \includegraphics[width=\linewidth]{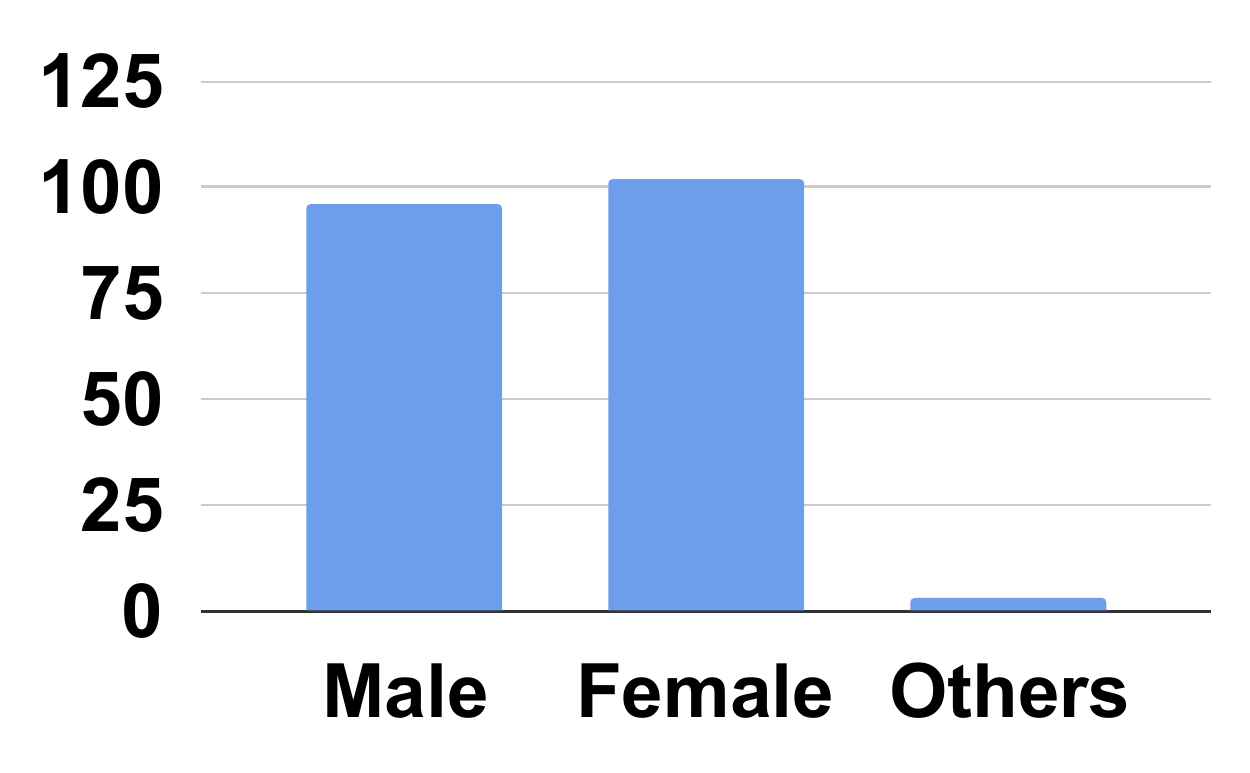}
        \caption{Gender.}
        \label{fig:gender}
    \end{subfigure}
    \hfill
    \begin{subfigure}[b]{0.5\linewidth}
        \centering
        \includegraphics[width=\linewidth]{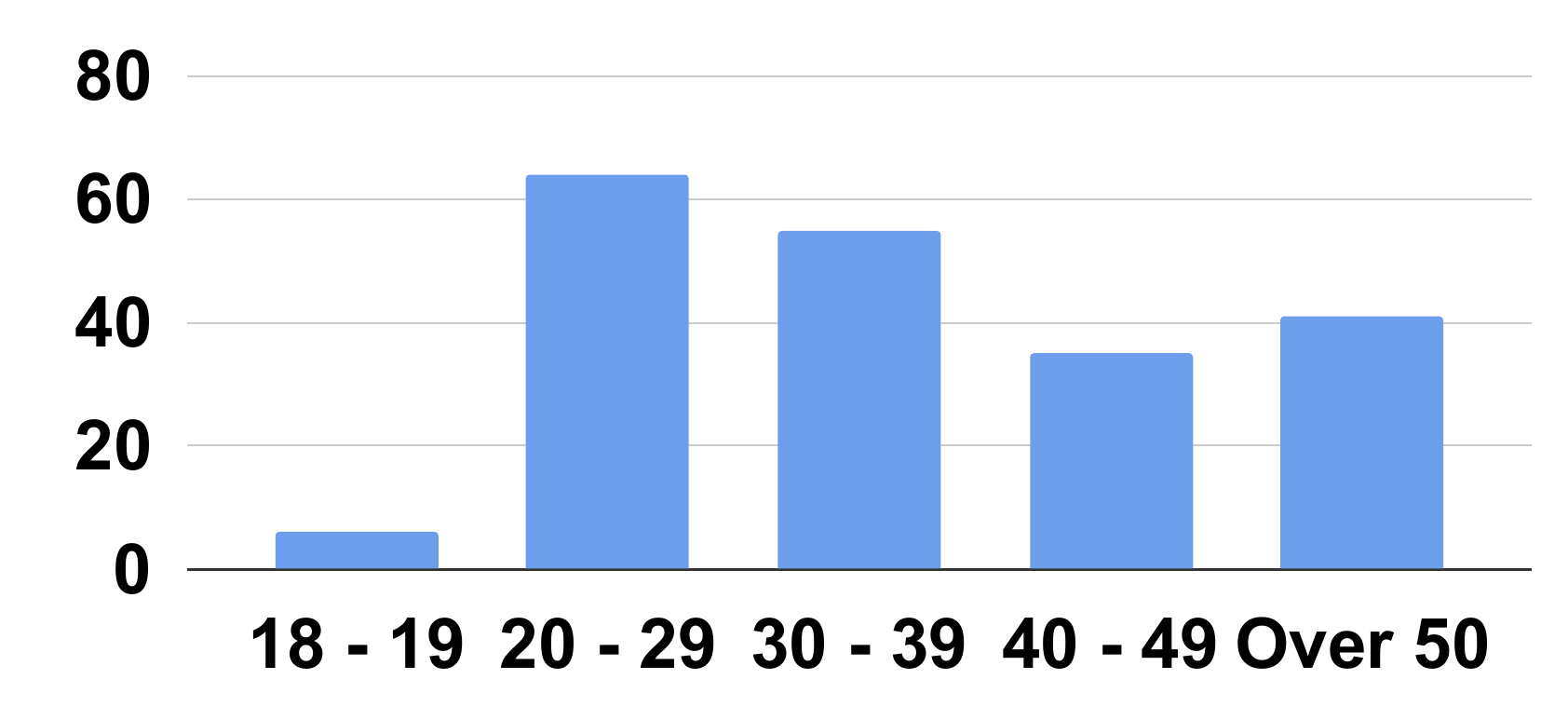}
        \caption{Age.}
        \label{fig:age}
    \end{subfigure}
    \caption{Statistics of the demographic in the user study.}
    
    \label{fig:demographic information}
\end{figure}

\textbf{Results.} Fig.~\ref{fig:userstudy1} and Fig.~\ref{fig:userstudy2} show the results of \textit{Q1} and \textit{Q2}. For \textit{Q1}, as shown in Fig.~\ref{fig:userstudy_1}, the majority of participants (116 out of 200, 58\%) do not notice anyone holding an umbrella in the crowd. Among those who do notice (65 participants, 32.5\%), most pay little attention to the umbrella holder, while only a small portion (19 participants, 9.5\%) pay special attention. As shown in Fig.~\ref{fig:interesting_reason}, further analysis reveals that the primary reason participants notice the umbrella is that the person is the only one holding an umbrella, while another common reason is the unusual use of an umbrella on a sunny day. For \textit{Q2}, most participants (158 out of 200, 79\%) believe that using an umbrella on a non-rainy day is acceptable. This suggests that umbrella use in such conditions is generally perceived as natural and does not raise suspicion.  

\begin{figure}[t]
    \centering
    \begin{subfigure}[b]{0.8\linewidth}
        \centering
        \includegraphics[width=\linewidth]{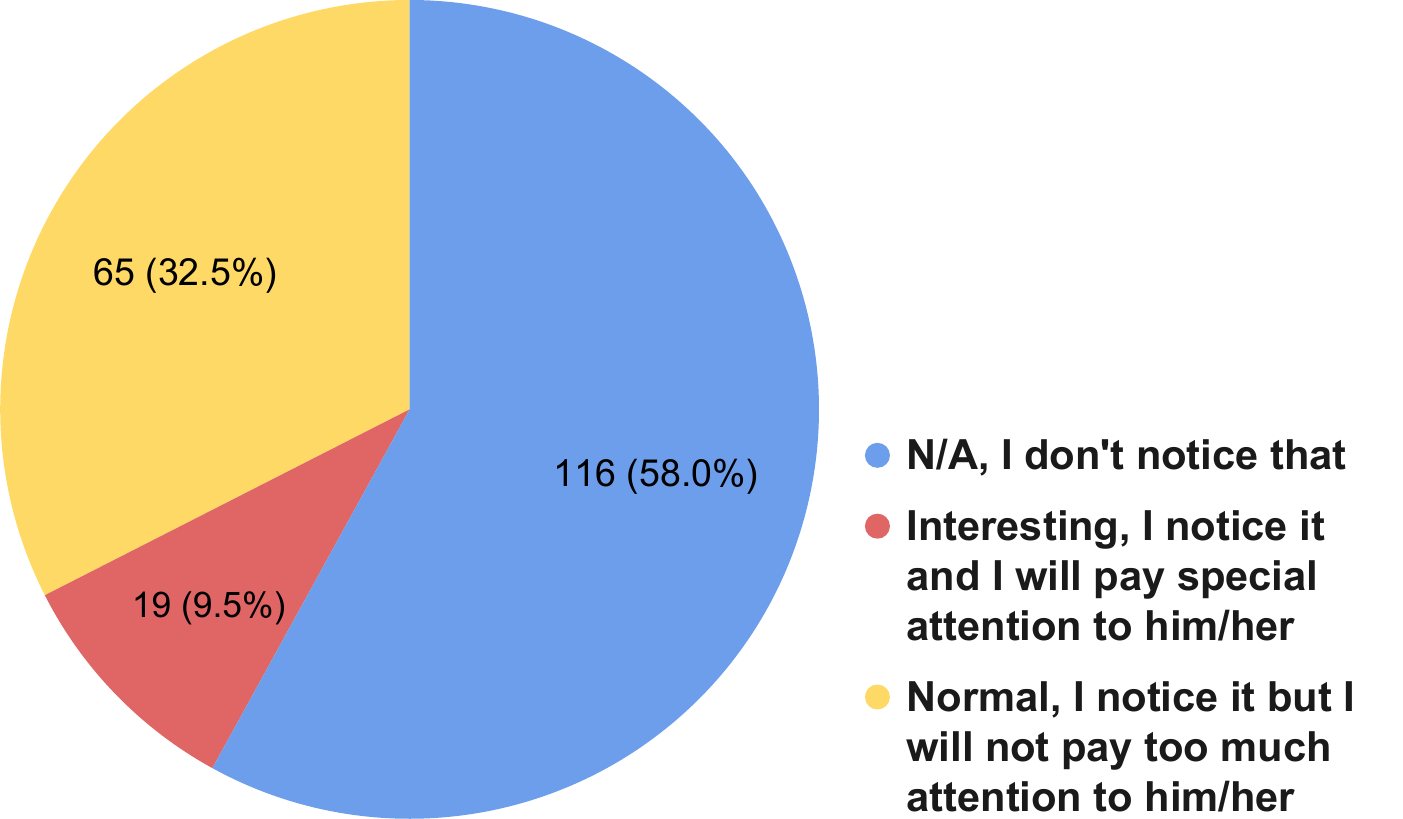}
        \caption{Participant responses to \textit{Q1}.}
        \label{fig:userstudy_1}
    \end{subfigure}
    \hfill
    \vspace{0.2cm}
    \begin{subfigure}[b]{0.8\linewidth}
        \centering
        \includegraphics[width=\linewidth]{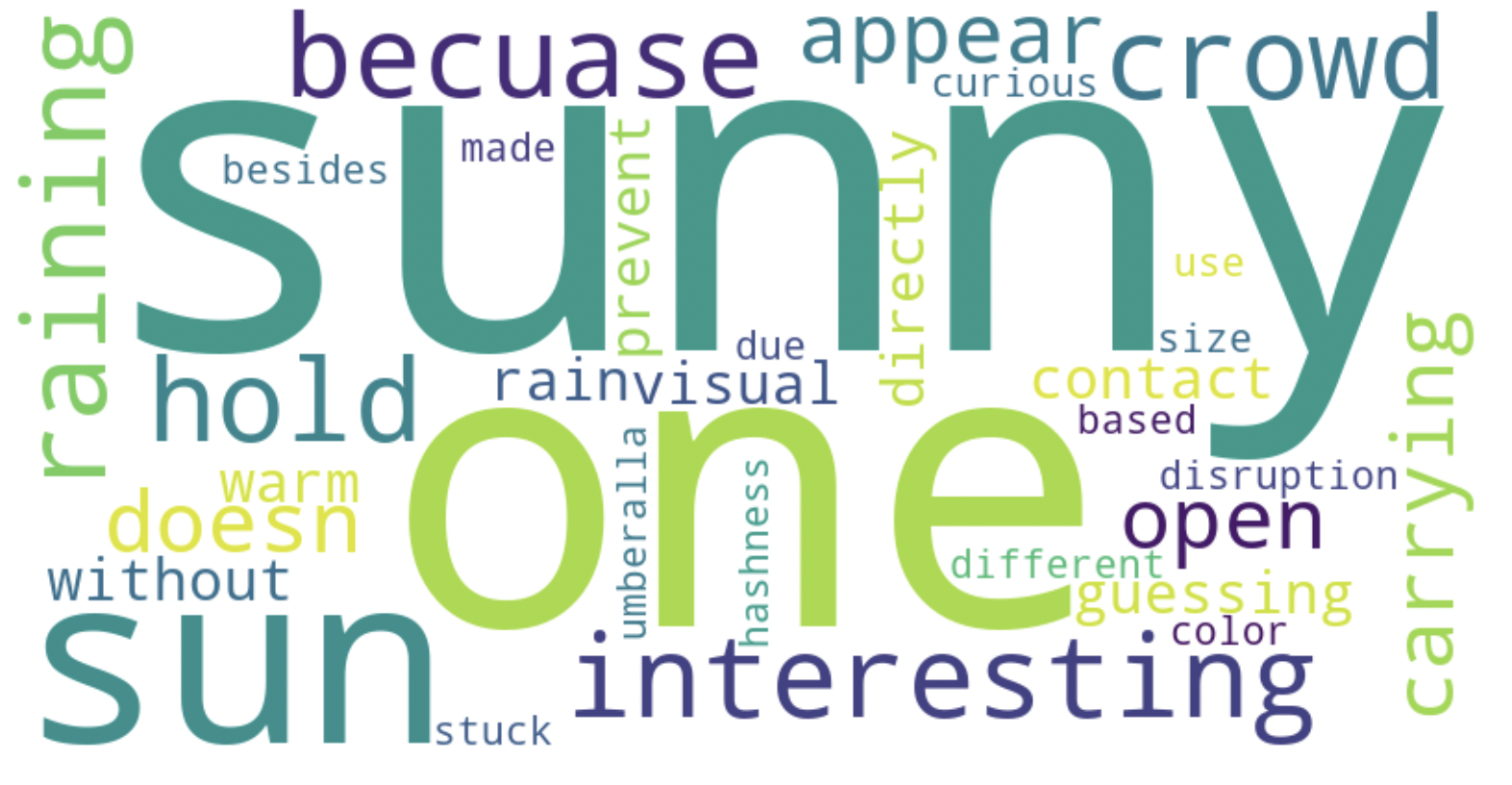}
        \caption{Reasons for selecting ``interesting'' above.}
        \label{fig:interesting_reason}
    \end{subfigure}
    \caption{Responses to \textit{Q1}: videos containing a person using an umbrella on a non-rainy day.}
    
    \label{fig:userstudy1}
\end{figure}

\begin{figure}[t]
    \centering
    \includegraphics[width=0.8\linewidth]{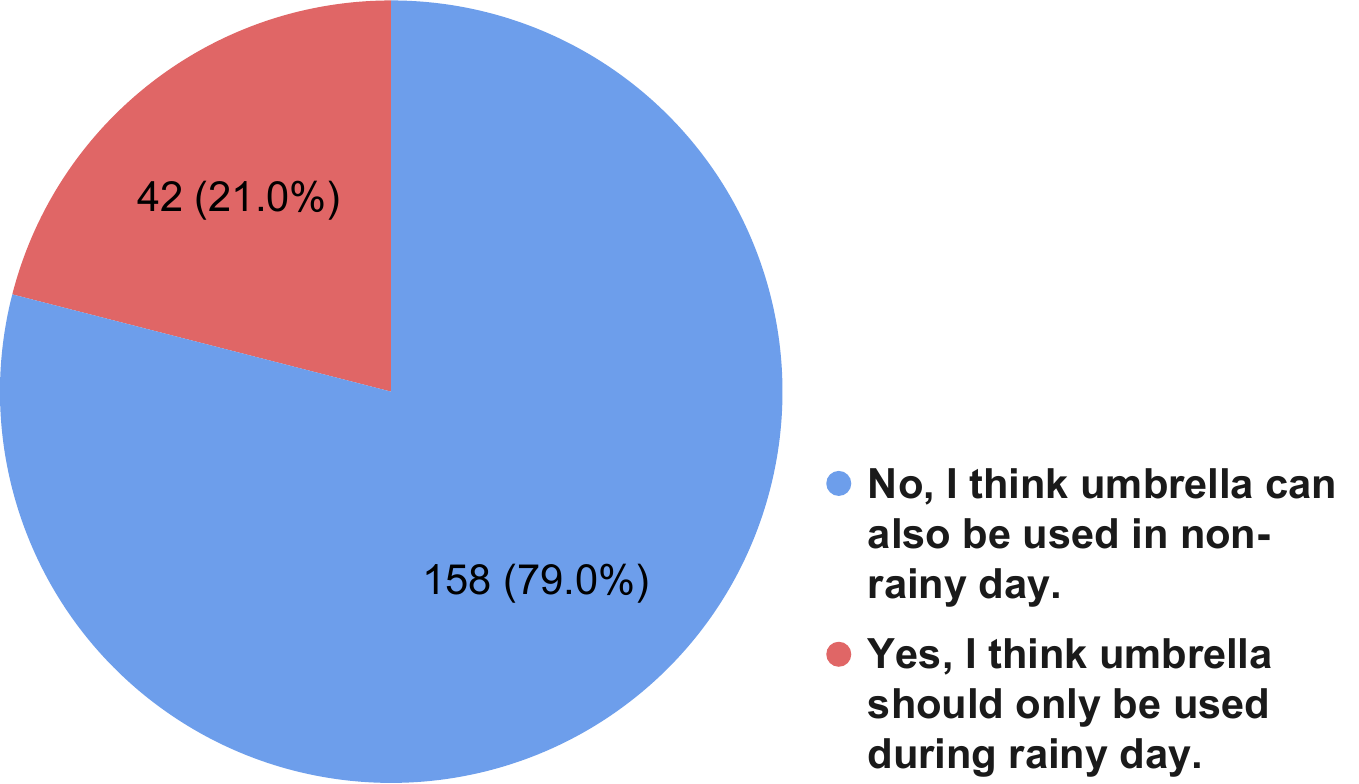}
    \caption{Response to \textit{Q2}: using umbrella on non-rainy days.}
    
    \label{fig:userstudy2}
\end{figure}

\begin{figure}[h]
    \centering
    \includegraphics[width=0.9\linewidth]{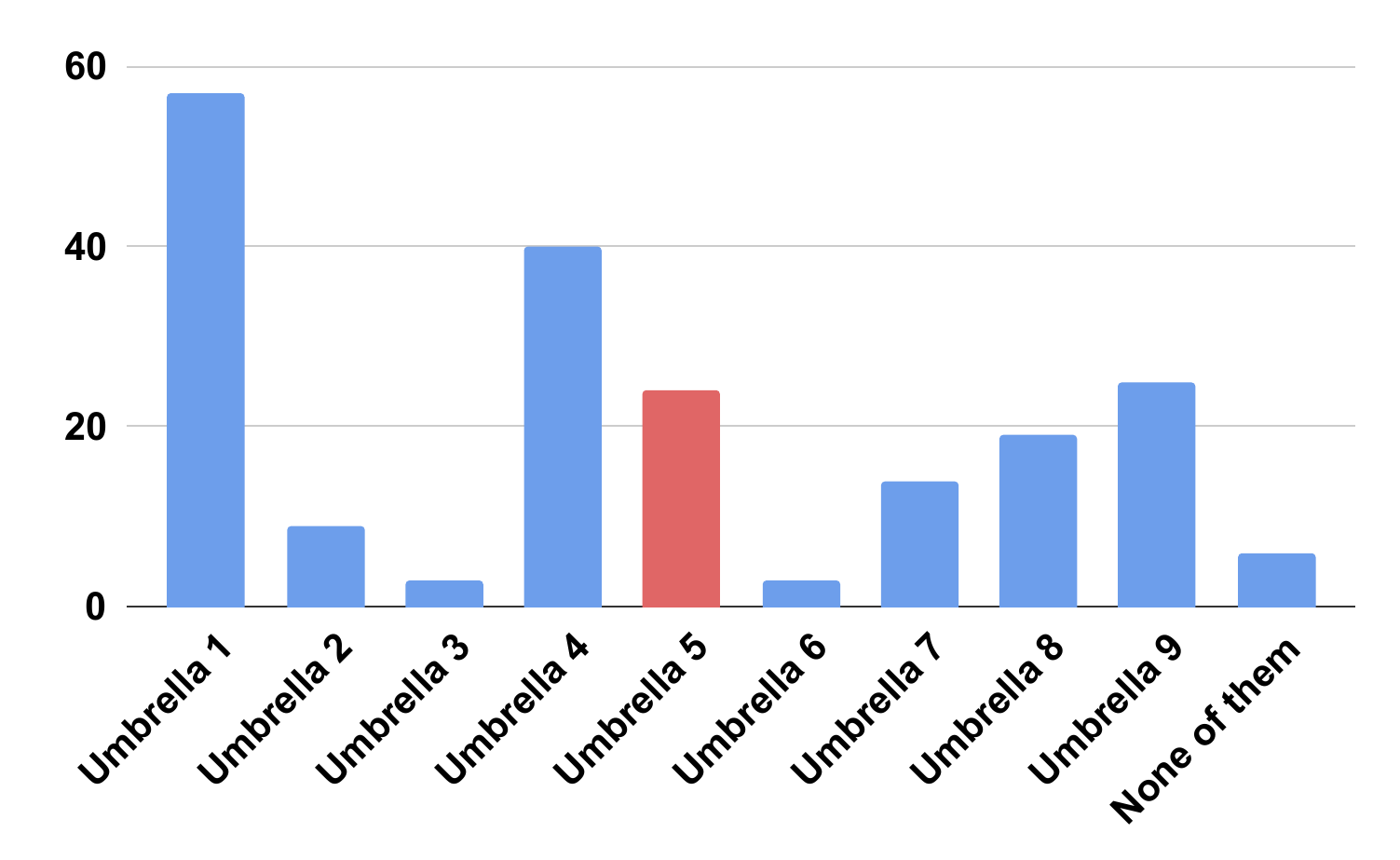}
    \caption{Response to \textit{Q3}: the most eye-catching umbrella.}
    
    \label{fig:userstudy3}
\end{figure}

\begin{figure}[t]
    \centering
    \begin{subfigure}[b]{0.9\linewidth}
        \centering
        \includegraphics[width=\linewidth]{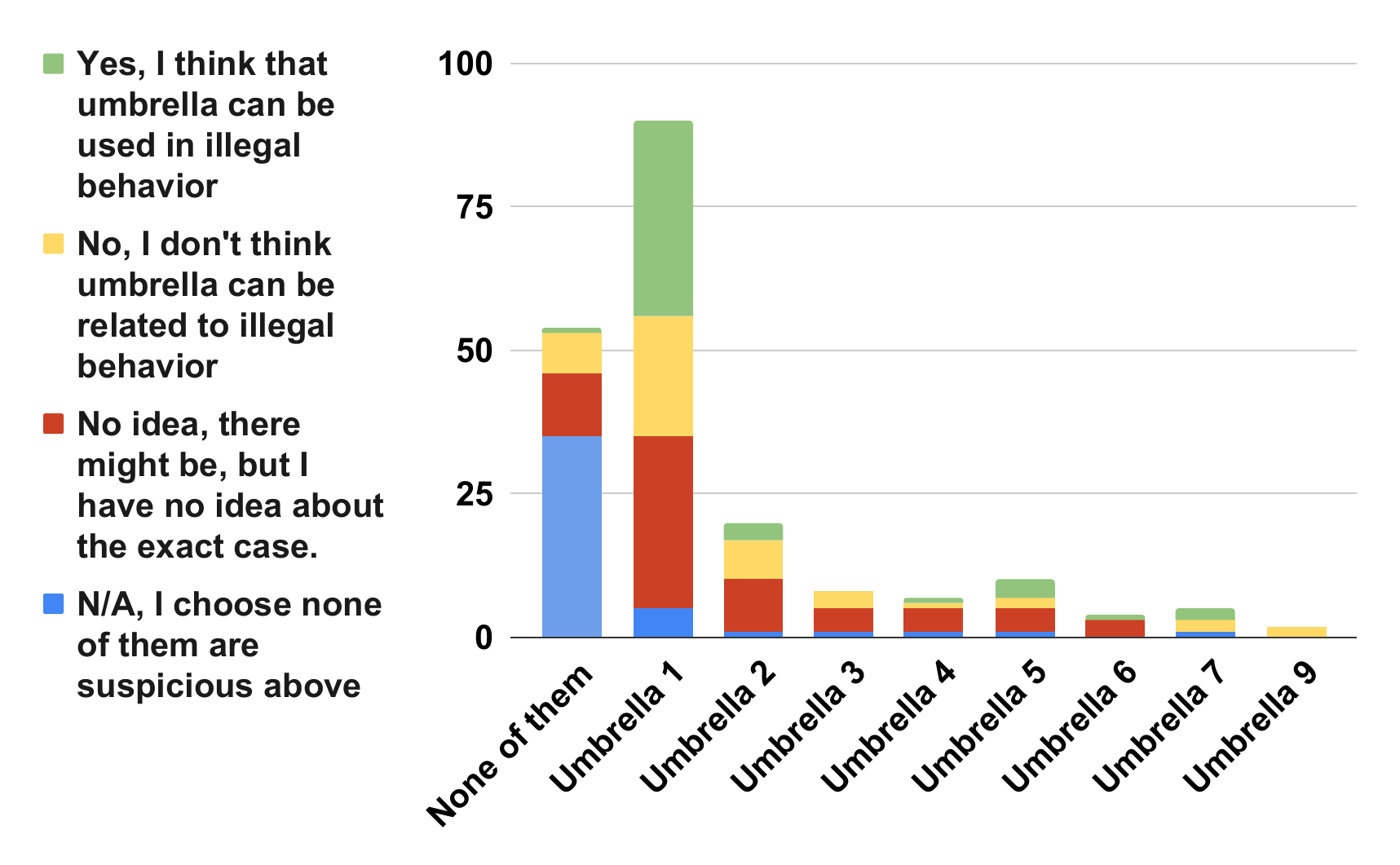}
        \caption{Response to potential use in illegal behavior.}
        \label{fig:userstudy4}
    \end{subfigure}
    \hfill
    \vspace{0.5cm}
    \begin{subfigure}[b]{0.9\linewidth}
        \centering
        \includegraphics[width=\linewidth]{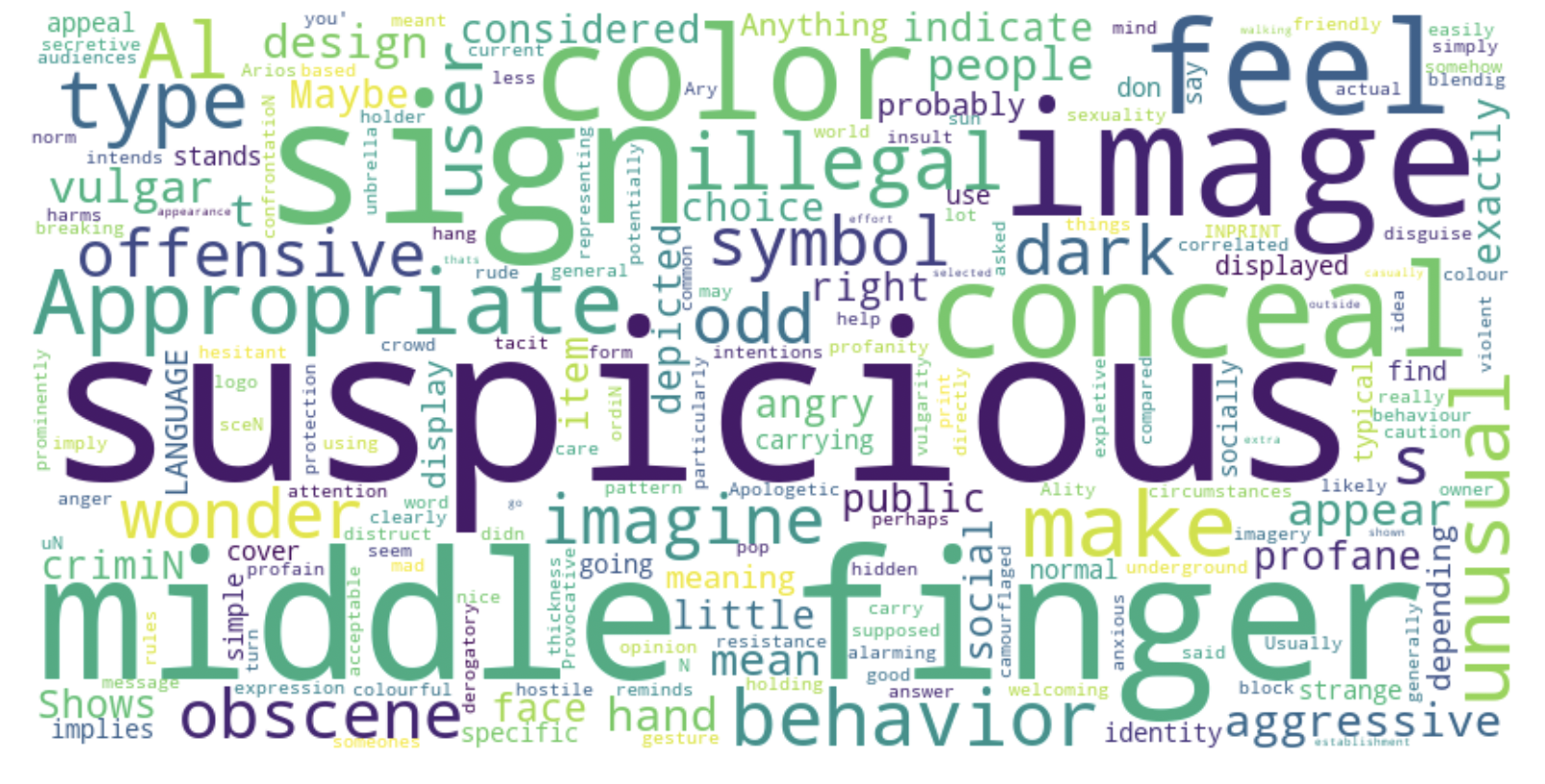}
        \caption{Reasons behind their choice above.}
        \label{fig:userstudy5}
    \end{subfigure}
    \hfill
    \vspace{0.5cm}
    \begin{subfigure}[c]{0.9\linewidth}
        \centering
        \includegraphics[width=\linewidth]{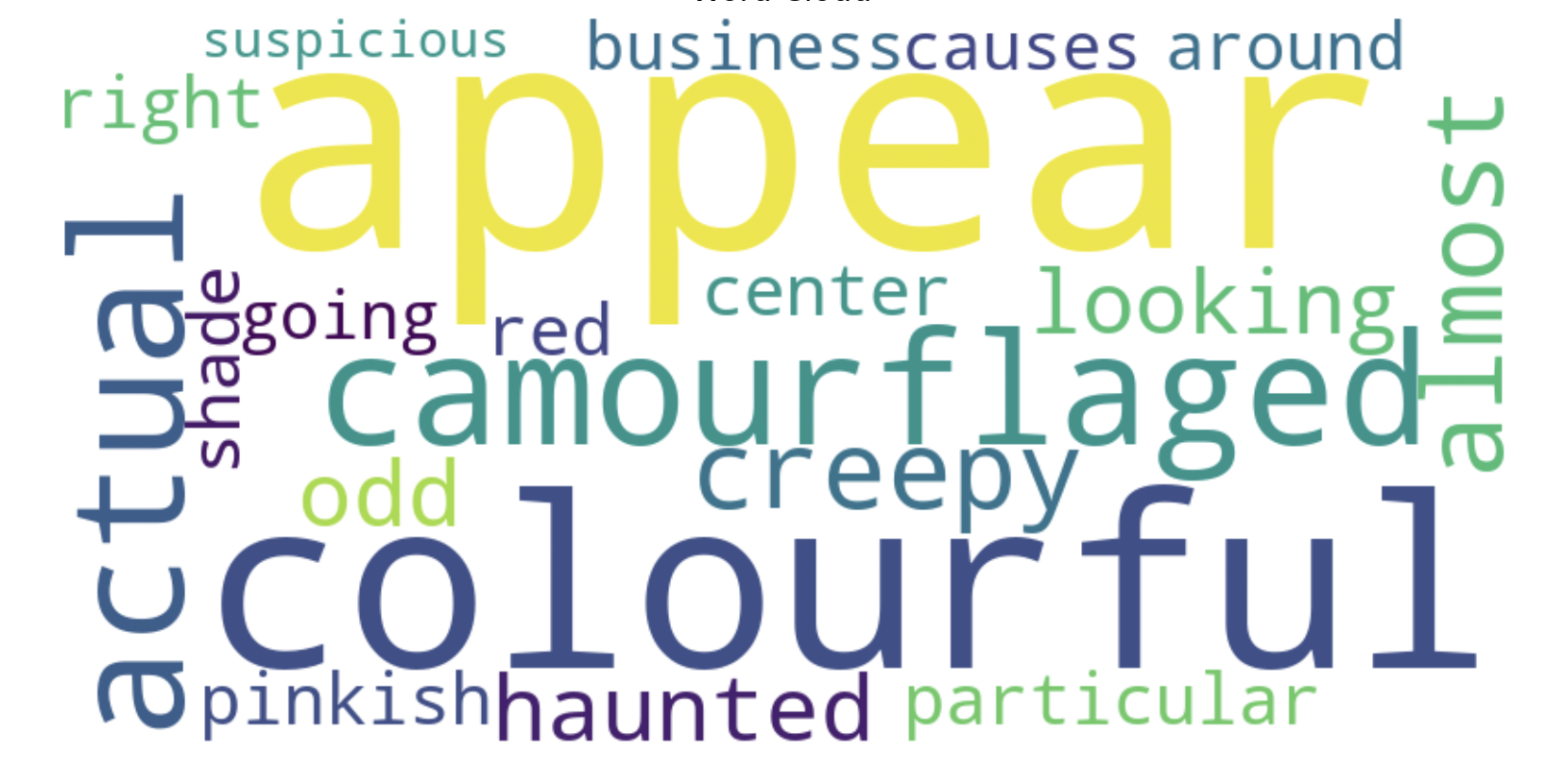}
        \caption{Reasons behind choosing our umbrella.}
        \label{fig:userstudy6}
    \end{subfigure}
    \vspace{0.5cm}
    \caption{Response to \textit{Q4}: the most suspicious umbrella.}
    
    \label{fig:userstudy_Q4}
\end{figure}

Fig.~\ref{fig:userstudy3} shows the results of \textit{Q3}. Our umbrella (Umbrella 5) ties for third place with Umbrella 9 in terms of being considered ``eye-catching'' among the nine umbrellas. Although our umbrella ranks relatively high, the top-ranked Umbrellas 1 and 4, as well as Umbrella 9, are all commercially available designs. This suggests that our umbrella does not stand out noticeably because of its adversarial pattern.  

Fig.~\ref{fig:userstudy4} presents the results of \textit{Q4} and its follow-up question \textit{q1}. Among the 200 participants, only 10 consider our umbrella the most suspicious, while the remaining 95\% do not find it suspicious. Additionally, only 3 participants believe our umbrella could be used for illegal purposes. As shown in Fig.~\ref{fig:userstudy5} and Fig.~\ref{fig:userstudy6}, the word cloud provides a summary of the reasons behind these choices. Most participants who find our umbrella suspicious simply think its pattern is unusual, rather than associating it with malicious intent. Thus, although our umbrella may be eye-catching, it is less suspicious compared to images with semantic information.

In summary, the user study confirms that using our umbrella in the Flytrap is a natural approach. It neither attracts excessive attention nor raises public suspicion, supporting its practical applicability.

\clearpage
\newpage
\section*{Artifact Appendix}

\subsection{Description}

\textsc{FlyTrap} is a physical distance-pulling attack that dangerously reduces the effective range of autonomous target tracking (ATT) systems, such as those used in visual tracking drones. By executing a Distance-Pulling Attack (DPA), an adversary can conduct range-limited sensor attacks, capturing, or even direct crashes of autonomous drones.

This artifact contains the full codebase, pre-trained models, optimized adversarial patches, and the dataset used to implement and evaluate the proposed \textsc{FlyTrap} attack. We provide detailed instructions for environment setup, execution of evaluation pipelines, and reproduction of all experimental results presented in the paper.

\subsection{Requirements}

\noindent
\textbf{Access}: The artifact is publicly available on GitHub at \url{https://github.com/Daniel-xsy/FlyTrap}. The repository includes a comprehensive \texttt{README.md} file that provides step-by-step setup instructions, experiment configurations, and command-line examples for reproducing our results. We also upload the following materials onto the Zenodo platform:

\begin{itemize}
    \item Codebase: \url{https://doi.org/10.5281/zenodo.17051835}
    \item Dataset: \url{https://doi.org/10.5281/zenodo.16908024}
    \item Models: \url{https://doi.org/10.5281/zenodo.17051654}
\end{itemize}

\noindent
\textbf{Hardware Requirements}: A GPU is required to run the experiments. We recommend a minimum of 24~GB of GPU memory. All evaluations were conducted on a machine with two NVIDIA RTX 3090 GPUs, though the artifact can be run on a single GPU with increased execution time. The system CPU used was an AMD EPYC 7513 32-Core Processor.

\noindent
\textbf{Software Requirements}: The artifact mainly uses \texttt{PyTorch} deep learning framework. All experiments were executed on Ubuntu 20.04, with \texttt{PyTorch=1.11} and \texttt{CUDA=11.3}.

\noindent
\textbf{Storage Requirements}: The artifact requires approximately 20~GB of disk space. This includes around 16~GB for the dataset and 2~GB for the pre-trained victim Single-Object Tracking (SOT), object detection, and pose estimation model checkpoints. The remainder is allocated for other materials.

\subsection{Codebase Design}

The codebase is designed to be modular, extensible, and scalable. It follows a registry-based architecture, enabling flexible training and evaluation workflows. All experiments can be launched using a single configuration file. The major components of the codebase are organized as follows:

\noindent
\texttt{./config}: Contains configuration files used to optimize and evaluate the \textsc{FlyTrap} attack. Each file specifies victim model hyperparameters, the data loading pipeline, loss objectives, and the physical simulation engine, which together compose a complete pipeline for adversarial patch generation.

\noindent
\texttt{./flytrap}: Implements core functionality. We leverage the \texttt{mmcv}~\cite{mmcv} registry system to modularize components, allowing easy integration via configuration files. This includes:

\begin{itemize}
    \item \texttt{attacks}: Defines the attack pipeline, including modules for digital rendering and patch application, as well as loss functions used for optimization. 
    \item \texttt{dataset}: Implements the data loading pipeline required for adversarial patch optimization.
    \item \texttt{engine}: Simulates the closed-loop drone tracking behavior, supporting our Progressive Distance-Pulling design (PDP) and attack target control.
    \item \texttt{metrics}: Defines mASR$_{\rm open}$ metrics in the evaluate.
    \item \texttt{models}: Provides unified API wrappers to build victim models from their original implementations.
\end{itemize}

\noindent
\texttt{./models}: Contains the original implementations of third-party tracking models used as victims in our evaluation.

\noindent
\texttt{./tools}: Includes entry-point scripts for executing optimization and evaluation procedures.

\subsection{Major Claims}

The provided artifact supports the validation of the key experimental claims presented in the paper. All necessary code, configurations, and resources are included to facilitate reliable replication of these core findings. 

\noindent
[\textbf{C1: Table II}]: \textsc{FlyTrap} with Progressive Distance-Pulling (PDP) achieves higher effectiveness than \textsc{FlyTrap} w/o PDP.

\noindent
[\textbf{C2: Table II and III}]: \textsc{FlyTrap} can achieve better effectiveness and universality than target image baseline attack (TGT).

\noindent
[\textbf{C3: Table IV and V}]: \textsc{FlyTrap} with attack target generation (ATG) design can decrease the true alarm rate (TAR) of spatial-temporal consistency defenses.

\noindent
[\textbf{C4: Table VI}]: \textsc{FlyTrap} with ATG design does not largely reduce the attack effectiveness.

For the physical experiments, we provide recorded demonstration videos along with corresponding evaluation scripts. Due to hardware dependencies, such as the need for our implemented drone platform, the commercial drone platform, and a physical adversarial umbrella, we do not include closed-loop physical experiments in this artifact.

\subsection{Evaluation}

\noindent
\textit{1) Experiment (E1)}: [\textbf{C1}] [5 human-minutes + 4 computer hours]: This experiment tests the Progressive Distance-Pulling (PDP) design of the FlyTrap attack.

\begin{itemize}
    \item \textit{Full Evaluation}: [5 human-minutes + 4 computer hours]. Please run the command:
    
\texttt{bash scripts/eval\_flytrap.sh}

\texttt{bash scripts/metric\_summary.sh}

    \item \textit{Partial Evaluation}: [5 human-minutes + 1 computer hours]. Please run the command:

\texttt{python tools/main.py <config>}

\texttt{cd analysis}

\texttt{python analyze\_result\_metric.py --file <result\_path>}

    \item \textit{Pre-computed Evaluation}: [5 human-minutes + 5 computer minutes]. Please run the command:

\texttt{cd download}

\texttt{bash download\_flytrap\_results.sh}

\texttt{cd ../}

\texttt{bash scripts/metric\_summary.sh}

\end{itemize}

\noindent
\textit{2) Experiment (E2)}: [\textbf{C2}] [5 human-minutes + 40 computer hours]: This experiment compares the FlyTrap attack with baseline target photo attack.

\begin{itemize}
    \item \textit{Full Evaluation}: [5 human-minutes + 40 computer hours]. Please run the command:
    
\texttt{bash scripts/eval\_tgt.sh}

    \item \textit{Partial Evaluation}: [5 human-minutes + 1 computer hours]. Please run the command:

\texttt{bash scripts/eval\_tgt\_partial.sh <config>}

    \item \textit{Pre-computed Evaluation}: [5 human-minutes + 5 computer minutes]. Please run the command:

\texttt{bash download/download\_tgt\_results.sh}
\end{itemize}

\noindent
Then, run the command for evaluation:

\noindent
\texttt{python analysis/analyze\_tgt\_metric.py --input\_dir <json\_dir>}

\noindent
\textit{3) Experiment (E3)}: [\textbf{C3}] [5 human-minutes + 4 computer hours]: This experiment compares the true alarm rate of PercepGuard defense before and after applying attack target generation (ATG).

\begin{itemize}
    \item \textit{Full Evaluation}: [5 human-minutes + 4 computer hours]. Please run the command, the \texttt{config} and \texttt{adv\_patch} please refer to the GitHub repository:

\texttt{bash scripts/eval\_percepguard.sh <config> <adv\_patch>}

    \item \textit{Partial Evaluation}: [5 human-minutes + 1 computer hour]. You can only compare the results of one model:

\texttt{bash scripts/eval\_percepguard.sh <config> <adv\_patch>}

\end{itemize}

\noindent
Please average across all the model results to reproduce the results in the paper.

\noindent
\textit{4) Experiment (E4)}: [\textbf{C3}] [5 human-minutes + 10 computer hours]: This experiment compares the true alarm rate of VOGUES defense before and after applying attack target generation (ATG).

\begin{itemize}
    \item \textit{Full Evaluation}: [5 human-minutes + 10 computer hours]. Please run the command, the \texttt{config} and \texttt{adv\_patch} please refer to the GitHub repository:

\texttt{bash scripts/eval\_vogues.sh <config> <adv\_patch>}

    \item \textit{Partial Evaluation}: [5 human-minutes + 2.5 computer hour]. You can only compare the results of one model:

\texttt{bash scripts/eval\_vogues.sh <config> <adv\_patch>}

\end{itemize}

\noindent

The JSON file results will be saved in this directory: \texttt{work\_dirs/vogues\_results}. For the results:

\begin{itemize}
    \item With Attack: \texttt{before} means the false alarm rate before the attack, and \texttt{after} means the true alarm rate after the attack.
    \item Without attack: \texttt{before} means the false alarm rate without the umbrella (should be the same as above), and \texttt{after} means the false alarm rate with the umbrella.
\end{itemize}

\noindent
\textit{5) Experiment (E5)}: [\textbf{C4}] [10 human-minutes + 10 computer hours]: This experiment compares the FlyTrap attack with and without ATG design. Please refer to FlyTrap without ATG design in \textit{E1}: FlyTrap$_\text{PDP}$. Please evaluate FlyTrap$_\text{ATG}$ results by running:

\texttt{bash scripts/eval\_flytrap\_atg.sh}

\texttt{bash scripts/metric\_summary\_atg.sh}

\end{document}